\numberwithin{equation}{section}
\newtheorem{theorem}{Theorem}[section]
\newtheorem{lemma}[theorem]{Lemma}
\newtheorem{definition}[theorem]{Definition}
\newtheorem{proposition}[theorem]{Proposition}
\newtheorem{corollary}[theorem]{Corollary}
\newtheorem{remark}[theorem]{Remark}
\newtheorem{assumption}[theorem]{Assumption}
\newcommand\primitiveinput[1]
\newcommand{\defeq}{\mathrel{\mathop:}=}
\newcommand{\tod}{\overset{d}{\to}}
\newcommand{\toP}{\overset{P}{\to}}
\newcommand{\Var}{\operatorname{Var}}
\DeclareMathOperator*{\argmin}{arg\,min}
\DeclareMathOperator*{\mode}{Mode}
\newcommand{\cP}{\mathcal{P}}
\newcommand{\bh}{\mathbf{h}_t}
\newcommand{\WMean}{w_{\mathrm{Mean}}}
\newcommand{\WMeanHat}{\widehat{w}_{T,\mathrm{Mean}}}
\newcommand{\WMed}{w_{\mathrm{Med}}}
\newcommand{\WMedHat}{\widehat{w}_{T,\mathrm{Med}}}
\newcommand{\WMode}{w_{\mathrm{Mode}}}
\newcommand{\WModeHat}{\widehat{w}_{T,\mathrm{Mode}}}
\newcommand{\edit}[1]{{\color{black}#1}}
\begin{document}

\title{Testing Forecast Rationality for Measures of Central Tendency\thanks{We thank Peter Boswijk, Jonas Brehmer, Tobias Fissler, Rafael Frongillo, Tilmann Gneiting, Arnaud Maurel, Adam Rosen, Melanie Schienle, J\"{o}rg Stoye, as well as seminar participants at Boston University, Duke University, KIT Karlsruhe, HITS Heidelberg, Monash University, UC Riverside, University of Amsterdam, Universität Hohenheim, Universität Heidelberg, University of Sydney, CSS Workshop in Zürich, HKMEtrics Workshop in Mannheim, ISI World Statistics Congress in Kuala Lumpur, the 2019 Statistische Woche in Trier, the 2019 $\text{EC}^2$ Conference in Oxford, and the 2020 Econometric Society World Congress. 
Code to replicate the empirical work in this paper is available under \url{https://github.com/Schmidtpk/replication-central-tendency}, which draws on the \texttt{R} package \texttt{fcrat}, available at \url{https://github.com/Schmidtpk/fcrat}, that provides an implementation of the methods developed in this paper.}}


\author{Timo Dimitriadis\thanks{Alfred-Weber-Institute for Economics, Heidelberg University and Heidelberg Institute for Theoretical Studies, Heidelberg, Germany, e-mail: \texttt{timo.dimitriadis@awi.uni-heidelberg.de} } 
\and Andrew J. Patton\thanks{Department of Economics, Duke University, Durham, USA, e-mail: \texttt{andrew.patton@duke.edu}} 
\and Patrick W. Schmidt\thanks{University of Zurich, Zurich, Switzerland, e-mail: \texttt{patrickwolfgang.schmidt@uzh.ch}}
}

\date{}

\maketitle

{ \centering
\today
	\par
}


\noindent\rule{\linewidth}{.4pt}
\textbf{Abstract}
\\[.3\baselineskip]
\noindent
\onehalfspacing
Rational respondents to economic surveys may report as a point forecast any measure of the central tendency of their (possibly latent) predictive distribution, for example the mean, median, mode, or any convex combination thereof. We propose tests of forecast rationality when the measure of central tendency used by the respondent is unknown. We overcome an identification problem that arises when the measures of central tendency are equal or in a local neighborhood of each other, as is the case for (exactly or nearly) symmetric distributions. As a building block, we also present novel tests for the rationality of mode forecasts. We apply our tests to income forecasts from the Federal Reserve Bank of New York's Survey of Consumer Expectations. We find these forecasts are rationalizable as mode forecasts, but not as mean or median forecasts. We also find heterogeneity in the measure of centrality used by respondents when stratifying the sample by past income, age, job stability, and survey experience. 
\\
[.3\baselineskip] \\
\textbf{Keywords:} forecast evaluation, partial identification, survey forecasts, mode forecasts\\
\textbf{J.E.L. Codes:} C53, D84, E27 


\section{Introduction} 
\doublespacing


Economic surveys are a rich source of information about future economic conditions, yet most economic surveys are vague about the specific statistical quantity the respondent should report. For example, the New York Federal Reserve's labor market survey asks respondents ``What do you believe your annual earnings will be in four months?'' A reasonable response to this question is the respondent reporting her mathematical expectation of future earnings, or her median, or her mode; all common measures of the central tendency of a distribution. When these measures coincide, as they do for symmetric unimodal distributions, this ambiguity does not affect the information content of the forecast. When these measures differ, the specific measure adopted by the respondent can influence its use in other applications, and testing rationality of forecasts becomes difficult. Subjective forecast distributions have been found to be asymmetric for many important economic variables as GDP growth \citep{adrian2019vulnerable, bekaert2019link}, inflation rates \citep{garcia2007can}, firm earnings \citep{Foster1986, givoly2000changing, gu2003earnings} and consumer expenses \citep{Howard2022}, motivating the need for reliable forecast evaluation methods for the different measures of central tendency.


While the assumption of a mean forecast\footnote{We use the phrase ``mean forecasts,'' or similar, as shorthand for the forecaster reporting the mean of her predictive distribution as her point forecast.} is common in the economic and statistical literature (see, e.g., \citealp{coibion2015information, bordalo2020overreaction}), it may be mistaken in some applications. For example, \cite{Knuppel2012} conclude that point forecasts of inflation published by central banks almost always correspond to the modes of the forecast densities, and \cite{ReifschneiderTulip2017} suggest that forecasts from the U.S. Board of Governors are best interpreted as mode forecasts. \cite{Howard2022} and \cite{zhao2022internal} find that some survey forecasts are more consistent with the respondents' modes than means or medians. In an early experimental study, \cite{peterson1964mode} found that respondents could accurately predict the mode and median if incentivized correctly, but had difficulty reporting accurate estimates of the mean, and a more recent study by \cite{KroegerThibaud2019} reports that participants asked to summarize their predictive distribution responded most frequently with their mode. We propose that other measures of central tendency deserve further consideration, especially given the growing evidence that point forecasts may reflect the mode.

Given the ambiguity around which specific measure of central tendency is used by survey respondents, we consider a \textit{class} of such measures defined by the set of convex combinations of the mean, median, and mode.\footnote{These are the three measures described in introductory statistics textbooks (\citealp{McClave2017}), in previous studies (\citealp{engelberg2009comparing}), in central bank publications (the \citeauthor{BoE2019}'s quarterly inflation reports), and in psychological work, \citep[p. 86,][]{kahneman1982judgment}. Nevertheless, our approach can easily be extended to consider a broader set of measures of centrality.} \edit{This approach, naturally, nests each of these measures separately, and also allows for survey respondents who use a mixture of measures of central tendency, e.g., a respondent who ``tilts'' her mean forecast in the direction of the most likely outcome (the mode) or towards a robust measure of location (the median). Such mixtures may arise through the formal minimization of a mixture of the respective loss functions for these measures, or informally as the respondent uses ``expert judgment'' to adjust a forecast of a specific measure of central tendency.}


Similar to \citet{EKT2005}, we propose a testing framework that nests the mean as a special case, but unlike that paper we allow for alternative forecasts \textit{within} a class of measures of central tendency, rather than measures that represent other aspects of the predictive distribution (such as non-central quantiles or expectiles). Our approach faces an identification problem: for symmetric distributions the combination weight vector is unidentified, for ``mildly'' asymmetric distributions the weight vector is weakly identified, and even for strongly asymmetric distributions the weight vector may only be partially identified. Economic variables may fall into any of these cases, and a valid testing approach must accommodate these measures of central tendency being equal, unequal, or in a local neighborhood of each other. We use the work of \cite{StockWright2000} to obtain asymptotically valid confidence sets for the combination weights and to test forecast rationality.


Depending on the skewness of the respondent's predictive distribution, our approach may allow the researcher to distinguish mean, median, and mode forecasts based on point forecasts and realizations alone. An alternative approach was proposed in \cite{engelberg2009comparing}, who combine point and density forecasts to determine whether the point forecasts are the mean, median, or mode of the predictive density. In their study of respondents to the Federal Reserve Bank of Philadelphia's Survey of Professional Forecasters, \cite{engelberg2009comparing} find that most point forecasts are consistent with the bounds derived for all three functionals. \cite{zhao2022internal} similarly studies inflation forecasts from the Federal Reserve Bank of New York's Survey of Consumer Expectations and finds the mode of the density forecast to be most consistent with the point forecasts, but the mean and median are also valid for a majority of forecasts. Our testing approach accommodates the fact that these functionals may be hard, or impossible, to separately identify in data.

Before implementing the above test for rationality for a general forecast of central tendency, we must first overcome a lack of rationality tests for mode forecasts. Rationality tests for mean forecasts go back to at least \cite{MincerZarnowitz1969}, see \citet{ETbook2016} for a recent survey, while rationality tests for quantile forecasts (nesting median forecasts as a special case) are considered in \citet{christoff} and \citet{GaglianoneJBES2011}. A critical impediment to similar tests for mode forecasts is that the mode is not an ``elicitable functional'' \citep{Heinrich2014}, meaning that it cannot be obtained as the solution to an expected loss minimization problem.\footnote{\cite{Gneiting2011} provides an overview of elicitability and identifiability of statistical functionals and shows that several important functionals such as variance, Expected Shortfall, and mode are not elicitable. \cite{FisslerZiegel2016} introduce the concept of higher-order elicitability, which facilitates the elicitation of vector-valued (stacked) functionals such as the variance and Expected Shortfall, though not the mode \citep{Dearborn2019}.}$^,$\footnote{For categorical data, rationality tests for 
mode forecasts were established in \cite{das1999comparing} and extended in \cite{madeira2018testing}. Our focus is on continuously distributed target variables, and so we cannot use their results.} We obtain a test for mode forecast rationality by first proposing novel results on the \textit{asymptotic elicitability} of the mode. We define a functional to be asymptotically elicitable if there exists a sequence of elicitable functionals that converges to the target functional. We consider the (elicitable) ``generalized modal interval,'' defined in detail in Section \ref{sec:ModeFunctional}, and show that it converges to the mode for a general class of probability distributions. We combine these results with recent work on mode regression (\citealp{Kemp2012}; \citealp{Kemp2019}) and nonparametric kernel methods to obtain mode forecast rationality tests analogous to well-known tests for mean and median forecasts. In addition to size control, we show that the proposed test has non-trivial asymptotic power against both fixed and local alternative hypotheses. 

We evaluate the finite sample performance of the new mode rationality test and of the proposed method for obtaining confidence sets for measures of centrality through an extensive simulation study. We use cross-sectional and time-series data generating processes with a range of asymmetry levels. We find that our proposed mode forecast rationality test has satisfactory size properties, even in small samples, and exhibits strong power across different misspecification designs. Our simulation design allows us to consider the four identification cases that can arise in practice: strongly identified (skewed data), where the mean, median and mode differ; unidentified (symmetric unimodal data), where all centrality measures coincide; weakly identified (mildly skewed data), where the centrality measures differ but are close to each other; and partially identified (skewed location-scale data), where one centrality measure is a convex combination of the other two. We find that in the symmetric case, the resulting confidence sets contain, correctly, the entire set of convex combinations of mean, median and mode. In the asymmetric cases, our rationality test is able to identify the combination weights corresponding to the issued centrality forecast.

We apply the new tests to the income survey responses from the Survey of Consumer Expectations conducted by the Federal Reserve Bank of New York. In the full sample of respondents, we find the intriguing result that we can reject rationality with respect to the mean or median, however we cannot reject rationality when interpreting these as mode forecasts, suggesting that survey participants report the anticipated \emph{most likely} outcome rather than the average or median. When allowing for cross-respondent heterogeneity, we find that forecasts from low-income younger survey respondents cannot be rationalized using any measure of central tendency, while forecasts from high-income respondents, regardless of their age, are rationalizable for many different measures of centrality. We also find evidence of learning between survey rounds \citep{kim2022learning} for high-income respondents, but much less so for low-income respondents. 
We compare our rationality test results with those obtained using the approach of \citet{EKT2005} (EKT), which allows for rational optimism or pessimism. We find no cases where allowing for optimism or pessimism ``overturns'' a rejection of rationality based on a measure of centrality. As a stark example, we find that forecasts from low-income younger respondents cannot be rationalized as any centrality measure, nor as a measure in the EKT framework.

Our paper is related to the large literature on forecasting under asymmetric loss, see \citet{Granger69}, \citet{christoffersen1997optimal}, \citet{EKT2005}, \citet{PattonTimmermann2007} and \citet{biases} amongst others. The work in these papers is motivated by the fact that forecasters may wish to use a loss function other than the omnipresent squared-error loss function. The use of asymmetric loss functions generally leads to point forecasts that differ from the mean (though this is not always true, see \citealp{Gneiting2011} and \citealp{patton2014comparing}), and generally these point forecasts are not interpretable as measures of central tendency. For example, \citet{christoffersen1997optimal} show that the linex loss function implies an optimal point forecast that is a weighted sum of the mean and variance, while \citet{biases} find that their sample of macroeconomic forecasters report an expectile with asymmetry parameter around 0.4. Instead of moving from the mean to a point forecast that is not a measure of location, our novel approach considers moving only \textit{within} a general class of central tendency measures. 

The remainder of the paper is structured as follows. 
In Section \ref{sec:ElicitAndEvalModeFC} we propose new forecast rationality tests for the mode based on the concepts of asymptotic elicitability and identifiability. Section \ref{sec:FunctionalCentrality} presents forecast rationality tests for general measures of central tendency, allowing for weak and partial identification. Section \ref{sec:SimulationStudy} presents simulation results on the finite-sample properties of the proposed tests, and Section \ref{sec:ApplicationSCE} presents the empirical application using income survey forecasts. A supplemental appendix contains \edit{all proofs, further theoretical and simulation results}, and empirical analyses of the Federal Reserve Board's ``Greenbook'' forecasts of US GDP growth and random walk forecasts of exchange rates.

%
%
%
%

\section{Eliciting and Evaluating Mode Forecasts}
\label{sec:ElicitAndEvalModeFC}


\subsection{General Forecast Rationality Tests}
\label{sec:GeneralRatTests}

Let $Z_t = \big( Y_{t}, X_t, \widetilde{\mathbf{h}}_t \big)$ \edit{with $t \in \mathbb{N}$} be a stochastic process defined on a common probability space $\big( \Omega, \mathcal{F}, \mathbb{P} \big)$. $Y_{t+1}$ denotes the (scalar) variable of interest, $\widetilde{\mathbf{h}}_t$ denotes a vector of variables known to the forecaster at the time she issues her point forecast for $Y_{t+1}$, which is denoted $X_t$.\footnote{Given our focus on survey forecasts, where the underlying model used by the respondents, if any, is unknown, we take the forecasts as given, putting this paper in the general framework of \cite{giacomini2006tests}, as opposed to that of \cite{West1996}.} 
\edit{We define the information set $\mathcal{F}_t = \sigma \big\{ \widetilde{\mathbf{h}}_s; s \le t \big\}$ as the $\sigma$-field containing all information known to the forecaster at time $t$. 
	Throughout the paper, we assume that $X_t \in \mathcal{F}_t$.}
	We denote the distribution of $Y_{t+1}$ given $\mathcal{F}_t$ by $F_t$, and 
with corresponding density $f_t$. 
Neither the forecaster's information set, $\mathcal{F}_t$ nor her predictive distribution, $F_t$, are assumed known to the econometrician. In conducting the forecast rationality test, we assume that the econometrician uses an $\mathcal{F}_t$-measurable $(k \times 1$) vector $\bh$, which can be thought of as a subset of $\widetilde{\mathbf{h}}_t$.\footnote{The intepretability of the outcome of a forecast rationality test, including ours, critically depends on the ``test vector'' or ``instrument vector,'' $\bh$, being observable to the forecaster. In our empirical analysis we only use vectors that are guaranteed to satisfy this requirement.} Note that since $F_t$ is not observed, we do not know whether this distribution is strongly skewed, mildly skewed, or symmetric, and thus our inference method must be valid for all of these possibilities. Conditional expectations are denoted $\mathbb{E}_t[\cdot] =\mathbb{E}[\cdot|\mathcal{F}_t]$.
We use $\cal{P}$ to denote a class of distributions.

We start by considering rationality tests (also known as calibration tests; \citealp{Nolde2017}) for the mean, i.e.\ we assume that the forecasts $X_t$ are one-step ahead \textit{mean} forecasts for $Y_{t+1}$.
We are interested in testing if these forecasts are rational
, which would imply the null hypothesis:
\begin{align} 
	\label{eq:H0MeanRat}
	\mathbb{H}_{0}: X_t = \mathbb{E} [ Y_{t+1} | \mathcal{F}_t ] ~ \forall~t \text{ a.s.} 
 \end{align}
We test this hypothesis using the ``identification function'' for the mean, which is simply the difference between the forecast and the realized value, i.e., the forecast error:\footnote{The identification function for a point forecast can be obtained as the first derivative of any loss function that elicits that forecast. The quadratic loss function elicits the mean, and so its identification function is simply the forecast error, up to scale and sign. In econometrics the forecast error is usually defined as $Y_{t+1}-X_t$, that is, as the negative of the identification function in equation (\ref{eq:MeanIF}). Given the important role that forecast identification functions play in this paper, we adopt the definition for $\varepsilon_{t}$ given in equation (\ref{eq:MeanIF}), and we refer to $\varepsilon_{t}$ as the forecast error.}
 \begin{align}
	V_{\mathrm{Mean}}  \big( X_t, Y_{t+1} \big) = X_t - Y_{t+1} =: \varepsilon_{t}. \label{eq:MeanIF}
 \end{align}

Specifically, the null hypothesis in \eqref{eq:H0MeanRat} implies that the identification function $V_{\mathrm{Mean}}  \big( X_t, Y_{t+1} \big)$ is uncorrelated with any $(k \times 1)$ instrument vector $\bh \in \mathcal{F}_t$, which provides a testable moment condition.
Under the above null hypothesis and subject to standard regularity conditions, it is straight forward to show that $T^{-1/2} \sum_{t=1}^T V_{\mathrm{Mean}}  \big(X_t, Y_{t+1} \big) \bh \tod \mathcal{N} \big( 0,\Omega_{\mathrm{Mean}} \big)$ as $T \rightarrow \infty$,
and that
\begin{align}
	\label{eqn:WaldTestStatMean}
	J_{T} = \frac{1}{T} \left( \sum_{t=1}^T V_{\mathrm{Mean}} \big( X_t, Y_{t+1}\big) \bh^\top  \right)  \widehat{\Omega}_{T,\mathrm{Mean}}^{-1} \left(  \sum_{t=1}^T V_{\mathrm{Mean}}  \big( X_t, Y_{t+1} \big)  \bh \right) \tod \chi^2_k
\end{align}
as $T \rightarrow \infty$, where $\widehat{\Omega}_{T,\mathrm{Mean}}$ is a consistent estimator of $\Omega_{\mathrm{Mean}}$.
This result facilitates testing whether given forecasts $X_t$ are rational mean forecasts for the realizations $Y_{t+1}$ by using the test statistic $J_T$ in equation (\ref{eqn:WaldTestStatMean}) to test for uncorrelatedness of  the identification function $V_{\mathrm{Mean}}  \big( X_t, Y_{t+1} \big)$ and the instrument vector $\bh$. 
As in most other tests in the literature, this is of course only a test of a necessary condition for forecast rationality, and the conclusion may be sensitive to the choice of instruments, $\bh$.\footnote{Like most of the forecast evaluation literature, we assume that the vector of instruments is of fixed and finite length. A \cite{Bierens82}-type test, where the length of the vector diverges with the sample size, is considered for forecast evaluation in, for example, \cite{CS02}.} 


The test statistic in equation (\ref{eqn:WaldTestStatMean}) is only informative about rationality if the forecasts are interpreted as being for the \textit{mean} of $Y_{t+1}$. The decision-theoretic framework of identification functions and consistent loss functions is fundamental for generalizations to other measures of central tendency, such as the median and the mode. For a general real-valued functional $\Gamma: \mathcal{P} \longrightarrow \mathbb{R}$, a strict identification function $V_\Gamma(x,Y)$ is defined by being zero in expectation if and only if $x$ equals the functional $\Gamma(F)$. Strict identification functions are generally obtained as the derivatives of strictly consistent loss functions, which are defined as having the functional $\Gamma(F)$ as their unique minimizer (in expectation). A functional is called \textit{identifiable} if a strict identification function exists, and is called \textit{elicitable} if a strictly consistent loss function exists. 
See \cite{Gneiting2011} for a general introduction to elicitability and identifiability.

The forecast error $X_t - Y_{t+1}$ is a strict identification function for the mean, and a strict identification function for the median is given by the step function:
\begin{align}
	\label{eq:MedIF}
	V_{\mathrm{Med}} \big( X_t, Y_{t+1} \big) = \mathds{1}_{\{Y_{t+1} < X_t\}} - \mathds{1}_{\{Y_{t+1} > X_t\}}.
\end{align}
We obtain a test of median forecast rationality by replacing $	V_{\mathrm{Mean}}$ and 
$\widehat{\Omega}_{T,\mathrm{Mean}}$ by $V_{\mathrm{Med}}$ and $\widehat{\Omega}_{T,\mathrm{Med}}$
in equation (\ref{eqn:WaldTestStatMean}).

\subsection{The Mode Functional}
\label{sec:ModeFunctional}

In contrast to the mean and the median, rationality tests for mode forecasts are more challenging to consider. 
The underlying reason is that there do not exist identification functions for the mode for random variables with continuous Lebesgue densities \citep{Heinrich2014, Dearborn2019}. In this section we simplify notation and refer to the target variable and forecast as $Y$ and $x$. We define the mode for random variables with continuous Lebesgue densities as the global maxima of the density function.\footnote{
More generally, the mode is often defined as the limit, as $\delta \to 0$, of the modal midpoint functional $\operatorname{MMP}_\delta$, given in equation (\ref{eqn:ModalMidPoint}) below \citep{Gneiting2011, Dearborn2019}.
This definition coincides with the global maxima of the density function for distributions with continuous Lebesgue density; and it coincides with the points of maximal probability for discrete distributions \citep{Heinrich2014}.
Note that our definition of the mode as the global maxima of a density function is only valid for distributions with \textit{continuous} Lebesgue densities as otherwise the density function can be modified on singletons (null sets in the distribution) without altering the underlying probability measure.
}
We make the following distinction in the notion of \textit{unimodality}. 
\begin{definition}
	\label{def:Unimodality}
	An absolutely continuous distribution is defined as 
	\edit{(a) \emph{weakly unimodal} if it has a unique mode, i.e., a unique point $m \in \mathbb{R}$ such that $f(x) < f(m)$ for all $x \neq m$, and (b) \emph{strongly unimodal} if
	there exists a unique point $m \in \mathbb{R}$ such that for its differentiable density function $f$, it holds that $f'(x) \ge 0$ if $x<m$ and $f'(x) \le 0$ if $x>m$.}
\end{definition}
\cite{HeinrichFissler2021} show that even for the class of strongly unimodal distributions, neither strictly consistent loss functions nor strict identification functions exist for the mode.
\cite{Gneiting2011} notes that it is sometimes stated informally that the mode is an optimal point forecast under the loss function $L_\delta(x,Y) = \mathds{1}_{\{ | x - Y | \le \delta \}}$ for some fixed $\delta > 0$.
In fact, this loss function elicits the midpoint of the modal interval (also known as the modal midpoint, or MMP) of length $2 \delta$. 
The MMP of $Y \sim P$ is defined as the midpoint of the interval of length $2 \delta$ that contains the highest probability:
\begin{align}
	\label{eqn:ModalMidPoint}
	\operatorname{MMP}_\delta( P ) = \sup_{x \in \mathbb{R} } \mathbb{P} \big( Y \in [x-\delta, x+\delta] \big).
\end{align}  
More formally, it holds that for any $\delta > 0$ small enough, the modal midpoint is well-defined for all distributions with unique and well-defined mode, and it holds that $ \lim_{\delta \downarrow 0}\operatorname{MMP}_\delta( P ) =  \mode( P )$ \citep{Gneiting2011}.
In a similar manner, \cite{Eddy1980}, \cite{Kemp2012} and \cite{Kemp2019} propose estimation of the mode by estimating the modal interval with an asymptotically shrinking length. We formalize these ideas in the decision-theoretical framework in the following definition.

\begin{definition}
The functional $\Gamma: \mathcal{P} \longrightarrow \mathbb{R}$ is \textit{asymptotically elicitable (identifiable)} relative to $\cP$ if there exists a sequence of elicitable (identifiable) functionals $\Gamma_\delta: \mathcal{P} \longrightarrow \mathbb{R}$, such that $\Gamma_\delta(P) \to \Gamma(P)$ as $\delta \to 0$  for all $P \in \mathcal{P}$.
\end{definition}

As the modal midpoint converges to the mode, this establishes \textit{asymptotic elicitability} for the mode functional for the class of weakly unimodal probability distributions with continuous Lebesgue densities.
Unfortunately, this does not directly allow for asymptotic \textit{identifiability} of the mode as any pseudo-derivative of the loss function $L_\delta$ equals zero.
We establish asymptotic identifiability of the mode through the \textit{generalized modal midpoint}.

\begin{definition}
	\label{def:GeneralizedModalMidpoint}
Given a kernel function $K(\cdot)$ and bandwidth parameter $\delta$, the \textit{generalized modal midpoint}, $\Gamma_\delta^K(P)$, of $Y \sim P$ is defined as 
	\begin{align}
	\begin{aligned}
	\label{eqn:LossGenModalInt}
	\Gamma_\delta^K(P) = \argmin_{x \in \mathbb{R}} \mathbb{E} \left[ L^K_\delta(x,Y) \right], \qquad \text{ where } \qquad
	L^K_\delta(x,Y) = - \frac{1}{\delta} K\left( \frac{x-Y}{\delta} \right).
	\end{aligned}
	\end{align}
\end{definition}
The familiar modal midpoint is nested in this definition by using a rectangular kernel for $K(u) = \mathds{1}_{\{ | u | \le 1 \}}$, and this definition allows for smooth generalizations. As this definition involves the argmin of a function, we first establish that this is well-defined and that it converges to the mode functional. The following theorem also considers identifiability of the generalized modal midpoint.


\begin{theorem}
	\label{thm:GenModalMidpointProperties}
	Let $K$ be a strictly positive kernel function on the real line that is log-concave, i.e.\ $\log(K(u))$ is a concave function, and additionally let $\int K(u) \mathrm{d}u = 1$ and $\int |u| K(u) \mathrm{d}u < \infty$.
	Let $\mathcal{P}$ be the class of absolutely continuous and weakly unimodal distributions with bounded and Lipschitz-continuous density and let $\tilde{\mathcal{P}} \subset \mathcal{P}$ be the subclass of strongly unimodal distributions.
	\begin{enumerate}[label=(\alph*)]
		\item 
		\label{statement:GenModalMidpointWellDefined}
		The functional $\Gamma_\delta^K$ induced by the loss function (\ref{eqn:LossGenModalInt}) is well-defined for all $\delta > 0$ and $P \in \mathcal{P}$.

		\item 
		\label{statement:GenModalMidpointConvergence}
		It holds that $\Gamma^K_\delta(P) \to \mode(P)$ as $\delta \to 0$ for all $P \in \mathcal{P}$.
		
		\item 
		\label{statement:GenModalMidpointIdentifiable}
		If $K$ is differentiable, for all (fixed) $\delta>0$ and $P \in \tilde{\mathcal{P}}$, it holds that the function 
		\begin{align}
			\label{eq:ModeIF}
			V_\delta^K (x,Y) = \frac{\partial}{\partial x}  L^K_\delta(x,Y) = - \frac{1}{\delta^2} K' \left( \frac{x-Y}{\delta} \right)
		\end{align}
		is a strict identification function for $\Gamma_\delta^K$.
		In particular, the generalized modal midpoint is identifiable and the mode is asymptotically identifiable with respect to $\tilde{\mathcal{P}}$.
	\end{enumerate}	
\end{theorem}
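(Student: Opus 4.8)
The plan is to recast the entire theorem as a statement about the convolution of the density $f$ of $P$ with a rescaled kernel. Writing $\phi_\delta(t) = \tfrac{1}{\delta} K(t/\delta)$ and substituting $u = (x-y)/\delta$, the expected loss becomes
\[
\dE\!\left[ L^K_\delta(x,Y) \right] = -\int \phi_\delta(x-y) f(y)\, \mathrm{d}y = -\int K(u)\, f(x - \delta u)\, \mathrm{d}u =: -m_\delta(x),
\]
so minimizing the expected loss is the same as maximizing the smoothed density $m_\delta = \phi_\delta * f$. Each part is then a property of $m_\delta$: part \ref{statement:GenModalMidpointWellDefined} is existence of a maximizer, part \ref{statement:GenModalMidpointConvergence} is convergence of the maximizer to $\argmax f = \mode(P)$, and part \ref{statement:GenModalMidpointIdentifiable} is that $m_\delta$ possesses a single critical point.

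For part \ref{statement:GenModalMidpointWellDefined} I would first note that $m_\delta$ is continuous and strictly positive (since $K>0$ and $f\ge 0$), and that $m_\delta(x)\to 0$ as $|x|\to\infty$: a strictly positive, integrable, log-concave $K$ has at worst exponential tails, $f$ is bounded and (being uniformly continuous and integrable) vanishes at infinity, so dominated convergence applied to $m_\delta(x) = \int K(u) f(x-\delta u)\,\mathrm{d}u$ gives the claim. A positive continuous function vanishing at infinity attains its supremum, so $\argmax_x m_\delta$ is nonempty and compact, which is what well-definedness requires (uniqueness will be recovered on $\tilde{\mathcal P}$ in part \ref{statement:GenModalMidpointIdentifiable}). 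For part \ref{statement:GenModalMidpointConvergence} the key estimate uses both regularity assumptions together: by Lipschitz continuity of $f$ with constant $L$,
\[
\big| m_\delta(x) - f(x) \big| \le \int K(u)\, \big| f(x-\delta u) - f(x) \big|\, \mathrm{d}u \le L\,\delta \int |u|\, K(u)\, \mathrm{d}u,
\]
which is finite precisely because $\int |u| K(u)\,\mathrm{d}u < \infty$ and, crucially, is \emph{uniform} in $x$. Hence $m_\delta \to f$ uniformly on $\sR$ as $\delta \to 0$. Since $P$ is weakly unimodal, $f$ has a unique, well-separated global maximizer at $\mode(P)$, and a standard argmax argument then yields that every maximizer of $m_\delta$ converges to $\mode(P)$, giving $\Gamma_\delta^K(P)\to \mode(P)$.

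For part \ref{statement:GenModalMidpointIdentifiable} I would use that, because $K$ is differentiable, $\dE[V_\delta^K(x,Y)]$ equals $-m_\delta'(x)$, the derivative of the expected loss, so the implication $x=\Gamma_\delta^K(P)\Rightarrow \dE[V_\delta^K(x,Y)]=0$ is just the first-order condition at an interior maximizer. The content is the converse, for which I must show that on the strongly unimodal class $\tilde{\mathcal P}$ the function $m_\delta'$ vanishes only at the maximizer, i.e. $m_\delta$ is strictly unimodal. Integrating by parts transfers the derivative onto $f$,
\[
m_\delta'(x) = \int \phi_\delta'(x-y) f(y)\,\mathrm{d}y = \int \phi_\delta(x-y)\, f'(y)\,\mathrm{d}y,
\]
and strong unimodality forces $f'$ to change sign exactly once, from $+$ to $-$, at $\mode(P)$. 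The decisive structural fact is that a strictly positive log-concave $K$ is a P\'olya frequency function of order two, so the kernel $\phi_\delta(x-y)$ is totally positive of order two; by the variation-diminishing property of such kernels, $x\mapsto m_\delta'(x)$ cannot exhibit more sign changes than $f'$. Combined with $m_\delta>0$ and $m_\delta\to 0$ at $\pm\infty$, this forces exactly one sign change, $+$ to $-$, so $m_\delta'$ has a unique zero at the maximizer and $V_\delta^K$ is a strict identification function. Identifiability of $\Gamma_\delta^K$ for each fixed $\delta$, together with the convergence in part \ref{statement:GenModalMidpointConvergence}, then gives asymptotic identifiability of the mode along $\delta\downarrow 0$, matching the definition.

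The main obstacle is the strict-unimodality step in part \ref{statement:GenModalMidpointIdentifiable}: converting the qualitative principle that log-concave kernels preserve unimodality (Ibragimov's theorem) into the precise statement that $m_\delta'$ has a single zero. Two technical points need care. First, $f$ is only assumed Lipschitz, so $f'$ exists merely almost everywhere; I would justify the integration by parts and the sign-change count using the a.e.\ derivative, or by first mollifying and passing to the limit. Second, strict positivity of $K$ must be invoked to exclude flat stretches of $m_\delta$ near its top, so that the unique sign change of $m_\delta'$ is genuine rather than a plateau. For parts \ref{statement:GenModalMidpointWellDefined} and \ref{statement:GenModalMidpointConvergence} the only mild subtleties are the tail decay of log-concave $K$ and verifying that the maximizers cannot escape to infinity, both of which are routine.
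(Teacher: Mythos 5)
Your parts (b) and (c) largely track the paper's own argument: the uniform bound $\sup_x |m_\delta(x)-f(x)| \le L\,\delta \int |u| K(u)\,\mathrm{d}u$ is exactly the paper's step for (b), and your appeal to the variation-diminishing property of the TP$_2$ kernel $\phi_\delta(x-y)$ is the same structural fact the paper extracts by hand from the monotone likelihood ratio property of log-concave densities. The genuine gap is in part (a). Well-definedness of $\Gamma_\delta^K$ means the argmin is a \emph{single point}, and the theorem asserts this for every $P$ in the larger class $\mathcal{P}$ of weakly unimodal distributions. You establish only that the set of maximizers of $m_\delta$ is nonempty and compact, and you explicitly defer uniqueness to part (c) --- but part (c) operates only on the subclass $\tilde{\mathcal{P}}$ of strongly unimodal distributions, where $f'$ changes sign once. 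For a weakly unimodal $f$ with additional local modes, $f'$ changes sign more than once, your sign-change count no longer yields a single critical point, and your argument leaves open that $m_\delta$ has several global maximizers. The ingredient you are missing is Ibragimov's theorem: the convolution of a log-concave density with any unimodal density is again unimodal, so $m_\delta = \phi_\delta \ast f$ is unimodal for \emph{every} $P \in \mathcal{P}$, which is how the paper obtains a unique minimizer of the expected loss on all of $\mathcal{P}$.

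A second, smaller issue sits in part (c): you correctly flag that ``exactly one sign change of $m_\delta'$'' does not by itself exclude $m_\delta'$ vanishing on an interval, but you leave the resolution as ``strict positivity of $K$ must be invoked,'' without saying how. The paper closes this by showing that the roots of the expected identification function form a closed interval, on which the expected loss would be constant --- contradicting the uniqueness of the minimizer already secured in part (a). So the plateau problem in (c) is resolved by the very uniqueness statement your part (a) omits; as written, your proposal cannot close this loop.
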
 
This theorem shows that for the classes of weakly (strongly) unimodal distributions, the generalized modal midpoint is elicitable (and identifiable), and consequently, the mode is asymptotically elicitable (and identifiable).
While $V_\delta^K (x,Y)$ being an identification function for the generalized modal midpoint is obvious from Definition \ref{def:GeneralizedModalMidpoint}, Theorem \ref{thm:GenModalMidpointProperties}\ref{statement:GenModalMidpointIdentifiable} establishes its strictness.

For a fixed $\delta > 0$, strict identifiability is doomed to fail when both the underlying distribution and the kernel function have bounded support as the expected identification function equals zero for values far outside both supports.
Theorem \ref{thm:GenModalMidpointProperties}\ref{statement:GenModalMidpointIdentifiable} shows that employing log-concave kernels with infinite support circumvents this problem.
While kernels with bounded support generally exhibit a superior performance in nonparametric statistics, this proposition motivates the use of kernels with infinite support like the Gaussian density function.
Furthermore, the Gaussian density function, among many others, satisfies the required log-concavity of the kernel function.\footnote{Theorem \ref{thm:GenModalMidpointProperties}\ref{statement:GenModalMidpointIdentifiable} also holds if log-concavity of the underlying density, instead of the kernel function, holds. This illustrates that kernels with bounded support can be employed at the cost of restricting the class of distributions.}

\edit{Beyond the rationality tests proposed in the following, the concept of asymptotic elicitability is of interest in its own right. Asymptotic elicitability may facilitate forecast evaluation and comparison for novel statistical functionals. For example, \cite{schmidt2021belief} propose an elicitation procedure for the maximum, a functional that generally is not elicitable \citep{bellini2015elicitable}.}

\subsection{Forecast Rationality Tests for the Mode}
\label{sec:ForecastRatioanlityTestMode}

Having established the asymptotic identifiability of the mode in the previous section, we now consider rationality testing of mode forecasts, i.e.\ testing the following null hypothesis,
\begin{align}
	\label{eqn:NullHypothesisMode}
	\mathbb{H}_0: \quad X_t = \mode( Y_{t+1} | \mathcal{F}_t )  \quad  \forall ~ t \text{ a.s.}
\end{align}
While classical, $\sqrt{T}$-consistent rationality tests based on strict identification functions are unavailable for the mode due to its non-identifiability, we next propose a rationality test for mode forecasts based on an asymptotically shrinking bandwidth parameter $\delta_T$. Consider the (asymptotically valid) identification function $V^K_{\delta_T}$ with bandwidth $\delta_T$, and multiplied by the instruments $\bh$,
\begin{align}
	\label{eqn:PsiMode}
	\psi(Y_{t+1},X_t,\bh, \delta_T) := V_{\delta_T}^K(X_t, Y_{t+1}) \bh =  - \frac{1}{\delta_T^2} K' \left( \frac{X_t-Y_{t+1}}{\delta_T} \right)  \bh.
\end{align}
 
%
%
We make the following assumptions. For the remainder of the paper all limits are taken as $T \to \infty$, unless stated otherwise. 
\begin{assumption}
	\label{assu:ModeRationality}
	$ $ 
	\begin{enumerate}[label=(A\arabic*)]
		\item 
		\label{assu:StationaryErgodicSequence}
		The sequence $\big( \varepsilon_{t}, \bh \big)$ for $t \in \mathbb{N}$ is $\alpha$-mixing of size $-r/(r-1)$ for some $r > 1$. 
			
		\item 
		\label{assu:Moments}
		It holds that $\mathbb{E} \left[ ||\bh||^{2r+\delta} \right] < \infty$ for all $t \in \mathbb{N}$ for some $\delta > 0$.
		
		\item
		\label{assu:FullRank}
		The matrix $\mathbb{E} \left[\bh \bh^\top \right]$ has full rank for all $t \in \mathbb{N}$.
		
		\item 
		\label{assu:CovConvergence}
		The limit, $\Omega_{\mathrm{Mode}}$, of $\Omega_{T,\mathrm{Mode}} := \frac{1}{T} \sum_{t=1}^T \mathbb{E} \left[ \bh \bh^\top f_{t}(0) \right]  \int K'(u)^2 \mathrm{d}u$ is positive definite.
		
		
		\item 
		\label{assu:ConditionalDensity}
		For all $t \in \mathbb{N}$, the conditional distribution of $\varepsilon_t = X_t - Y_{t+1}$ given $\mathcal{F}_t$ is absolutely continuous with density $f_{t}(\cdot)$ which is three times continuously differentiable with bounded derivatives.
		
		\item 
		\label{assu:Kernel}
		$K: \mathbb{R} \to \mathbb{R}$, $u \mapsto K(u)$ is a non-negative and continuously differentiable kernel function such that:
		(i) $\int K(u) \mathrm{d}u = 1$,
		(ii) $\int u K(u) \mathrm{d}u = 0$,
		(iii) $\sup K(u) \le c <\infty$,
		(iv) $\sup K'(u) \le c <\infty$,
		(v) $\int u^2 K(u) \mathrm{d}u <\infty$,
		(vi) $\int \big| K'(u) \big| \mathrm{d}u  <\infty$,
		(vii) $\int u K' (u) \mathrm{d}u <  \infty$.
		
		\item 
		\label{assu:Bandwidth}
		$\delta_T$ is a strictly positive and deterministic sequence such that (i) $T \delta_T \to \infty $, and
		(ii) $T \delta_T^7 \to 0$.
		
	\end{enumerate}
\end{assumption}

The above assumptions are a  combination of standard assumptions from rationality testing and nonparametric statistics. \edit{Importantly, for our applications, these assumptions allow for heterogeneous predictive distributions, $F_t$.} Conditions \ref{assu:StationaryErgodicSequence} and \ref{assu:Moments} facilitate the use of a law of large numbers and of a central limit theorem for martingale difference arrays (MDA) (that generalize MD sequences to triangular arrays; see \cite{davidson1994stochastic}) and allows for both possibly non-stationary time-series and cross-sectional applications. Notice that the rate in the mixing condition \ref{assu:StationaryErgodicSequence}  is relatively weak as we only need it for a law of large numbers while we apply a central limit theorem for MDA in the proof of Theorem \ref{thm:ModeRationality} below.
In cross-sectional applications with independent observations, this assumption can be replaced (and weakened) by the classical Lindeberg condition (see e.g.\ \cite{White2001}, Section 5.2).
Notice that as the kernel function $K'$ is bounded, we do not require existence of any moments of $Y_t$ or $X_t$, which makes this more flexible than rationality testing for mean forecasts.
The full rank condition \ref{assu:FullRank} prevents the instruments from being perfectly colinear which in turn prevents the asymptotic covariance matrix from being singular.
Condition \ref{assu:CovConvergence} guarantees that the asymptotic covariance matrix is well behaved for non-stationary data. Assumption \ref{assu:ConditionalDensity} assumes a relatively smooth behavior of the conditional density function which is required to apply a Taylor expansion common to the nonparametric literature. Conditions \ref{assu:Kernel} and \ref{assu:Bandwidth} are standard kernel and bandwidth conditions from the nonparametric literature. We discuss specific kernel and bandwidth choices in Supplemental Appendices \ref{sec:KernelChoice} and \ref{sec:BandwidthChoice} respectively.
		
\begin{theorem}
	\label{thm:ModeRationality}
	Under Assumption \ref{assu:ModeRationality} and $\mathbb{H}_0:  X_t = \mode( Y_{t+1} | \mathcal{F}_t ) ~ \forall~t$ a.s., it holds that
	\begin{align}
		\label{eqn:AsyNormModeRationality}
		\delta_T^{3/2}  T^{-1/2} \sum_{t=1}^T \psi(Y_{t+1},X_t,\bh, \delta_T) \tod \mathcal{N} \big( 0, \Omega_{\mathrm{Mode}} \big),
	\end{align}
 	 where $\Omega_{\mathrm{Mode}}$ is the limit of $\Omega_{T,\mathrm{Mode}} = \frac{1}{T} \sum_{t=1}^T \mathbb{E} \left[ \bh \bh^\top f_{t}(0) \right]  \int K'(u)^2 \mathrm{d}u$ as $T \to \infty$.
\end{theorem}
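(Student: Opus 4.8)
The plan is to apply a martingale central limit theorem to a centered and rescaled version of the summands, after peeling off a bias term that the bandwidth conditions force to vanish. Write $\varepsilon_t = X_t - Y_{t+1}$, let $f_t$ denote its $\mathcal{F}_t$-conditional density, and abbreviate $\psi_t = \psi(Y_{t+1},X_t,\bh,\delta_T)$. The single most important consequence of the null is that, since $X_t$ is the conditional mode of $Y_{t+1}$, the density $f_t$ attains its maximum at the origin, so $f_t'(0)=0$ almost surely. First I would split
\[
\delta_T^{3/2}T^{-1/2}\sum_{t=1}^T \psi_t = \underbrace{\delta_T^{3/2}T^{-1/2}\sum_{t=1}^T\big(\psi_t - \mathbb{E}[\psi_t\mid\mathcal{F}_t]\big)}_{=:A_T} \;+\; \underbrace{\delta_T^{3/2}T^{-1/2}\sum_{t=1}^T\mathbb{E}[\psi_t\mid\mathcal{F}_t]}_{=:B_T}.
\]
Because $\bh$ and $X_t$ are $\mathcal{F}_t$-measurable while $\varepsilon_t$ enters only through $Y_{t+1}$, the centered terms in $A_T$ form a martingale difference array with respect to $\{\mathcal{F}_{t+1}\}$, which is the object to which the CLT will be applied, while $B_T$ is a pure bias term.

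For $B_T$ I would evaluate $\mathbb{E}[\psi_t\mid\mathcal{F}_t] = -\delta_T^{-2}\bh\,\mathbb{E}[K'(\varepsilon_t/\delta_T)\mid\mathcal{F}_t]$ via the substitution $u=e/\delta_T$, giving $\mathbb{E}[K'(\varepsilon_t/\delta_T)\mid\mathcal{F}_t] = \delta_T\int K'(u)f_t(\delta_T u)\,\mathrm{d}u$, and then Taylor-expand $f_t(\delta_T u)$ about $0$. Integration by parts with the kernel conditions in A5 gives $\int K'(u)\,\mathrm{d}u = 0$, $\int u K'(u)\,\mathrm{d}u = -1$, and $\int u^2 K'(u)\,\mathrm{d}u = -2\int u K(u)\,\mathrm{d}u = 0$. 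The $O(1)$ term vanishes, the $O(\delta_T)$ term is proportional to $f_t'(0)$ and hence vanishes under $\mathbb{H}_0$, and the $O(\delta_T^2)$ term vanishes because $\int u^2 K'(u)\,\mathrm{d}u=0$; the leading surviving contribution is $O(\delta_T^3)$, controlled by the bounded third derivative in A4. Thus $\mathbb{E}[\psi_t\mid\mathcal{F}_t]=O(\delta_T^2)\bh$ and $\|B_T\| = O_P(T^{1/2}\delta_T^{7/2})$, which tends to zero precisely under $T\delta_T^7\to 0$ (A6(ii)). This is exactly the step where the null hypothesis is used.

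For $A_T$ I would compute the conditional variance of the martingale differences. The same change of variables yields $\mathbb{E}[K'(\varepsilon_t/\delta_T)^2\mid\mathcal{F}_t] = \delta_T\int K'(u)^2 f_t(\delta_T u)\,\mathrm{d}u = \delta_T f_t(0)\int K'(u)^2\,\mathrm{d}u + O(\delta_T^2)$, so that $\mathbb{E}[\psi_t\psi_t^\top\mid\mathcal{F}_t] = \delta_T^{-3}f_t(0)\big(\int K'(u)^2\,\mathrm{d}u\big)\bh\bh^\top + O(\delta_T^{-2})\bh\bh^\top$, the squared conditional-mean correction being negligible. The scaling $\delta_T^{3/2}$ is chosen exactly to cancel the $\delta_T^{-3}$ blow-up, leaving the conditional variance of $A_T$ equal to $T^{-1}\sum_t f_t(0)\bh\bh^\top\int K'(u)^2\,\mathrm{d}u + o_P(1)$, which by the ergodic theorem (A1) converges in probability to $\Omega_{\mathrm{Mode}}=\mathbb{E}[\bh\bh^\top f_t(0)]\int K'(u)^2\,\mathrm{d}u$. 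It then remains to verify a conditional Lyapunov condition for $\xi_{T,t}=\delta_T^{3/2}T^{-1/2}(\psi_t-\mathbb{E}[\psi_t\mid\mathcal{F}_t])$: the kernel bounds in A5 give $\sum_t\mathbb{E}[\|\xi_{T,t}\|^{2+\delta}\mid\mathcal{F}_t]$ a factor of order $\delta_T^{-\delta/2}$ while the moment condition A2 and the ergodic theorem supply $T^{-\delta/2}$, so the bound is $O_P((T\delta_T)^{-\delta/2})\to 0$ under $T\delta_T\to\infty$ (A6(i)). A martingale central limit theorem for triangular arrays then gives $A_T\tod\mathcal{N}(0,\Omega_{\mathrm{Mode}})$, and Slutsky together with $B_T\toP 0$ completes the argument.

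The main obstacle I anticipate is the bookkeeping of rates rather than the CLT itself: the summand $\psi_t$ has a conditional variance diverging like $\delta_T^{-3}$ and a conditional mean that, absent the null, would be $O(1)$, so the entire argument hinges on the delicate interplay of the three kernel moment identities, the vanishing of $f_t'(0)$ under $\mathbb{H}_0$, and the two-sided bandwidth condition. Making the Taylor remainders uniform in $t$ (using the bounded-derivative assumption A4 and the integrability of $K'$) so the ergodic theorem applies term-by-term, and checking that the Lyapunov bound degrades only like $(T\delta_T)^{-\delta/2}$, are the places where genuine care is required.
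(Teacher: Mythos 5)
Your proposal follows essentially the same route as the paper: the same decomposition of the scaled sum into a conditional-mean (bias) part and a martingale-difference part, the same use of $f_t'(0)=0$ under the null together with the kernel moment identities and $T\delta_T^7\to 0$ to kill the bias, and a martingale CLT for triangular arrays with limiting variance $\Omega_{\mathrm{Mode}}$ for the remainder. Your CLT verification (conditional variance convergence plus a conditional Lyapunov bound of order $(T\delta_T)^{-\delta/2}$) is an equally standard sufficient set to the conditions of Davidson's Theorem 24.3 that the paper checks after a Cram\'er--Wold reduction, and your rate bookkeeping matches the paper's lemmas.

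One concrete wrinkle: in the bias term you Taylor-expand $f_t(\delta_T u)$ to third order directly against $K'(u)$, so controlling the Lagrange remainder requires $\int |u|^3\,|K'(u)|\,\mathrm{d}u<\infty$, which is not among the conditions in Assumption \ref{assu:ModeRationality}\ref{assu:Kernel}. The paper avoids this by first integrating by parts, i.e.\ writing $\int K'(e/\delta_T)f_t(e)\,\mathrm{d}e = -\delta_T\int K(e/\delta_T)f_t'(e)\,\mathrm{d}e$ (the boundary term vanishing since $K\to 0$ and $f_t$ is bounded), and only then expanding $f_t'$ to second order, so that the remainder is controlled by $\int u^2K(u)\,\mathrm{d}u<\infty$, which \emph{is} assumed. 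Your argument is correct for the Gaussian kernel used in practice and is repaired in general by performing that single integration by parts before expanding; the rate $O_P(T^{1/2}\delta_T^{7/2})$ you obtain is unchanged.
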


We obtain a test for the rationality of mode forecasts by drawing on the literature on nonparametric estimation. Unsurprisingly, therefore, the rate of convergence is slower than $\sqrt{T}$; Assumption \ref{assu:Bandwidth} requires $\delta_T \propto T^{-\kappa}$ with $\kappa \in (1/7, 1)$, which implies that the fastest convergence rate approaches $T^{2/7}$ and is obtained by setting $\delta_T \approx T^{-1/7}$.\footnote{Under additional assumptions, the speed of convergence of a nonparametric estimator may be increased via the use of higher-order kernel functions, see e.g.\ \cite{LiRacine2006}. However, as the generalized modal midpoint introduced in Definition \ref{def:GeneralizedModalMidpoint} requires a log-concave kernel to be well-defined and unique (see Theorem \ref{thm:GenModalMidpointProperties}), and as this assumption is automatically violated for higher-order kernels, we do not consider them here.} 


Theorem \ref{thm:GenModalMidpointProperties} guarantees the \textit{strictness} of the identification function $V_\delta(x,Y)$ only if $K$ has infinite support and is strictly increasing (decreasing) left (right) of its mode. These conditions are satisfied by the familiar Gaussian kernel, which we adopt for our analysis.\footnote{We also considered the relatively efficient quartic (or biweight) kernel but did not observe a change in power relative to the Gaussian kernel. See Supplemental Appendix \ref{sec:KernelChoice} for further discussion of the kernel choice and simulation results.}

Following \cite{Kemp2012} and \cite{Kemp2019}, we estimate the covariance matrix by its sample counterpart,
\begin{align}
	\widehat{\Omega}_{T,\mathrm{Mode}} = \frac{1}{T} \sum_{t=1}^T \delta_T^{-1} K' \left( \frac{X_t-Y_{t+1}}{\delta_T} \right)^2  \bh \bh^\top.
\end{align}
The following theorem shows consistency of the asymptotic covariance estimator without imposing our null hypothesis.
\begin{theorem}
	\label{thm:ConsistencyCovEstimation}
	Given that $X_t \in \mathcal{F}_t$, $\forall t \in \mathbb{N}$ and Assumption \ref{assu:ModeRationality}, 
	it holds that $\widehat{\Omega}_{T,\mathrm{Mode}} - \Omega_{T,\textrm{Mode}} \toP 0$. 
%
\end{theorem}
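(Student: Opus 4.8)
The plan is to follow the standard bias--variance decomposition for density-weighted kernel estimators, writing
\[
\widehat{\Omega}_{T,\mathrm{Mode}} - \Omega_{\mathrm{Mode}}
= \Big( \widehat{\Omega}_{T,\mathrm{Mode}} - \mathbb{E}\big[\widehat{\Omega}_{T,\mathrm{Mode}}\big] \Big)
+ \Big( \mathbb{E}\big[\widehat{\Omega}_{T,\mathrm{Mode}}\big] - \Omega_{\mathrm{Mode}} \Big),
\]
and showing that the deterministic ``bias'' term vanishes while the stochastic ``variance'' term converges to zero in probability. Since $\widehat{\Omega}_{T,\mathrm{Mode}}$ is a finite matrix, it suffices to treat a generic entry, i.e.\ to work with the scalar summands $W_{t,T} = \delta_T^{-1} K'\!\big( \varepsilon_t/\delta_T \big)^2 (\bh)_i (\bh)_j$, where $\varepsilon_t = X_t - Y_{t+1}$.

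For the bias term I would use that $\bh$ is $\mathcal{F}_t$-measurable and that $\varepsilon_t$ has conditional density $f_t$ given $\mathcal{F}_t$ (Assumption~\ref{assu:ModeRationality}, condition (A4)). The tower property gives $\mathbb{E}[W_{t,T}] = \mathbb{E}\big[ (\bh)_i(\bh)_j\, g_{T,t}\big]$ with $g_{T,t} = \int K'(u)^2 f_t(\delta_T u)\,\mathrm{d}u$, after the change of variables $u = e/\delta_T$ inside the conditional expectation. By continuity of $f_t$ at the origin and dominated convergence --- with dominating function $\|f_t\|_\infty K'(u)^2$, integrable because $K'$ is bounded and absolutely integrable (conditions (iv) and (vi)) --- one obtains $g_{T,t} \to f_t(0)\int K'(u)^2\,\mathrm{d}u$. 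A second application of dominated convergence at the outer level, legitimate since $\mathbb{E}\|\bh\|^2 < \infty$ by (A2) and $g_{T,t}$ is uniformly bounded, together with stationarity, yields $\mathbb{E}\big[\widehat{\Omega}_{T,\mathrm{Mode}}\big] \to \mathbb{E}\big[\bh\bh^\top f_t(0)\big]\int K'(u)^2\,\mathrm{d}u = \Omega_{\mathrm{Mode}}$.

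For the stochastic term the crucial point is a sharp moment bound that respects the weak moment hypothesis in (A2), which only provides $\mathbb{E}\|\bh\|^{2+\eta}<\infty$ for some $\eta>0$; a naive $L^2$ argument would require fourth moments of $\bh$ and is therefore unavailable. Instead I would set $r = 1 + \eta/2 \in (1,2]$ and show, by the same conditioning and change of variables, that $\mathbb{E}|W_{t,T}|^{r} = O\big(\delta_T^{-(r-1)}\big)\,\mathbb{E}\big| (\bh)_i (\bh)_j\big|^{r}$, where the integral $\int |K'(u)|^{2r}\,\mathrm{d}u$ is finite by boundedness and integrability of $K'$, and $\mathbb{E}| (\bh)_i (\bh)_j|^{r} \le \mathbb{E}\|\bh\|^{2+\eta} < \infty$. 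In the cross-sectional i.i.d.\ case the von Bahr--Esseen inequality for centered independent summands then gives
\[
\mathbb{E}\Big| \tfrac{1}{T}\sum_{t=1}^T \big(W_{t,T} - \mathbb{E}[W_{t,T}]\big)\Big|^{r}
\le 2\,T^{1-r}\,\max_t \mathbb{E}\big|W_{t,T}-\mathbb{E}[W_{t,T}]\big|^{r}
= O\big( (T\delta_T)^{-(r-1)} \big) \to 0,
\]
using $T\delta_T \to \infty$ from condition (A6)(i). Convergence in $L^{r}$ implies convergence in probability, which combined with the bias step completes the proof in the independent case.

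The main obstacle is the genuine time-series setting, where under stationarity and ergodicity (A1) the summands $W_{t,T}$ are dependent and, critically, the squared identification functions are \emph{not} a martingale difference sequence, so the von Bahr--Esseen bound does not apply directly and autocovariances $\Cov(W_{t,T},W_{t+h,T})$ must be controlled. The key structural observation I would exploit is that for $h \ge 1$ the factor $\delta_T^{-1} K'\!\big(\varepsilon_{t+h}/\delta_T\big)^2$ has $\mathcal{F}_{t+h}$-conditional expectation of order $O(1)$, whereas $\bh$, $\varepsilon_t$, and $\boldsymbol{h}_{t+h}$ are all $\mathcal{F}_{t+h}$-measurable; conditioning on $\mathcal{F}_{t+h}$ therefore shows that the cross moments stay bounded rather than exploding like $\delta_T^{-1}$, so only the diagonal ($h=0$) term carries the $\delta_T^{-1}$ blow-up that the $(T\delta_T)^{-1}$ rate absorbs. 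Making this covariance-summability argument fully rigorous under ergodicity alone is delicate; as the remark following the assumptions indicates, I would if necessary strengthen (A1) to $\alpha$-mixing (with the correspondingly strengthened moment condition) so that a mixing version of the Marcinkiewicz--Zygmund/covariance bound applies and delivers $\widehat{\Omega}_{T,\mathrm{Mode}} - \mathbb{E}\big[\widehat{\Omega}_{T,\mathrm{Mode}}\big] \toP 0$ via Markov's inequality.
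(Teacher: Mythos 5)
Your bias step and your treatment of the i.i.d.\ case are fine and match the paper's computations, but your proof does not establish the theorem under the stated Assumption \ref{assu:ModeRationality}: you explicitly concede that under stationarity and ergodicity alone (condition (A1)) your variance argument is incomplete and would require strengthening (A1) to $\alpha$-mixing. The source of the difficulty is your choice of centering. By subtracting the \emph{unconditional} mean $\mathbb{E}[W_{t,T}]$ you destroy the martingale structure, and you are then forced to control autocovariances of the $\delta_T^{-1}$-inflated summands, which is exactly the delicate step you cannot close.

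The paper's proof avoids this entirely by centering at the \emph{conditional} expectation $\mathbb{E}_t[\cdot]=\mathbb{E}[\cdot\mid\mathcal{F}_t]$. Write, for fixed $\lambda$ with $\|\lambda\|_2=1$, $W_{t,T}=\delta_T^{-1}K'(\varepsilon_t/\delta_T)^2(\lambda^\top\bh)^2$ and decompose $\lambda^\top(\widehat{\Omega}_{T,\mathrm{Mode}}-\Omega_{\mathrm{Mode}})\lambda$ into $\tfrac{1}{T}\sum_{t}\big(W_{t,T}-\mathbb{E}_t[W_{t,T}]\big)$ plus $\tfrac{1}{T}\sum_{t}\mathbb{E}_t[W_{t,T}]-\lambda^\top\Omega_{\mathrm{Mode}}\lambda$. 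Since $\bh$ is $\mathcal{F}_t$-measurable and $\varepsilon_t$ is $\mathcal{F}_{t+1}$-measurable, the first sum is an exact martingale difference array with respect to $\mathcal{F}_{t+1}$, so the von Bahr--Esseen inequality for MDS applies with some $p\in(1,2)$ small enough and yields a bound of order $(T\delta_T)^{1-p}\to 0$; no independence or mixing is needed, only the moment condition (A2) and $T\delta_T\to\infty$. The second sum equals $\tfrac{1}{T}\sum_{t}(\lambda^\top\bh)^2\int K'(u)^2 f_t(\delta_T u)\,\mathrm{d}u$ after a change of variables; its summands are bounded by a constant times $(\lambda^\top\bh)^2$ uniformly in $T$ (no $\delta_T^{-1}$ blow-up), so the weak law of large numbers for stationary ergodic sequences under (A1), together with $f_t(\delta_T u)\to f_t(0)$, gives convergence to $\mathbb{E}[(\lambda^\top\bh)^2 f_t(0)]\int K'(u)^2\,\mathrm{d}u$. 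Your observation that conditioning tames the cross-moments is pointing at exactly this structure; the missing move is to build the conditional expectation into the decomposition itself rather than trying to bound covariances of the unconditionally centered terms.
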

We can now define the Wald test statistic:
\begin{align}
J_T =  	\left( \delta_T^{3/2}  T^{-1/2} \sum_{t=1}^T \psi \big( Y_{t+1},X_t,\bh, \delta_T \big) \right)^\top \widehat{\Omega}_{T,\mathrm{Mode}}^{-1} \left( \delta_T^{3/2}  T^{-1/2} \sum_{t=1}^T \psi \big( Y_{t+1},X_t,\bh, \delta_T \big) \right).
\end{align}
The following statement follows directly from Theorem \ref{thm:ModeRationality} and Theorem \ref{thm:ConsistencyCovEstimation}.
\begin{corollary}
	\label{cor:TestStatisticChiSquared}
	Under Assumption \ref{assu:ModeRationality} and the null hypothesis $\mathbb{H}_0:  X_t = \mode( Y_{t+1} | \mathcal{F}_t ) ~ \forall ~ t$ a.s., it holds that $J_T \tod \chi^2_k$.
\end{corollary}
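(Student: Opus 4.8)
The plan is to treat the corollary as a routine consequence of the two preceding theorems, combined through Slutsky's theorem and the continuous mapping theorem, once invertibility of the limiting covariance matrix is secured. Write $S_T = \delta_T^{3/2} T^{-1/2} \sum_{t=1}^T \psi(Y_{t+1}, X_t, \bh, \delta_T)$, so that $J_T = S_T^\top \widehat{\Omega}_{T,\mathrm{Mode}}^{-1} S_T$. Theorem \ref{thm:ModeRationality} supplies $S_T \tod \mathcal{N}(0, \Omega_{\mathrm{Mode}})$ and Theorem \ref{thm:ConsistencyCovEstimation} supplies $\widehat{\Omega}_{T,\mathrm{Mode}} \toP \Omega_{\mathrm{Mode}}$; the remaining task is to fuse these into the stated $\chi^2_k$ limit.

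First I would verify that $\Omega_{\mathrm{Mode}} = \mathbb{E}[\bh\bh^\top f_t(0)] \int K'(u)^2 \mathrm{d}u$ is positive definite, and hence invertible. Under $\mathbb{H}_0$ the forecast $X_t$ equals the mode of $Y_{t+1}$ given $\mathcal{F}_t$, so $0$ is the conditional mode of the forecast error $\varepsilon_t = X_t - Y_{t+1}$ (since its density satisfies $f_t(e) = f_{Y_{t+1}\mid\mathcal{F}_t}(X_t - e)$); for a weakly unimodal density this is the point of maximal density, whence $f_t(0) = \sup_x f_t(x) > 0$ almost surely. For any nonzero $c \in \mathbb{R}^k$ one then has $c^\top \mathbb{E}[\bh\bh^\top f_t(0)] c = \mathbb{E}[f_t(0)(c^\top \bh)^2] > 0$, because $c^\top \bh \neq 0$ with positive probability by the full-rank condition \ref{assu:FullRank}, and $\int K'(u)^2 \mathrm{d}u > 0$ since the differentiable kernel is non-constant. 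Thus $\Omega_{\mathrm{Mode}}$ is positive definite, and because matrix inversion is continuous on the open cone of positive definite matrices, the consistency from Theorem \ref{thm:ConsistencyCovEstimation} upgrades to $\widehat{\Omega}_{T,\mathrm{Mode}}^{-1} \toP \Omega_{\mathrm{Mode}}^{-1}$ (the estimator being invertible with probability tending to one).

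With invertibility in hand, I would invoke Slutsky's theorem to get joint convergence $(S_T, \widehat{\Omega}_{T,\mathrm{Mode}}^{-1}) \tod (Z, \Omega_{\mathrm{Mode}}^{-1})$ with $Z \sim \mathcal{N}(0, \Omega_{\mathrm{Mode}})$, and then the continuous mapping theorem for the map $(s, A) \mapsto s^\top A s$ to conclude $J_T \tod Z^\top \Omega_{\mathrm{Mode}}^{-1} Z$. The argument closes with the standard Gaussian quadratic-form identity: since $Z \sim \mathcal{N}(0, \Omega_{\mathrm{Mode}})$ with $\Omega_{\mathrm{Mode}}$ positive definite and $Z$ of dimension $k$, whitening via $\Omega_{\mathrm{Mode}}^{-1/2}$ gives $Z^\top \Omega_{\mathrm{Mode}}^{-1} Z \sim \chi^2_k$.

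Because Theorems \ref{thm:ModeRationality} and \ref{thm:ConsistencyCovEstimation} carry all the analytic weight, no substantive obstacle remains; the only point requiring care is the positive definiteness of $\Omega_{\mathrm{Mode}}$, and within it the claim $f_t(0) > 0$. That step rests on the null hypothesis placing the mode of the forecast error at the origin together with weak unimodality, so I would confirm that the standing assumptions indeed guarantee a well-defined conditional mode with positive maximal density before treating $f_t(0) > 0$ as automatic.
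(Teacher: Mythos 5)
Your proposal is correct and follows exactly the route the paper intends: the paper offers no separate proof, stating only that the corollary ``follows directly'' from Theorems \ref{thm:ModeRationality} and \ref{thm:ConsistencyCovEstimation}, and your Slutsky/continuous-mapping/Gaussian-quadratic-form argument is precisely that routine combination. Your additional verification that $\Omega_{\mathrm{Mode}}$ is positive definite (via $f_t(0)>0$ a.s.\ under the null and the full-rank condition \ref{assu:FullRank}) correctly fills in the one detail the paper leaves implicit.
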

This corollary justifies an asymptotic test at level $\alpha \in (0,1)$ which rejects $\mathbb{H}_0$ when \linebreak $J_T > Q_k(1-\alpha)$, where $Q_k(1-\alpha)$ denotes the $(1-\alpha)$ quantile of the $\chi^2_{k}$ distribution.\footnote{Our focus on one-step ahead forecasts allows for the application of a central limit theorem for MDAs, which only requires the existence of second moments. Multi-step ahead forecasts, on the other hand, are usually handled via CLTs for processes with more memory (e.g., mixing processes) at a cost of imposing stronger moment conditions. We leave this extension for future research.} Note that the bandwidth parameter, $\delta_T$, is introduced only to conduct the test of mode forecast rationality; the forecast itself, $X_t$, is, under the null, the true conditional mode of the target variable, not a (smoothed) modal midpoint.


We now turn to the behavior of our test statistic $J_T$ under a local alternative hypothesis,
\begin{align}
	\label{eqn:LocalAlternativeHypothesis}
	\mathbb{H}_{A, \text{loc}}: \frac{1}{T} \sum_{t=1}^T f'_t(0) \bh  = c \cdot a_T + o_P(1),
\end{align}
\edit{for some forecast $X_t \in \mathcal{F}_t$,} some constant $c \in \mathbb{R}^k$ and some possibly stochastic sequence $a_T$ to be specified in the following Theorem \ref{thm:ModeRationalityLocalAlternative}, which characterizes the behavior of our mode rationality test under the local alternative.
\begin{theorem}
	\label{thm:ModeRationalityLocalAlternative} 
	\edit{Assume that $T \delta_T^3 \to \infty$, Assumption \ref{assu:ModeRationality} and the alternative $\mathbb{H}_{A, \text{loc}}$ in \eqref{eqn:LocalAlternativeHypothesis} hold for the forecasts $X_t \in \mathcal{F}_t$. Then:}
		

	\begin{enumerate}[label=(\alph*)]
		\item 
		\label{enm:LocPower}
		If $a_T T^{1/2} \delta_T^{3/2} \toP 1$, then $J_T \tod \chi^2_k \big( c^\top \Omega_{\textrm{Mode}}^{-1} c \big)$, 	where $\chi^2_k \big( \tilde c \big)$ denotes a $\chi^2_k$ distribution with non-centrality parameter $\tilde c \in \mathbb{R}$.
		
		\item 
		\label{enm:LocPower_Null}
		If $a_T T^{1/2} \delta_T^{3/2} \toP 0$, then $J_T \tod \chi^2_k$.
		
		\item 
		\label{enm:LocPower_UnifPower}
		If $\big( a_T T^{1/2} \delta_T^{3/2} \big)^{-1} \toP 0$, then $\mathbb{P} \left(  J_T  \ge \bar c \right) \to 1$ for any  $\bar c > 0$, i.e., we have uniform power. 
	\end{enumerate}
\end{theorem}

This theorem shows that the power of our mode rationality test is essentially driven by the condition that $\frac{1}{T} \sum_{t=1}^T f'_t(0) \bh$ is non-zero, which implies that $X_t$ cannot be the mode of $F_t$.
It essentially means that  the instruments $\bh$ are correlated with the conditional density slope at zero, $f'_t(0)$. 
This is analogous to the case for standard mean and median rationality tests which have power against the alternative that $\mathbb{E} \big[V(X_t, Y_{t+1}) \bh] \not= 0$, where $V(X_t, Y_{t+1})$ is the identification function for the mean or median.\footnote{See e.g.\ Theorem 2 of \cite{giacomini2006tests}, where their Comment 6 and \cite{Nolde2017} point out that the theory can be directly adapted to rationality testing.}
When the instruments include a constant, a sufficient condition for a global alternative hypothesis, where $a_T$ is constant, is $f'_t(0) \ge c > 0$ ($f'_t(0) \le -c < 0$) for all $t \in \mathbb{N}$, i.e.\ when all forecasts are issued to the left (right) of the mode. This can be interpreted as uniform power against the class of strongly unimodal distributions. Our test will have low power to reject forecasts that are far in the tail, where the density is (almost) flat, however our test is intended for forecasts of a measure of central tendency, and such forecasts will lie broadly in the central region of the distribution where this concern does not arise.

Part \ref{enm:LocPower} of Theorem \ref{thm:ModeRationalityLocalAlternative} shows that if $\frac{1}{T} \sum_{t=1}^T f'_t(0) \bh$ converges at rate $T^{-1/2} \delta_T^{-3/2}$, then the asymptotic distribution of $J_T$ stabilizes as a non-central $\chi^2$-distribution, implying that for a fixed alternative, our test statistic diverges at rate $T^{1/2} \delta_T^{3/2}$, which is approximately  $T^{2/7}$ in practice.
In contrast, it is easy to show that rationality tests for identifiable functionals as the mean and median have local power against alternatives that converge with rate $T^{-1/2}$, which is of course faster than the rate $T^{-1/2} \delta_T^{-3/2}$ of the mode test . We demonstrate in the simulations and the applications that our mode test nevertheless has satisfactory power in practice in typically encountered sample sizes.

Part \ref{enm:LocPower_Null}  of Theorem \ref{thm:ModeRationalityLocalAlternative} further shows that  if $\frac{1}{T} \sum_{t=1}^T f'_t(0) \bh$ converges to zero \emph{faster} than $T^{-1/2} \delta_T^{-3/2}$, then our test behaves as under the null and has no power.
Finally, part \ref{enm:LocPower_UnifPower} implies that our test has unit power asymptotically when $\frac{1}{T} \sum_{t=1}^T f'_t(0) \bh$ converges to zero slower than $T^{-1/2} \delta_T^{-3/2}$, which nests the classical case of a fixed, global alternative.

\edit{
Following \cite{Kemp2012} and \cite{Kemp2019}, we choose $\delta_T$ proportional to $T^{-0.143}$, which is almost $T^{-1/7}$ as required by  Assumption \ref{assu:ModeRationality}. Specifically, we set \linebreak $\delta_T = k_1 \cdot k_2 \cdot T^{-0.143}$ where $k_1 = 2.4\widehat{\operatorname{Med}} \big( \big| \varepsilon_{1:T} - \widehat{\operatorname{Med}} [ \varepsilon_{1:T} ] \big| \big)$, $k_2 = \exp(-3 \left| \hat \gamma \right|)$, and $\hat \gamma =  3 \, \big( \frac{1}{T} \sum_t \varepsilon_t - \widehat{\operatorname{Med}} [\varepsilon_{1:T}] \big)/\widehat{\sigma}(\varepsilon_{1:T})$, where $\widehat{\sigma}$ and  $\widehat{\operatorname{Med}}$ denote the sample standard deviation and median. As in \cite{Kemp2012} and \cite{Kemp2019}, we choose $k_1$ proportional to the median absolute deviation of the forecast error, a robust measure of variation, and introduce a second constant, $k_2$, to adjust for the skewness of the forecast error, measured by the absolute value of Pearson's second skewness coefficient, $\hat \gamma$. We illustrate the good finite sample properties of our bandwidth choice through simulations in Supplemental Appendix \ref{sec:BandwidthChoice}.}

\section{Evaluating Forecasts of Measures of Central Tendency} 
\label{sec:FunctionalCentrality}

We define a class of measures of central tendency nesting the mean, median, and mode, and we propose tests of whether a forecast is rational with respect to \textit{any} element of the class.
Formally, we consider convex combinations of loss functions pertaining to the mean, median, and mode:  $L_{\mathrm{Mean}}(x,y) = (x-y)^2$, $L_{\mathrm{Med}}(x,y) = |x-y|$, $L_{\mathrm{Mode}, \delta}(x,y) =  -\delta^{1/2} K\left( (x-y)/\delta \right)$, for some kernel $K$.\footnote{Note that $L_{\mathrm{Mode}, \delta}$ differs from $L^K_{\delta}$ in equation (\ref{eqn:LossGenModalInt}) by a scaling factor of $\delta^{3/2}$. This arises from the convergence rate presented in Theorem \ref{thm:ModeRationality}.} 
Each  vector in the unit simplex, $\Theta := \{ \theta \in \mathbb{R}^3: ||\theta||_1 = 1, \theta \ge 0 \}$, \edit{together with a positive weight vector $\big(\WMean, \WMed, \WMode\big) \in \mathbb{R}_+^3$,} generates an optimal forecast 
\begin{align}
	\label{eqn:CentralityAssumptionLoss}
	X_t^*(\theta) = \argmin_{X \in \mathcal{F}_t} ~ \mathbb{E}_t \big[  \edit{\WMean} L_{\mathrm{Mean}}( X, Y_{t+1}),   \edit{\WMed}  L_{\mathrm{Med}}( X, Y_{t+1}), \edit{\WMode}  L_{\mathrm{Mode}, \delta}( X, Y_{t+1})   \big]\cdot \theta \,
\end{align}
where we minimize over \edit{all $\mathcal{F}_t$-measurable potential forecasts $X$.}
\edit{The scalar weights $\WMean$, $\WMed$, and $\WMode$ can be used to equalize the influence of the different loss functions, discussed further below.}
At the vertices of $\Theta$, this nests the mean, median, and generalized modal midpoint, the latter tending to the mode as $\delta \to 0$; see Theorem \ref{thm:GenModalMidpointProperties}.

Assuming that a forecaster generates her forecasts $X_t = X_t^\ast(\theta_0)$ according to \eqref{eqn:CentralityAssumptionLoss}, we aim to determine the set of values for $\theta_0$ for which the given forecasts are optimal. Tests of forecast optimality are commonly conducted via an unconditional moment condition obtained by interacting an $\mathcal{F}_t$-measurable $(k \times 1)$ vector of instruments $\bh$ with the first order condition of the optimal forecast \citep{EKT2005, Nolde2017}. The latter also arise in the rationality tests introduced in equations  \eqref{eqn:WaldTestStatMean} and \eqref{eqn:PsiMode}. We continue in this tradition and test rationality across our class of central tendency measures by examining the variable $\phi_{t,T}(\theta)$, defined as:
\begin{equation}
	\phi_{t,T}(\theta) :=  
	\bh \; \theta^\top 
	\begin{pmatrix}
		\WMean \, V_{\mathrm{Mean}}( X_t, Y_{t+1}) \\
		\WMed \, V_{\mathrm{Med}}( X_t, Y_{t+1}) \\
		\WMode \, V_{\mathrm{Mode,\delta_T}}( X_t, Y_{t+1}) 
	\end{pmatrix}, \quad \text{for~~} \theta \in \Theta.
	\label{eqn:phi_t}
\end{equation}
The identification functions $V_{\mathrm{Mean}}$ and $V_{\mathrm{Med}}$ are given in equations \eqref{eq:MeanIF} and \eqref{eq:MedIF}, and $V_{\mathrm{Mode},\delta}( x,y) =  - \delta^{-1/2}  K' \left( (x-y)/\delta \right)$ is an appropriately scaled version of \eqref{eq:ModeIF} that guarantees asymptotic normality in Theorem \ref{thm:ModeRationality} and Theorem \ref{thm:GMMWeakId} below. \edit{Under forecast rationality, there must exist a $\theta_0$ (possibly set-valued) such that $\phi_{t,T}(\theta_0)$ is ``small'' on average. We formalize this statement in Assumption \ref{assu:GMMWeakIDRegCond} below.} Notice that $\phi_{t,T}$ is a triangular array, depending on $t$ and $T$, through its dependence on the bandwidth $\delta_T$.

\begin{remark}
	\label{rem:ConvexFunctionalCombinations}
	\edit{The forecast $X_t^*(\theta)$ from equation (\ref{eqn:CentralityAssumptionLoss}) is optimal with respect to a convex combination of loss functions.}
	Intuitively one might be inclined to consider a convex combination of the functional values rather than of the associated loss functions, however such functionals are generally neither elicitable nor identifiable (see Proposition \ref{prop:convexvalues} in the Supplemental Appendix), rendering infeasible an extension of the rationality tests introduced in Section \ref{sec:GeneralRatTests} to this case. However it can be shown (Proposition \ref{prop:BetaThetaMapping}) that any forecast following equation \eqref{eqn:CentralityAssumptionLoss} lies between the mean, median and generalized modal midpoint, though with combination weights for loss functions that generally differ from the combination weights for the functional values.
\end{remark} 
\begin{remark} 
	\edit{It is possible to consider a probabilistic mixture of mean, median, and mode type forecasts, discussed further in Supplemental Appendix \ref{sec:HeterogeneityForecasters}. Such forecasts are consistent with the forecast rationality moment condition we test if only a constant is used as instrument (Proposition \ref{prop:GammaThetaMapping}), but will generally be rejected for multivariate instruments.
	} 
\end{remark}

In \edit{equations \eqref{eqn:CentralityAssumptionLoss} and \eqref{eqn:phi_t}}, we allow for normalizations of each \edit{loss and} identification function using the scalar weights $\WMean$, $\WMed$ and $\WMode$. This allows one to adjust the importance of each loss function in order to construct tests that are robust to linear data transformations. To do so, we use the inverse of the standard deviations of the respective identification functions in our empirical analysis, though one could instead use equal weights, or some other choice. We define $\widehat \phi_{t,T}(\theta)$ as in equation \eqref{eqn:phi_t}, but using sample-dependent weights $\WMeanHat$, $\WMedHat$ and $\WModeHat$.

Consider the GMM objective function based on $\widehat \phi_{t,T}(\theta)$:
\begin{align}
\label{eqn:GMMWeakIdTestStatistic}
S_T(\theta) = \left[ T^{-1/2} \sum_{t=1}^T \widehat \phi_{t,T}(\theta) \right]^\top  \widehat{\Sigma}_T^{-1}( \theta) \left[  T^{-1/2} \sum_{t=1}^T \widehat \phi_{t,T}(\theta) \right],
\end{align}
where $\widehat{\Sigma}_T^{-1}(\theta)$ denotes an $O_P(1)$ positive definite weighting matrix, which may depend on the parameter $ \theta$. Unlike the problem in \cite{EKT2005}, the unknown parameter in our framework (the weight vector $\theta$) cannot be assumed to be well identified. For example, for symmetric distributions, the combination weights are completely unidentified. For distributions that exhibit only mild asymmetry a weak identification problem arises. For asymmetric distributions where one measure is a convex combination of the other two (a situation that arises naturally in location-scale processes) we have partial identification of the weight vector. 
The distribution of economic variables may or may not exhibit asymmetry, and so addressing this identification problem is a first-order concern.

The possibility that the true parameter $\theta_0$ is unidentified, partially identified, or weakly identified implies that the objective function $S_T(\theta)$ may be flat or almost flat in a neighborhood of $\theta_0$, ruling out consistent estimation of $\theta_0$. \cite{StockWright2000} show that, under regularity conditions, we can nevertheless construct asymptotically valid confidence bounds for $\theta_0$, by showing that the objective function $S_T$ evaluated at $\theta_0$ continues to exhibit an asymptotic $\chi^2$ distribution. This facilitates the construction of asymptotically valid confidence bounds even in a setting where the parameter vector may be strongly identified, weakly identified, or unidentified.\footnote{Alternative approaches to estimate the confidence sets under partial identification include \cite{Kleibergen2005}, \cite{CHT07}, \cite{BM08} and \cite{Chen2018} among others.}
We further impose the following regularity conditions on our process.
\begin{assumption}	
	\label{assu:GMMWeakIDRegCond}
	\textup{(A)} $\mathbb{E} \left[ |\varepsilon_t|^{2+\delta} \right] < \infty$ and 
	$\mathbb{E} \left[ ||\bh||^{2+\delta} |\varepsilon_t|^{2+\delta} \right] < \infty$,
	\textup{(B)}  $\WMeanHat \toP \WMean$, $\WMedHat \toP \WMed$, and $\WModeHat \toP \WMode$ for some positive weights $\WMean$, $ \WMed$ and $\WMode$;
%
	\textup{(C)} 
	The limit of $\Sigma_T(\theta_0)$ defined in \eqref{eq:SigmaT} is positive definite.
	\textup{(D)} \edit{For the forecasts $X_t \in \mathcal{F}_t$, $t\in \mathbb{N}$,} there exists a \edit{(not necessarily unique)} $\theta_0 \in \Theta$ and triangular arrays $\phi_{t,T}^\ast(\theta_0)$ and $u_{t,T}(\theta_0)$, such that 
	\begin{equation}
		\label{eqn:AsymptoticMDA} \phi_{t,T}(\theta_0) = \phi_{t,T}^\ast(\theta_0) +  u_{t,T}(\theta_0), \qquad \text{ where }
	\end{equation}
	\begin{enumerate}[label=(\alph*), leftmargin=1.0cm]
		\item 
		$\big\{ T^{-1/2} \phi_{t,T}^\ast(\theta_0), \mathcal{F}_{t+1} \big\}$ is a martingale difference array,
		
		\item 
		$T^{-1}  \sum_{t=1}^T || u_{t,T}(\theta_0) ||^2  \toP 0$, and 
		$\sum_{t=1}^T \mathbb{E} \left[ || T^{-1/2} u_{t,T}(\theta_0)||^{2+\delta} \right] \to 0$,
		
		\item 
		$T^{-1} \sum_{t=1}^T u_{t,T}(\theta_0)  \phi_{t,T} (\theta_0)^\top \toP 0$ and 	 
		$T^{-1} \sum_{t=1}^T \mathbb{E} \left[ u_{t,T}(\theta_0)  \phi_{t,T} (\theta_0)^\top \right] \to 0$.
	\end{enumerate}
\end{assumption}


\edit{Conditions (A), (B), and (C) are standard regularity conditions, and Assumption (D) represents the null hypothesis of forecast rationality that we test. Importantly, these conditions allow for heterogeneous predictive distributions, $F_t$. }
\edit{The scaling with $T^{-1/2}$ in Assumption \ref{assu:GMMWeakIDRegCond} (D)(a) 
	matches the normalization in the CLT we employ (\citealp[Theorem 24.3]{davidson1994stochastic}).}
The decomposition in equation (\ref{eqn:AsymptoticMDA}) implies that the array $T^{-1/2}  \phi_{t,T}(\theta_0)$ is an \textit{approximate} MDA in the sense that $T^{-1/2} \phi_{t,T}(\theta_0)$ can be decomposed into a MDA $T^{-1/2} \phi^\ast_{t,T}(\theta_0)$ and some asymptotically vanishing array $T^{-1/2} u_{t,T}(\theta_0)$. This decomposition is required as the two standard assumptions---mixing and \textit{exact} MDA conditions---are too restrictive for our application.
First, the assumption that $\big\{ T^{-1/2}  \phi_t(\theta_0), \mathcal{F}_{t+1} \big\}$ is a MDA does not hold for the baseline case that $X_t$ is an optimal mode forecast (see the proof of Theorem \ref{thm:ModeRationality} for details). 
Second, imposing mixing conditions is too weak for our case as CLTs for mixing processes generally require finite moments of some order $r>2$, which is not fulfilled for the mode case as these moments diverge through the bandwidth parameter $\delta_T$.





The intermediate case of Assumption \ref{assu:GMMWeakIDRegCond} (D) allows to apply a CLT based on finite second moments only, and 
can easily be shown to hold at the three vertices, where the forecast is the mean, median or mode. Specifically, when $X_t$ is a mean or median forecast (i.e.\ $\theta_0 = (1,0,0)$ or $\theta_0 = (0,1,0)$), set $u_{t,T}(\theta_0) = 0$ and $\big\{ T^{-1/2} \phi_{t,T}(\theta_0), \mathcal{F}_{t+1} \big\}$ is obviously a MDA. When $X_t$ is the true conditional mode of $Y_{t+1}$, (i.e.\ $\theta_0 = (0,0,1)$), set
\begin{align}
	\label{eqn:u_Mode}
	u_{t,T}(\theta_0) 
	=  \mathbb{E}_t \left[ \phi_{t,T}(\theta_0) \right] 
	= - \omega_\text{Mode} \delta_T^{-1/2} \mathbb{E}_t \left[ K' \left( \frac{\varepsilon_t}{\delta_T} \right) \right] \bh.
\end{align}
Then, the conditions on $u_{t,T}(\theta_0)$ in Assumption \ref{assu:GMMWeakIDRegCond} (D) are fulfilled as shown in Lemma \ref{lem:Ass_TrueMode} in the Supplemental Appendix.


When $X_t$ is a convex combination of a mean and median forecast, i.e., $\theta_0 = (\xi,1-\xi,0)$ for some $\xi \in [0,1]$, we set $u_{t,T}(\theta_0) = 0$ and $\big\{T^{-1/2} \phi_{t,T}(\theta_0), \mathcal{F}_{t+1} \big\}$ is again a MDA. When $X_t$ is a convex combination with non-zero weight on the mode, Assumption \ref{assu:GMMWeakIDRegCond} (D) is difficult to verify. 

Theorem \ref{thm:GMMWeakId} below presents the asymptotic distribution of the process $T^{-1/2}  \sum_{t=1}^T \widehat \phi_{t,T}(\theta_0)$ at the true parameter $\theta_0$. 
\begin{theorem}
	\label{thm:GMMWeakId} 
	\edit{Under Assumptions \ref{assu:ModeRationality} and \ref{assu:GMMWeakIDRegCond},}
		it holds that
	\begin{align}
		T^{-1/2}  \sum_{t=1}^T \widehat \phi_{t,T}(\theta_0) \tod \mathcal{N} \big( 0, \Sigma(\theta_0) \big),
	\end{align}
	where $\Sigma(\theta_0)$ is the limit as $T \to \infty$ of
	\begin{align}
		\begin{aligned}
		\label{eq:SigmaT}
		\Sigma_T(\theta_0)
		&:= \frac{1}{T} \sum_{t=1}^T \mathbb{E} \left[ \theta_{10}^2 \WMean \bh \bh^\top \WMean \varepsilon_t^2 + \theta_{20}^2 \WMed \bh \bh^\top \WMed \big( \mathds{1}_{\{\varepsilon_t > 0\}} - \mathds{1}_{\{\varepsilon_t < 0 \}} \big)^2 \right. \\
		&\qquad \qquad \qquad + \theta_{30}^2 \WMode \bh \bh^\top \WMode f_t(0) \int K'(u)^2 \mathrm{d}u  \\
		&\qquad \qquad \qquad \left.  + 2 \theta_{10} \theta_{20} \WMean  \bh \bh^\top \WMed
		\varepsilon_t \big( \mathds{1}_{\{\varepsilon_t > 0\}} - \mathds{1}_{\{\varepsilon_t < 0 \}} \big) \right].
		\end{aligned}
	\end{align}
\end{theorem}

Under the null hypothesis, Assumption \ref{assu:GMMWeakIDRegCond} (D) implies that $T^{-1/2} \phi_{t,T}(\theta_0)$ (and hence also $T^{-1/2}  \widehat \phi_{t,T}(\theta_0) $) is an approximate MDA, i.e.\ this array is approximately  (as $T \to \infty$) uncorrelated. Consequently, we do not need to rely on HAC covariance estimation, and can instead estimate the asymptotic covariance matrix using the simple \edit{sample covariance matrix $\widehat{\Sigma}_T(\theta) = \frac{1}{T} \sum_{t=1}^T \widehat \phi_{t,T}(\theta) \widehat \phi_{t,T}(\theta)^\top$.}
The next theorem shows consistency of the outer product covariance estimator.
\begin{theorem}
	\label{thm:GMMWeakIDCovarianceConsistency}
	Given Assumptions \ref{assu:ModeRationality} and \ref{assu:GMMWeakIDRegCond}, it holds that $\widehat{\Sigma}_T(\theta_0) - \Sigma_T(\theta_0) \toP 0$.
\end{theorem}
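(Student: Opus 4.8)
The plan is to expand the rank-one outer product defining each summand of $\widehat{\Sigma}_T(\theta_0)$ into blocks built from the three identification functions, reduce the estimated weight matrices to their limits, and then establish a separate law of large numbers for each of the six resulting core averages. Write $\widehat\phi_{t,T}(\theta_0) = \theta_{10}\,\WMeanHat\bh\,\varepsilon_t + \theta_{20}\,\WMedHat\bh\, s_t + \theta_{30}\,\WModeHat\bh\, m_{t,T}$, where $s_t = \mathds{1}_{\{\varepsilon_t>0\}}-\mathds{1}_{\{\varepsilon_t<0\}}$ and $m_{t,T} = \delta_T^{-1/2}K'(-\varepsilon_t/\delta_T)$. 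Expanding the outer product, $\widehat{\Sigma}_T(\theta_0)$ is a sum of three diagonal blocks and three symmetric cross blocks, each of the form $\widehat{\boldsymbol W}_i\big(T^{-1}\sum_{t=1}^T \bh\bh^\top g_t^{(ij)}\big)\widehat{\boldsymbol W}_j$, where $g_t^{(ij)}$ is a product of two of the scalars $\varepsilon_t, s_t, m_{t,T}$. Since Assumption~\ref{assu:GMMWeakIDRegCond}(B) gives $\WMeanHat\toP\WMean$, $\WMedHat\toP\WMed$ and $\WModeHat\toP\WMode$, Slutsky's lemma lets me pull the estimated matrices outside once each core average $T^{-1}\sum_t\bh\bh^\top g_t^{(ij)}$ is shown to converge; establishing these limits is the crux of the argument.

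For the three blocks that do not involve the mode, the summands $\bh\bh^\top\varepsilon_t^2$, $\bh\bh^\top s_t^2$ and $\bh\bh^\top\varepsilon_t s_t$ do not depend on $T$, so the ergodic theorem applies directly under the stationarity and ergodicity of Assumption~\ref{assu:ModeRationality}. They converge in probability to $\mathbb{E}[\bh\bh^\top\varepsilon_t^2]$, to $\mathbb{E}[\bh\bh^\top]$ (using $s_t^2=1$ almost surely, since $\varepsilon_t$ is continuous), and to $\mathbb{E}[\bh\bh^\top|\varepsilon_t|]$, with the required integrability supplied by Assumption~\ref{assu:GMMWeakIDRegCond}(A) via Hölder's inequality. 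The mode-mode block, by contrast, is a genuine triangular array through $\delta_T$, so the ergodic theorem cannot be used; instead $T^{-1}\sum_t\bh\bh^\top\delta_T^{-1}K'(-\varepsilon_t/\delta_T)^2$ coincides with $\widehat\Omega_{T,\mathrm{Mode}}$ (the sign of the kernel argument being immaterial after squaring), and Theorem~\ref{thm:ConsistencyCovEstimation} already delivers its limit $\mathbb{E}[\bh\bh^\top f_t(0)]\int K'(u)^2\,\mathrm{d}u$, which produces the $\theta_{30}^2$ term.

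The two cross blocks pairing the mode factor $m_{t,T}$ with the mean or median factor must vanish, and this is the main obstacle: each summand is of order $\delta_T^{-1/2}$ pointwise and there is no termwise smallness to exploit, so the cancellation comes entirely from averaging. I would bound these averages in $L^1$ and invoke Markov's inequality. Conditioning on $\mathcal{F}_t$ and substituting $u = -\varepsilon_t/\delta_T$ turns $\mathbb{E}_t[\,|\varepsilon_t|\,|m_{t,T}|\,]$ into $\delta_T^{3/2}\int |u|\,|K'(u)|\,f_t(-\delta_T u)\,\mathrm{d}u$ and $\mathbb{E}_t[\,|m_{t,T}|\,]$ into $\delta_T^{1/2}\int|K'(u)|\,f_t(-\delta_T u)\,\mathrm{d}u$; boundedness of the conditional density (Assumption~\ref{assu:ModeRationality}, smoothness of $f_t$) together with the kernel integrability conditions (in particular $\int|K'|<\infty$, satisfied by the Gaussian kernel adopted) makes these conditional expectations $O(\delta_T^{3/2})$ and $O(\delta_T^{1/2})$ uniformly. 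Hence $T^{-1}\sum_t\mathbb{E}\big\|\bh\bh^\top\varepsilon_t m_{t,T}\big\| = O(\delta_T^{3/2})\,\mathbb{E}\|\bh\|^2$ and the median pairing is $O(\delta_T^{1/2})\,\mathbb{E}\|\bh\|^2$; since $\delta_T\to0$ (implied by the bandwidth assumption) and $\mathbb{E}\|\bh\|^2<\infty$ (Assumption~\ref{assu:ModeRationality}, moment condition), both cross blocks converge in probability to zero.

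Finally I would reassemble the six limits, reattach the weight-matrix limits $\WMean,\WMed,\WMode$ by Slutsky, and verify that the surviving diagonal blocks and the single mean-median cross block reproduce $\Sigma(\theta_0)$ exactly as displayed in Theorem~\ref{thm:GMMWeakId}. The genuinely delicate step is the vanishing of the two mode cross blocks, where the bandwidth scaling forces an averaging argument rather than a pointwise bound; the mode diagonal block is tractable only because Theorem~\ref{thm:ConsistencyCovEstimation} has already carried out the corresponding triangular-array limit.
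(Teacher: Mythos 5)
Your proof is correct and rests on the same six-block expansion of $\widehat\phi_{t,T}(\theta_0)\widehat\phi_{t,T}(\theta_0)^\top$ that the paper uses, but it handles the one technically delicate step --- the two mode cross blocks --- by a genuinely different and more elementary argument. The paper disposes of these blocks ``by similar arguments as in Lemma \ref{lemma:WeakIdGMM_phito1}'': it splits each cross average into its conditional mean (shown to be $O(\delta_T^{3/2})$ after the change of variables) plus a centered martingale-difference array whose $L^p$ norm is controlled with the von Bahr--Esseen inequality. You instead note that the uniform bounds $\mathbb{E}_t\big[|\varepsilon_t|\,\delta_T^{-1/2}|K'(-\varepsilon_t/\delta_T)|\big]=O(\delta_T^{3/2})$ and $\mathbb{E}_t\big[\delta_T^{-1/2}|K'(-\varepsilon_t/\delta_T)|\big]=O(\delta_T^{1/2})$ make the entire cross averages vanish in $L^1$, so Markov's inequality finishes the argument with no MDS decomposition at all; this works precisely because, unlike the diagonal mode block, the cross blocks are already small in expectation term by term (your phrase that ``the cancellation comes entirely from averaging'' slightly misdescribes your own, purely expectation-based, bound). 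Your other deviation --- citing Theorem \ref{thm:ConsistencyCovEstimation} for the mode diagonal block rather than re-running a modification of Lemma \ref{lemma:LLNzSquared} --- is legitimate, since the proof of that theorem nowhere invokes the mode null hypothesis. Two small caveats, neither a gap: the literal identification of that block with $\widehat{\Omega}_{T,\mathrm{Mode}}$ requires $K'(u)^2=K'(-u)^2$, i.e.\ a symmetric kernel (true for the Gaussian kernel adopted, and in any case the limit computation is unchanged); and your mean--mode bound uses $\int |u|\,|K'(u)|\,\mathrm{d}u<\infty$, the absolute-value strengthening of condition (vii) of Assumption \ref{assu:Kernel} that the paper's own proofs also implicitly rely on.
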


\begin{corollary}
	\label{corr:GMMWeakID}
	Given Assumptions \ref{assu:ModeRationality} and \ref{assu:GMMWeakIDRegCond}, it holds that $S_T(\theta_0) \tod \chi^2_k$.
\end{corollary}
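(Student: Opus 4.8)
The plan is to derive the result as a standard consequence of the two preceding theorems, reducing $S_T(\theta_0)$ to a Gaussian quadratic form via the continuous mapping theorem and Slutsky's lemma. Writing $g_T = T^{-1/2}\sum_{t=1}^T \widehat\phi_{t,T}(\theta_0)$, the objective function at the truth is $S_T(\theta_0) = g_T^\top \widehat{\Sigma}_T^{-1}(\theta_0)\, g_T$. Theorem \ref{thm:GMMWeakId} already delivers $g_T \tod \mathcal{N}(0, \Sigma(\theta_0))$ and Theorem \ref{thm:GMMWeakIDCovarianceConsistency} delivers $\widehat{\Sigma}_T(\theta_0) \toP \Sigma(\theta_0)$, so the remaining work is to combine these two limits into convergence of the quadratic form.

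The one substantive step is to verify that $\Sigma(\theta_0)$ is positive definite, which guarantees both that the limiting quadratic form has full rank $k$ and that inversion is continuous at $\Sigma(\theta_0)$. For arbitrary $v \in \mathbb{R}^k \setminus \{0\}$, collecting the mean and median rows into a perfect square (using symmetry of $\WMean$ and $\WMed$, the identity $(\sgn \varepsilon_t)^2 = 1$ a.s., and $\varepsilon_t\,\sgn\varepsilon_t = |\varepsilon_t|$) yields
\begin{equation*}
v^\top \Sigma(\theta_0) v = \mathbb{E}\!\left[\big(\theta_{10}\varepsilon_t\,\bh^\top\WMean v + \theta_{20}\,\sgn(\varepsilon_t)\,\bh^\top\WMed v\big)^2\right] + \theta_{30}^2 \int K'(u)^2\,\mathrm{d}u\; \mathbb{E}\!\left[f_t(0)\,(\bh^\top\WMode v)^2\right].
\end{equation*}
Both terms are nonnegative, so it suffices to show one is strictly positive, and since $\theta_0 \in \Theta$ at least one weight is strictly positive. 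If $\theta_{30}>0$, positive definiteness of $\WMode$ together with full rank of $\mathbb{E}[\bh\bh^\top]$ (Assumption \ref{assu:FullRank} and the (B)-part of Assumption \ref{assu:GMMWeakIDRegCond}) gives $\mathbb{E}[(\bh^\top\WMode v)^2]>0$, and $f_t(0)>0$ at the conditional mode makes the second term positive. If instead $\theta_{30}=0$, I would argue that, conditionally on $\bh$, the continuous random variable $\varepsilon_t$ forces the $\varepsilon_t$-linear part and the two-valued sign part of the integrand to be separately degenerate whenever their sum vanishes a.s., so the first term is zero only if $\theta_{10}\bh^\top\WMean v = 0$ and $\theta_{20}\bh^\top\WMed v = 0$ a.s. Since $\theta_{10}+\theta_{20}=1$, at least one weight is positive, and the corresponding identity $\bh^\top\WMean v = 0$ (or $\bh^\top\WMed v = 0$) a.s. contradicts $v^\top \WMean\, \mathbb{E}[\bh\bh^\top]\, \WMean v > 0$ (respectively with $\WMed$). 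Hence the first term is strictly positive, and $\Sigma(\theta_0)$ is positive definite.

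With invertibility secured, I would finish mechanically. Because matrix inversion is continuous at the positive definite matrix $\Sigma(\theta_0)$, the continuous mapping theorem upgrades Theorem \ref{thm:GMMWeakIDCovarianceConsistency} to $\widehat{\Sigma}_T^{-1}(\theta_0) \toP \Sigma(\theta_0)^{-1}$. Combining this constant probability limit with $g_T \tod Z \sim \mathcal{N}(0,\Sigma(\theta_0))$ gives the joint convergence $(g_T, \widehat{\Sigma}_T^{-1}(\theta_0)) \tod (Z, \Sigma(\theta_0)^{-1})$ by Slutsky's lemma, and continuity of the map $(x,M) \mapsto x^\top M x$ then yields $S_T(\theta_0) \tod Z^\top \Sigma(\theta_0)^{-1} Z$. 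Finally the classical Gaussian quadratic-form identity, obtained by noting that $\Sigma(\theta_0)^{-1/2} Z \sim \mathcal{N}(0, I_k)$ so that $Z^\top \Sigma(\theta_0)^{-1} Z = \|\Sigma(\theta_0)^{-1/2} Z\|^2$, shows the limiting law is $\chi^2_k$.

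I expect the positive definiteness of $\Sigma(\theta_0)$—and in particular ruling out cancellation between the mean and median contributions when $\theta_0$ places positive weight on both—to be the only delicate point; everything after it is a routine application of the continuous mapping theorem, Slutsky's lemma, and the standard quadratic-form fact.
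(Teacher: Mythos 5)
Your proposal is correct and follows essentially the same route the paper takes: the corollary is an immediate consequence of Theorems \ref{thm:GMMWeakId} and \ref{thm:GMMWeakIDCovarianceConsistency} via Slutsky's lemma, the continuous mapping theorem, and the standard Gaussian quadratic-form identity. The only addition is your explicit verification that $\Sigma(\theta_0)$ is positive definite (completing the square in the mean--median block and invoking the full rank of $\mathbb{E}[\bh\bh^\top]$), a detail the paper leaves implicit by assuming the weighting matrix in $S_T$ is positive definite; note that your argument for the $\theta_{30}>0$ case additionally needs $f_t(0)>0$ with positive probability, which is not formally stated in Assumption \ref{assu:ModeRationality} but is implicitly required by the paper as well.
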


Following \cite{StockWright2000}, this corollary allows one to construct asymptotically valid confidence regions for $\theta_0$ with coverage probability $(1-\alpha)\%$ by considering the set
\begin{align}
	\label{eqn:GMMWeakIdConfidenceRegion}
	\big\{ \theta \in \Theta:  S_T(\theta) \le Q_k(1-\alpha) \big\},
\end{align}
where $Q_k(1-\alpha)$ denotes the $(1-\alpha)$ quantile of the $\chi^2_{k}$ distribution.

Given the above results, we obtain a test for forecast rationality for a general measure of central tendency by evaluating the GMM objective function using a dense grid of convex combination parameters $\theta_j \in \Theta$ for $j = 1,\dots,J$. An asymptotically valid confidence set is given by the values of $\theta_j$ for which $S_T(\theta_j) \le Q_k(1-\alpha)$. These values represent the centrality measures  that ``rationalize'' the observed sequence of forecasts and realizations, in that rationality cannot be rejected for these measures of centrality. It is possible that the confidence set is empty, in which case we reject rationality at the $\alpha$ significance level for the \textit{entire class} of general centrality measures.

%

%

The power of the rationality test depends on the instrument choice. This property is shared with many other tests in the literature \citep[e.g.,][]{EKT2005, PattonTimmermann2007,schmidt2021interpretation}. Good instruments span the information set of the forecaster and are not strongly correlated with each other. Additional instruments generally  improve power asymptotically (or shrink the identified set, in the partial identification case), but can deteriorate power in finite samples. An informative instrument, and one used as far back as \cite{MincerZarnowitz1969}, is the forecast $X_t$ itself. In our simulations and applications we found the forecast to be a powerful instrument. Additionally, it is guaranteed to be in the information set of the forecaster, and so is a valid instrument.

The forecast evaluation problem and approach considered here is related to, but distinct from, \cite{EKT2005}. These authors consider the case that a respondent's point forecast corresponds to some quantile (or expectile) of her predictive distribution. They employ a parametric loss function (``lin-lin'' for quantiles, ``quad-quad'' for expectiles), $L(X,Y;\tau)$ with a scalar unknown parameter ($\tau$) characterizing the asymmetry of the loss. \cite{EKT2005} use GMM to estimate the $\tau$ that best describes the sequence of forecasts and realizations, and test whether forecast rationality holds at the estimated value for $\tau$. Economically, our approach differs from \cite{EKT2005} in that we consider forecasts only as measures of centrality, allowing for a wide range of such measures, while that paper considers only a single centrality measure nested within a wide range of asymmetric forecasts. Statistically, our approach differs as we are forced to address the feature that our parameter may be partially, weakly, or un-identified, which precludes point estimation. 

We use convex combinations of identification functions to parametrically nest the mean, the median, and (asymptotically) the mode. Alternative approaches are possible, e.g., via the $L^p$ norm. Our convex combination approach has the advantage of separating the parameter of interest $\theta$ from the bandwidth parameter $\delta_T$, and further does not suffer from technical difficulties such as bandwidth parameters in the exponent, or identification functions with singularities at the mode.

\section{\edit{Simulation Study}}
\label{sec:SimulationStudy}


\subsection{Rationality tests for mode forecasts}
\label{sec:SimModeRat}

To evaluate the finite-sample properties of our mode rationality test, we simulate data from a cross-sectional and a time series AR(1)-GARCH(1,1) data generating process (DGP), given by
\begin{align}
\label{eqn:GeneralDGP}
Y_{t+1} =  Z_t^\top \zeta + \sigma_{t+1} \xi_{t+1}, \quad \text{ where }  \quad
\xi_{t+1} \stackrel{iid}{\sim} \; \mathcal{SN}(0,1 ,\gamma),
\end{align}
where $\mathcal{SN}(0,1,\gamma)$ is a skewed standard Normal distribution, $Z_t$ denotes a vector of covariates (possibly including lagged values of $Y_{t+1}$), $\zeta$ denotes a parameter vector and $\sigma_{t+1}$ represents a conditional variance process.
Using the general formulation in \eqref{eqn:GeneralDGP}, the cases we consider are:\footnote{\label{fn:DGPs}In the Supplemental Appendix, we also present data for a heteroskedastic cross-sectional DGP that is as in case (1), but with $\sigma_{t+1} = 0.5 + 1.5(t + 1)/T$, and a homoskedastic AR(1) process with $Z_t = Y_{t}$, $\zeta = 0.5$ and $\sigma_{t+1} = 1$.}
\begin{eqnarray}
	\text{Cross-sectional (iid):~~~} Z_t &\stackrel{iid}{\sim}& \mathcal{N} \big( ~ (1,1,-1,2), ~  \operatorname{diag}(0,1,1,0.1) ~  \big), \zeta = \boldsymbol{\iota}, \sigma_{t+1} = 1 \label{eq:DGPiid} \\ 
	\text{AR(1)-GARCH(1,1):~~}  Z_t &=& Y_{t}, \zeta = 0.5, \sigma_{t+1}^2 = 0.1 + 0.85 \sigma_{t}^2 + 0.1 \sigma_{t}^2 \xi_{t}^2, \label{eq:DGParGARCH}
\end{eqnarray}
where $\boldsymbol{\iota}$ is a vector of ones. These DGPs are based on a skewed Gaussian residual distribution with skewness parameter $\gamma$. This choice nests the case of a standard Gaussian distribution at  $\gamma = 0$, and in this case all measures of centrality coincide. As the skewness parameter grows in magnitude, the measures of centrality differ increasingly.

%
For the DGP in (\ref{eqn:GeneralDGP}), optimal mode forecasts are given by
$X_t = {\mode}(Y_{t+1}| \mathcal{F}_t) = Z_{t}^\top \zeta + \sigma_{t+1} \mode ( \xi_t )$,
where $\mode ( \xi_t )$ depends on the skewness parameter $\gamma$.
We use a range of values $\gamma \in \{ 0, 0.1, 0.25, 0.5\}$ and sample sizes $T \in \{ 100, 500, 2000, 5000 \}$.
To evaluate the size of our test in finite samples, we generate optimal mode forecasts and apply the mode forecast rationality test based on the instrument choices $\bh = 1$ and $\bh = (1,X_t)$.
All simulation results are obtained with $2,000$  simulation replications.

\begin{table}[tb]
	\centering
	\scriptsize
	\resizebox{\columnwidth}{!}{
		\begin{tabular}{l lrrrr lrrrr lrrrr lrrrr lrrrr}
			\toprule 
			& & \multicolumn{9}{c}{(1) iid DGP}  & & \multicolumn{9}{c}{(2) AR-GARCH DGP}  \\
			\cmidrule(lr){3-11} \cmidrule(lr){13-21} 
			&&  \multicolumn{4}{c}{Instrument $\bh =  1$} & & \multicolumn{4}{c}{Instrument $\bh =  (1,X_t)$}   & & \multicolumn{4}{c}{Instrument $\bh =  1$} & & \multicolumn{4}{c}{Instrument $\bh =  (1,X_t)$} \\
			\cmidrule(lr){3-6} \cmidrule(lr){8-11} \cmidrule(lr){13-16} \cmidrule(lr){18-21}
			Skewness & & 0 & 0.1 & 0.25 & 0.5 & & 0 & 0.1 & 0.25 & 0.5  && 0 & 0.1 & 0.25 & 0.5 && 0 & 0.1 & 0.25 & 0.5\\ 
			\midrule 
			$100$ &   & 3.9 & 5.4 & 6.6 & 8.2 &   & 4.9 & 5.4 & 6.2 & 7.4 &   &  4.7 &  6.2 &  6.6 & 10.3 &   & 5.9 & 5.2 & 6.6 & 8.6\\
			$500$ &   & 5.2 & 6.3 & 8.6 & 8.3 &   & 5.7 & 5.6 & 8.2 & 7.2 &   &  5.1 &  6.2 &  8.0 &  8.2 &   & 4.9 & 6.0 & 6.8 & 7.3\\
			$2000$ &   & 5.8 & 6.5 & 9.2 & 7.2 &   & 6.2 & 6.5 & 7.6 & 6.3 &   &  6.0 &  5.0 &  7.1 &  7.0 &   & 5.1 & 4.9 & 7.0 & 6.2\\
			$5000$ &   & 4.7 & 5.8 & 7.4 & 7.0 &   & 5.2 & 5.5 & 6.8 & 6.9 &   &  5.8 &  5.4 &  8.5 &  5.9 &   & 4.9 & 4.5 & 6.4 & 5.7 \\		
			\bottomrule 
		\end{tabular}
	}
	\caption{\textbf{Size of the mode rationality test.} This table presents the empirical rejection rates (in percent) of the mode rationality test using four sample sizes, four levels of skewness in the residual distribution, different choices of instruments, and two DGPs described in equation (\ref{eqn:GeneralDGP}). The nominal test level is $5\%$.}
	\label{tab:SizeGaussian5}
\end{table}

Table \ref{tab:SizeGaussian5} presents the finite-sample sizes of the test under the different DGPs, sample sizes, instrument choices and skewness parameters. In all cases we use a Gaussian kernel and set the nominal size to $5\%$.\footnote{In the Supplemental Appendix we present similar results for different kernel choices and significance levels.}
We find that our mode rationality test leads to finite-sample rejection rates that are generally close to the nominal test size, across all of the different choices of DGPs, instruments, sample sizes and skewness parameters. 
Table \ref{tab:SizeGaussian5} reveals that an increasing degree of skewness in the underlying conditional distribution negatively influences the test's performance. 
This can be explained by the increasing difference between the mode and the modal midpoint (which is elicited by the corresponding identification function in finite samples) for an increasing skewness in the data.
As a consequence,  we choose a smaller bandwidth following the rule of thumb described at the end of Section \ref{sec:ForecastRatioanlityTestMode}, resulting in less efficient estimates. 
Consequently, for highly skewed distributions, the mode rationality test requires larger sample sizes in order to converge to the nominal test size.



\begin{figure}[tb]
	\centering
	\includegraphics[width=\linewidth]{sim_RR_dgpmain_bias_1X.pdf}
	\caption{\textbf{Power of the mode rationality test.} This figure plots the empirical rejection frequencies against the degrees of misspecification $\kappa$ for different sample sizes in the vertical panels and for the two DGPs in the horizontal panels. The misspecification follows the bias design described in the main text and we use the instrument vector $(1,X_t)$ and a nominal significance level of $5\%$.}
	\label{fig:PowerBias}
\end{figure}


To analyze the power of the mode forecast rationality test we use the DGPs from equations \eqref{eq:DGPiid}-\eqref{eq:DGParGARCH} and introduce a bias to the forecasts $\tilde X_t = X_t + \kappa \varsigma$,  where $\varsigma = \sqrt{\Var(\varepsilon_{1:T})}$ and $\kappa \in (-0.5, 0.5)$.\footnote{\label{fn:NoiseMisspec}We also consider a misspecification where we add noise as $\tilde X_t = X_t + \mathcal{N}(0, \kappa \varsigma^2)$ in Supplemental Appendix~\ref{sec:AdditionalPlotsTables}.}
This type of misspecification introduces a deterministic bias, where the degree of misspecification depends on the parameter $\kappa$.

\Cref{fig:PowerBias} presents power plots for these ``biased'' forecasts for a range of sample sizes, skewness parameters and the two DGPs in equations (\ref{eq:DGPiid})-(\ref{eq:DGParGARCH}), where we plot the rejection rate against the degree of misspecification $\kappa$. 
For all plots, we use the instrument choice $\mathbf{h}_{t} = (1,X_t)$, a Gaussian kernel, and a nominal level of $5\%$. Notice that for $\kappa = 0$ the figures show the empirical test size. 

Figure \ref{fig:PowerBias} reveals that the proposed mode rationality test exhibits, as expected, increasing power for an increasing degree of misspecification. Also as expected, larger sample sizes lead to tests with greater power, although even the two smaller sample sizes exhibit reasonable power. The figure also reveals that an increasing degree of skewness yields to a slight loss of power (given a fixed degree of misspecification). This is driven through the bandwidth choice, where larger values of the (empirical) skewness result in a smaller bandwidth, and consequently a lower test power (analogous to the bias-variance trade-off in the nonparametric estimation literature). 
We illustrate the good behavior of our finite sample bandwidth choice through further simulations in the Supplemental Appendix \ref{sec:BandwidthChoice}.

Overall, our simulation results show that even though the mode rationality test converges at a slower-than-parametric rate (approximately $T^{2/7}$), it nevertheless has considerable power for the empirically relevant sample sizes of $T \ge 500$; as e.g., in our application in Section \ref{sec:ApplicationSCE}.


\subsection{Rationality tests for an unknown measure of central tendency}
\label{sec:SimCentralityMeasures}

We now examine the small sample behavior of the asymptotic confidence sets for the measures of central tendency, described in \Cref{sec:FunctionalCentrality}.
As in the previous section, we consider the two DGPs described in and after equation \eqref{eqn:GeneralDGP} and the same varying sample sizes $T$, skewness parameters $\gamma$, and instruments $\bh$.
%
We generate optimal one-step ahead forecasts for the mean, median and mode as the true conditional mean, median and mode of $Y_{t+1}$ given $\mathcal{F}_t$:
\begin{align}
	\label{CentralityForecast}
	X_t^{\textsf{M}} = \zeta^\top Z_t + \sigma_{t+1}  \, \mathsf{M}(\xi_{t+1}), 
\end{align}
for $\mathsf{M} \in \{\textrm{Mean}, \textrm{Median}, \textrm{Mode}\}$, corresponding to the three vertices of $\Theta$.

For any other $\theta \in \Theta$, we generate the forecast $X_t^*(\theta)$ that arises as the optimal forecast under a convex combination of loss functions as in \eqref{eqn:CentralityAssumptionLoss} through a numerical optimization procedure.
For the weights in \eqref{eqn:CentralityAssumptionLoss}, we use the reciprocal of the estimated standard deviations of the respective loss functions over all $t=1,\dots,T$.
In detail, for a given $\theta \in \Theta$, and for each $t =1,\dots,T$, we draw 5000 replications from the residuals $\xi_t$ in \eqref{eqn:GeneralDGP} and obtain $X_t^*(\theta)$ in \eqref{eqn:CentralityAssumptionLoss} through a numerical optimization, where we approximate the expected value  in \eqref{eqn:CentralityAssumptionLoss} by the empirical average over the 5000 replications.
We repeat this process three times to stabilize the effect of the estimated weights through the inverted standard deviations, akin to the \emph{iterated} generalized method of moments.


Table \ref{tab:SizeCentrality} reports the coverage rates of the correct parameter $\theta_0$ in the $90\%$-confidence sets from \eqref{eqn:GMMWeakIdConfidenceRegion}, where we simulate the forecasts following \eqref{eqn:CentralityAssumptionLoss} and the previous paragraph, where the true $\theta_0$ is given in the second table column.
We find accurate inclusion rates for both DGPs and both, symmetric as well as skewed data.
Only the true mode forecasts (compare to Section \ref{sec:SimModeRat}) exhibit some undercoverage for skewed data for the smallest considered sample size.

\begin{table}[tb]
	\centering
	\scriptsize
	\resizebox{\columnwidth}{!}{
	\begin{tabular}{lll lrrr lrrr lrrr lrrr}
		\midrule 
		& & & \multicolumn{7}{c}{(1) iid DGP}  & & \multicolumn{7}{c}{(2) AR-GARCH DGP}  \\
		\cmidrule(lr){4-10} \cmidrule(lr){12-18} 
		& &&  \multicolumn{3}{c}{symmetric: $\gamma =  0$} & & \multicolumn{3}{c}{skewed: $\gamma = 0.5$}   & &   \multicolumn{3}{c}{symmetric: $\gamma =  0$} & & \multicolumn{3}{c}{skewed: $\gamma = 0.5$} \\
		\cmidrule(lr){4-6} \cmidrule(lr){8-10} \cmidrule(lr){12-14}  \cmidrule(lr){16-18} 
		Forecast Name & \multicolumn{2}{l}{$\theta_0$ \qquad $T=$} & 500 & 2000 & 5000 & & 500 & 2000 & 5000  && 500 & 2000 & 5000 && 500 & 2000 & 5000 \\ 
		\midrule 
		Mean   & $(1,0,0)$  &   & 89.0 & 90.2 & 89.5 &   & 90.0 & 89.5 & 89.5 &   & 89.7 & 91.8 & 90.6 &   & 89.6 & 88.2 & 89.8\\
		Median  & $(0,1,0)$   &   & 89.4 & 89.4 & 91.2 &   & 89.8 & 90.7 & 88.4 &   & 90.1 & 88.8 & 91.0 &   & 89.6 & 89.9 & 89.8\\
		Mode &  $(0,0,1)$  &   & 89.4 & 89.9 & 89.9 &   & 85.2 & 86.3 & 89.1 &   & 88.6 & 90.2 & 90.0 &   & 86.2 & 88.2 & 89.2\\
		Mean-Mode  & $(\tfrac{1}{2},0,\tfrac{1}{2})$  &   & 90.3 & 90.8 & 90.6 &   & 90.3 & 90.3 & 90.8 &   & 91.6 & 89.8 & 91.3 &   & 89.0 & 89.2 & 89.5\\
		Mean-Median  &  $(\tfrac{1}{2},\tfrac{1}{2},0)$ &   & 90.0 & 88.8 & 89.3 &   & 88.8 & 89.9 & 89.9 &   & 89.5 & 89.2 & 89.8 &   & 91.0 & 91.5 & 90.1\\
		Median-Mode &   $(0, \tfrac{1}{2},\tfrac{1}{2})$  &   & 89.7 & 89.6 & 89.6 &   & 90.5 & 89.3 & 89.1 &   & 90.0 & 89.5 & 89.6 &   & 91.2 & 90.0 & 89.3\\
		Mean-Median-Mode &   $(\tfrac{1}{3},\tfrac{1}{3},\tfrac{1}{3})$  &   & 90.1 & 90.5 & 88.8 &   & 90.3 & 89.9 & 90.0 &   & 88.5 & 88.4 & 90.0 &   & 89.2 & 89.8 & 91.1\\
		\bottomrule 
	\end{tabular}
	}
	\caption{\textbf{Confidence set coverage rates under the null hypothesis.} This table presents the empirical coverage rates (in percent) of the confidence sets for the forecasts of central tendency with a nominal coverage rate of $90\%$ for the Homoskedastic cross-sectional and the AR-GARCH DGPs in \eqref{eqn:GeneralDGP}, two skewness choices, a sample size of $T=5000$ and the instruments  $\bh =  (1,X_t)$.}
	\label{tab:SizeCentrality}
\end{table}

\begin{figure}[tp]
	\centering
	\includegraphics[width=\linewidth]{Coverage_Hom_skew05}
	\caption{\textbf{Confidence set coverage rates under the alternative.} This figure shows the percentage values in how many of the simulation runs the respective confidence set includes a given point in the triangle.
		The six subfigures contain different true forecasts (whose weights are highlighted in red color) satisfying the null hypothesis in Assumption \ref{assu:GMMWeakIDRegCond} (D) for some $\theta_0 \in \Theta$, respectively.
		We use the  instruments $\mathbf{h}_{t} = (1,X_t)$, the iid DGP, $n=5000$ and $\gamma = 0.5$ throughout this figure.}
	\label{fig:sim_convex_thetacomb}
\end{figure}

Figure \ref{fig:sim_convex_thetacomb} shows the coverage rates of the confidence sets at an equally spaced grid of values in $\Theta$.
The panels in the figure correspond to six true combination weights, which are in each case highlighted in red and are explained in the first two columns of Table \ref{tab:SizeCentrality}.
Each point in the triangles corresponds to a tested centrality measure, i.e.\ to one value of $\theta \in \Theta$, and for each point we display as a percentage number and a color code how often it is contained in the $90\%$ confidence set. 
We use the instruments $\mathbf{h}_{t} = (1,X_t)$, and the iid DGP in equation (\ref{eq:DGPiid}).

In each of the six subfigures the values highlighted in red confirm the correct coverage rates of the respective true forecasts from Table \ref{tab:SizeCentrality}.
We further find clearly decreasing coverage rates around the truth illustrating the power our test exhibits. 
Notably, there exist central tendency forecasts, e.g., mean-mode loss combinations in our simulation, that would typically be rejected by standard tests of mean, median, or mode rationality.
The confidence sets are stretching in the upper left direction, which illustrates the partial identification in the simulated process:
As the median is between the mean and mode for the skewed normal distribution, it can also be identified by a convex combination of the mean and mode as can be seen in the upper right panel.\footnote{When the DGP is conditionally location-scale, one of the three centrality measures can always be expressed as a (constant) convex combination of the other two. In this DGP, this is the median, but in other applications it may be any of the three measures. When the conditional distribution exhibits variation in higher-order moments or other ``shape'' parameters, this restriction will generally not hold, and the variation may or may not be sufficient to separately identify the three centrality measures.}

In Supplemental Appendix \ref{sec:AdditionalPlotsTables}, we show plots similar to Figure \ref{fig:sim_convex_thetacomb}, but in each case we exchange one feature:
First, when using symmetric data, all points are naturally identified in all panels.
Second, the results are qualitatively unchanged when using the AR-GARCH DGP in equation (\ref{eq:DGParGARCH}).
Third, the power is lower for the smaller sample size of $T=500$. Finally, Figure \ref{fig:sim_convex_gammacomb} illustrates that our approach identifies probabilistic mixtures of mean, median, and mode forecasts for instruments $\bh = 1$, but generally rejects rationality when using $\bh =  (1,X_t)$.

\section{Evaluating Survey Forecasts of Individual Income}
\label{sec:ApplicationSCE}

We apply our proposed tests to survey responses to the Federal Reserve Bank of New York's Survey of Consumer Expectations.\footnote{Source:  Survey of Consumer Expectations, © 2013-2019 Federal Reserve Bank of New York (FRBNY). The SCE data are available without charge at \url{http://www.newyorkfed.org/microeconomics/SCE} and may be used subject to license terms posted there. FRBNY disclaims any responsibility or legal liability for this analysis and interpretation of Survey of Consumer Expectations data.}$^{,}$\footnote{In Supplemental Appendix \ref{sec:SuppApplications} we consider two other empirical applications. The first is to the ``Greenbook'' forecasts of US GDP growth produced by economists at the Board of Governors of the Federal Reserve. In that application we find that the Greenbook forecasts are rational mean forecasts, but not mode or median forecasts. Our second application is to random walk forecasts of exchange rates, in the spirit of \citet{MeeseRogoff1983}. We find random walk forecasts to be rational mean forecasts for three different exchange rates, while the mode and median are rejected for two of the three exchange rates.} We focus on the Labor Market Survey component, which is conducted each March, July, and November, and which asks participants a variety of questions, including about their current earnings and their beliefs about their earnings in four months (i.e., the date of the next survey). Using adjacent surveys over the period March 2015 to March 2018, we obtain a sample of 3,916 pairs of forecasts ($X_t$) and realizations ($Y_{t+1}$).\footnote{We drop observations that include forecasts or realizations of annualized income below \$1,000 or above \$1 million, which represent less than 1\% of the initial sample. We also drop observations where the ratio of the realization to the forecast, or its inverse, is between 9 and 13, to avoid our results being affected by misplaced decimal points or by the failure to report annualized income (leading to proportional errors of around 10 to 12 respectively).}$^,$\footnote{Most respondents in our sample appear just once: the 3,916 forecast-realization pairs come from 2,628 unique individuals. Our econometric approach does not require repeated samples and so this causes no difficulty, and our results are qualitatively unchanged if we use only the first forecast-realization pair from each respondent.} In testing the rationality of these forecasts we initially assume that all participants report the same, unknown, measure of centrality as their forecast. In the next subsection, we explore potential heterogeneity in the measure of centrality used by different respondents.

\begin{table}[t]
	\centering
	\scriptsize
			\begin{tabular}{llll}
				\toprule 
				\addlinespace
				&  \multicolumn{3}{c}{Centrality measure}  \\ 
				\addlinespace
				\cline{2-4} 
				\addlinespace
				vector $\bh$ & Mean & Median & Mode   \\
				\midrule   
				\addlinespace
				$1$  & 0.008 & 0.401 & 0.737 \\ 
				$1, X$ & 0.014 & 0.000 & 0.768 \\ 
				$1, X,$ lagged income \hspace{0.5cm} & 0.031 & 0.000 & 0.970 \\  
				$1, X,$ government sector & 0.064 & 0.000 & 0.949 \\ 
				$1, X,$ private sector & 0.047 & 0.000 & 0.898 \\
				$1, X,$ job offers  & 0.006 & 0.000 & 0.913 \\ 
				\addlinespace
				\bottomrule 
				\addlinespace
		\end{tabular}
		\caption{\textbf{Evaluating income survey forecasts.} This table presents $p$-values from tests of rationality of individual income forecasts from the New York Federal Reserve's Survey of Consumer Expectations. The columns present test results when interpreting the point forecasts as forecasts of the mean, median or mode. The rows present results for different choices of instrument vectors $\bh$ used in the test: $1$ is the constant, $X$ is the forecast, ``lagged income'' is the respondent's income at the time of the forecast, ``government sector'' and ``private sector'' are indicators for the self-reported industry in which the respondent works, ``job offers'' is an indicator for whether the respondent received any job offers in the previous four months.}
		\label{tab:ResultsSCEIncome}
\end{table}

\Cref{tab:ResultsSCEIncome} presents the results of rationality tests for three measures of central tendency, and for a variety of instrument sets. The first instrument set includes just a constant, and the rationality test simply tests whether the forecast errors have unconditional mean, median or mode, respectively, of zero. The other instrument sets additionally include the forecast ($X_t$) itself, and other information about the respondent collected in the survey. We consider the respondent's income at the time of making the forecast, indicators for the respondent's type of employer,\footnote{The survey includes the categories government, private (for-profit), non-profit, family business, and ``other.'' The first two categories dominate the responses and so we only consider indicators for those.} and whether the respondent received any job offers in the past four months. 

\begin{figure}[t]
	\centering
	\includegraphics[width=.6\linewidth]{RSCE_new.pdf}
	\caption{\textbf{Confidence sets for income survey forecasts.} This figure shows the measures of centrality that ``rationalize'' the New York Federal Reserve income survey forecasts. The circles that comprise each triangle correspond to specific convex combinations of the vertices, which are the mean, median and mode functionals. Black dots indicate that the measure is inside the Stock-Wright 90\% confidence set, grey dots indicate that the measure is inside the 95\% confidence set, and white dots indicate that rationality for that measure of centrality can be rejected at the 5\% level. The left panel uses only a constant as the instrument; the right panel uses a constant and the forecast.}
	\label{fig:ResultsSCEIncome}
\end{figure}

The first row of \Cref{tab:ResultsSCEIncome} shows that when only a constant is used, rationality of the survey forecasts can be rejected for the mean, but cannot be rejected for the other two measures of central tendency. When we additionally include the forecast as an instrument, we can reject rationality as mean or median forecasts, but we cannot reject rational mode forecasts. We are similarly able to reject rationality as mean and median forecasts when we include additional covariates, but find no evidence against rationality when these forecasts are interpreted as mode forecasts.

\Cref{fig:ResultsSCEIncome} shows the convex combinations of mean, median and mode forecasts that lie in the confidence set constructed using the methods for weakly-identified GMM estimation in \cite{StockWright2000}.
In the left panel we see that when using only a constant as the instrument, we are able to reject rationality for the mean, and for measures of centrality ``close'' to the mean, but we are unable to reject rationality for the median or mode and measures in a neighborhood of these. This is consistent, naturally, with the entries in the first row of \Cref{tab:ResultsSCEIncome}, which correspond directly to the three vertices in \Cref{fig:ResultsSCEIncome}. Using the results in Section \ref{sec:FunctionalCentrality} and Supplemental Appendix \ref{sec:HeterogeneityForecasters}, the left panel of \Cref{fig:ResultsSCEIncome} can be interpreted assuming either that survey respondents all use the same mixture of loss functions, or that respondents each use a single measure of centrality and are possibly heterogeneous in which measure they use. In the former case the region with black circles identifies the set of loss function weights that are consistent with rationality; in the latter case this region indicates the set of proportions of forecaster \textit{types} in the population.

In the right panel of \Cref{fig:ResultsSCEIncome}, when the instrument set includes a constant and the forecast, we see that only the mode and centrality measures very close to the mode are included in the confidence set; all other forecasts can be rejected at the 5\% level. Overall, we conclude that the responses to the New York Fed's income survey, taken on aggregate, are consistent with rationality when interpreted as mode forecasts, but not when interpreted as forecasts of the mean, median, or convex combinations of these measures.\footnote{Given the panel structure of our data, the target variable may be subject to common unpredictable shocks, leading to forecast errors correlated across individuals within the same wave. In Supplemental Appendix Section \ref{sec:clusterSCE}, we repeat all of our analyses using a covariance estimator clustered at the time level and find very similar results.}

\edit{Holding power issues aside (discussed in the next paragraph), these test results indicate that at least some survey respondents have asymmetric predictive distributions, as the unconditional test rejects mean, but not mode and median rationality, and the conditional test rejects rationality for all centrality measures except those near the mode. Inferring (a)symmetry of the respondents' predictive distributions is not straightforward, as only point forecasts and realizations are observed. As shown by \cite{bai2005tests}, symmetry of the (unconditional) distribution of the observed forecast errors is not sufficient for symmetry of the predictive distribution.}

When interpreting the results in \Cref{fig:ResultsSCEIncome}, and similar figures below, it is worth keeping in mind that the power to detect sub-optimal forecasts is not uniform across values of $\theta$: sampling variation in the mean and median vanishes at rate $T^{-1/2}$, while for the mode it vanishes only at rate approximately $T^{-2/7}$ (see Theorem \ref{thm:ModeRationalityLocalAlternative}). 
This implies that for comparably sub-optimal forecasts, power will be lower at the mode vertex than at the mean or median vertices. This unavoidable variation in power means that the information conveyed by inclusion in the confidence set differs across values of $\theta$.\footnote{\citet{StockWright2000} suggest caution when interpreting a small but nonempty confidence set, as such an outcome is consistent with both a correctly-specified model (a rational forecast, in our case) estimated precisely and also with a misspecified model (irrational forecast) facing low power. These two interpretations clearly have very different economic implications, but cannot be disentangled empirically. Given the lower power at the mode vertex, the latter explanation may be relevant here.}

The analysis of individual income survey forecasts above used $3,916$ pairs of forecasts and realizations from a total of $2,628$ unique survey respondents. This naturally raises the question of whether there is heterogeneity in the measure of centrality used by different respondents.
Given that our survey respondents generally only appear once or twice in our sample, allowing for arbitrary heterogeneity is not empirically feasible. In our analysis below we instead stratify survey respondents by observable characteristics and test forecast rationality separately for each subsample. Stratifying the sample may reduce power, due to the smaller sample sizes available, or it may reveal irrationality that is undetected in a joint analysis, for example if one subsample deviates strongly from rationality while the remaining subsamples are rational, or if one subsample exhibits more asymmetric predictive distributions. 

\subsection{Forecast rationality and income level}

\begin{figure}[tb]
	\centering
	\includegraphics[width=.9\linewidth]{RSCE_by_terciles_lagincome.pdf}		\caption{\textbf{Confidence sets for income survey forecasts, stratified by income.} This figure shows the measures of centrality that ``rationalize'' the New York Federal Reserve income survey forecasts, for low-, middle- and high-income respondents. Groups are formed using terciles of lagged reported income. Black dots indicate that the measure is inside the Stock-Wright 90\% confidence set, grey dots indicate that the measure is inside the 95\% confidence set, and white dots indicate that rationality for that measure of centrality can be rejected at the 5\% level. All panels use a constant and the forecast as test instruments.}
	\label{fig:ResultsSCEIncomeTercile}
\end{figure}

Firstly, we consider stratifying our sample by income. This is motivated by the possibility that, in addition to a different \textit{level} of future income, low-income respondents face a different \textit{shape} of future income, compared with high-income respondents. This analysis may also reveal that respondents at different income levels use different centrality measures to summarize their predictive distribution. \Cref{fig:ResultsSCEIncomeTercile} presents confidence sets for forecast rationality of measures of centrality for low-, middle-, and high-income respondents based on terciles of the distribution of reported income.\footnote{Qualitatively similar results are found if we stratify the sample into just two groups based on median income.} We see that for low-income respondents, only the mode and measures very close to the mode are contained in the confidence set. For middle- and high-income respondents, the mode, mean, and centrality measures ``close'' to the mean and mode are included in the confidence set. This finding is consistent with all respondents using the mode, and only the distribution for low-income respondents allowing for separate identification of the mode and the mean.  Specifically, if the conditional distribution of future earnings is more skewed for low-income respondents than for high-income respondents, then the measures of centrality are better separated for the former, which allows for better identification of the functional used by survey respondents.  
As income is commonly correlated with numeracy and education levels, rejecting mean- and median-rationality for lower-income respondents is consistent with work in the experimental literature: \cite{KroegerThibaud2019QuestionFormats} observe high-numeracy individuals to be better at mean and median forecasting.  

\subsection{Forecast rationality and age}

\begin{figure}[t]
	\centering
	\includegraphics[width=.7\linewidth]{RSCElag_2by2_income_ageb}
	\caption{\textbf{Confidence sets for income survey forecasts, stratified by age and income.} See the notes to Figure \ref{fig:ResultsSCEIncomeTercile} for additional details.}
	\label{fig:ResultsSCEIncomeAge}
\end{figure}

We next stratify the sample both by income and by age, motivated by the possibility that younger respondents have less experience in the workforce and may be less able to predict their future earnings. The latest versions of the FRBNY survey contain a field for whether the respondent is above or below 40 years of age, while earlier versions asked for the respondent's specific age. To maximize coverage, we adopt the age 40 split contained in the later versions of the survey. We have 2,457 (1,332) respondents over (under) that age. We use the median income within each age category to define ``low'' and ``high'' income groups. 

\Cref{fig:ResultsSCEIncomeAge} presents the striking result that forecasts from younger low-income workers cannot be rationalized using \textit{any} measure of centrality; all points lie outside the 95\% confidence set. Forecasts from older low-income respondents can be rationalized as only mode forecasts. In contrast, forecasts from both young and old high-income workers can be rationalized for a variety of measures of centrality.\footnote{Note that a large confidence set for $\theta$ does not imply that the respondents are ``more rational'' than if the confident set were small; a respondent can be fully rational at just a single value of $\theta$, and if the predictive distribution is sufficiently skewed then the confidence set will, in the limit, contain just that value of $\theta$. A large confidence set may reflect a predictive distribution that does not allow for the identification of a single value of $\theta$, or low finite-sample power.}
This figure suggests that younger low-income workers make systematic errors when attempting to predict their income in the coming four months. Asymmetric loss functions, such as those used in \citet{EKT2005} (EKT), have the potential to explain these forecasts, and we consider this in  Section \ref{sec:SCEasym} below. Foreshadowing those results, we also reject rationality of forecasts from younger low-income respondents in the EKT framework.


\subsection{Forecast rationality and job stability}
 
An important component of income uncertainty is job search and new job offers, see \citet{mueller2022expectations} for a recent review of survey expectations in the job search literature. Respondents naturally have different prospects of changing jobs, and these changes impact their predictive distributions of income. To proxy for the likelihood of changing jobs, we stratify respondents based on whether they reported receiving at least one job offer over the previous four months.
\footnote{We find very similar results when we stratify respondents using their estimated probability of receiving a job offer in the next four months, or by their estimated probability of staying in the same job for the next four months.} 

\edit{\Cref{fig:ResultsSCEIncomeOffer} reveals that respondents who did not receive a job offer, and high-income respondents who did, provide forecasts that can be rationalized as mean, mode, and combinations thereof. For high-income respondents without job offer the confidence set contains also (near-)median forecasts, indicating that the predictive distributions for those respondents are more symmetric. Forecasts from low-income respondents who reported receiving a job offer can only be rationalized as mode forecasts; rationality for all other centrality measures is rejected. This is consistent with the predictive distributions for these respondents being more asymmetric, allowing for discrimination between centrality measures. Overall, these results reveal that low-income respondents with more job uncertainty differ meaningfully from the other three subgroups.}

\begin{figure}[t]
	\centering
	\includegraphics[width=0.7\linewidth]{RSCElag_2by2_income_offerpast}
	\caption{\textbf{Confidence sets for income survey forecasts, stratified by job offer status and income.} See the notes to Figure \ref{fig:ResultsSCEIncomeTercile} for additional details.}
	\label{fig:ResultsSCEIncomeOffer}
\end{figure}

\subsection{Forecast rationality and survey experience}

\begin{figure}[t]
	\centering
	\includegraphics[width=0.7\linewidth]{RSCElag_2by2_income_round}
	\caption{\textbf{Confidence sets for income survey forecasts, stratified by elicitation round and income.} See the notes to Figure \ref{fig:ResultsSCEIncomeTercile} for additional details.}
	\label{fig:ResultsSCEIncomeRound}
\end{figure}

As economic agents gather experience, their actions and expectations change, see \citet{vanNVeldkamp2006learning} and \citet{malmendier2011depression} for example. In recent work, \citet{kim2022learning} show that this effect is true in economic surveys as well: forecasts of inflation show less uncertainty as respondents garner experience responding to the survey. We next investigate whether having survey experience leads to more rational forecasts. The FRBNY survey includes participants for a maximum of twelve months, and within that period they are asked to predict future income and report current income three times, providing us with up to two matched pairs of forecasts and realizations for each respondent. We have a total of 1,288 respondents with two matched pairs, and we study the rationality of the first and second of these forecasts separately.


Figure \ref{fig:ResultsSCEIncomeRound} reveals a stark difference in the impact of the round of elicitation between low- and high-income respondents. Forecasts from low-income respondents can be rationalized as mode forecasts, and the results change only slightly from the first to the second round: in the first round some ``near mode'' functionals are also rationalizable, and in the second round the mode functional is only rationalizable at the 95\% significance level. For high-income respondents, in contrast, only the mode and mean functionals (and convex combinations thereof) are rationalizable in the first round, while in the second round, four months later, all functionals are rationalizable. Assuming that the \textit{true} conditional distributions of future income are unaffected by survey participation, and from the second round results we can infer these are (approximately) symmetric, this suggests that high-income respondents' forecasts are more accurate in the second round, and as a result can be (correctly) rationalized via more functionals.\footnote{Recall that a large confidence set does not imply ``more rational,'' merely that the forecast is rationalizable by a wider variety of measures of centrality.} Our results for high-income respondents are consistent with the findings of \citet{kim2022learning} for inflation forecasts, while the similarity across survey rounds for low-income respondents differs from that study. 

In Supplemental Appendix \ref{sec:SupplPastFCAccuracy}, we present an additional analysis, stratifying respondents in the second round by the size of their forecast error in the first round. We find that respondents with high accuracy in the first round can be rationalized using almost any measure of centrality, regardless of income, as can high-income low-accuracy respondents. In contrast, lower-income respondents with low accuracy in the first round cannot be rationalized using any measure of centrality at the 90\% confidence level.

\subsection{Irrational or optimistic?}\label{sec:SCEasym}

The premise of the rationality test proposed in this paper is that survey respondents use a measure of central tendency when providing a point forecast but the specific centrality measure used is unknown. An alternative framework is to model the respondent as reporting either a centrality measure \textit{or} some other functional of their predictive distribution. For example, rather than using the median they may use some other quantile, capturing optimism or pessimism. \cite{EKT2005} (EKT) consider such a case, and model respondents as using either ``lin-lin'' loss, which elicits a quantile of the predictive distribution, or ``quad-quad'' loss, which elicits an expectile of the predictive distribution. In both cases, the loss function contains an asymmetry parameter, and the special case of symmetry corresponds to the respondent using the median or the mean.

\begin{table}[tb]
	\centering
	\scriptsize
			\begin{tabular}{lcccccc} 
				\toprule 
				\addlinespace
				 & n & Mean & Median & Mode & Quantiles & Expectiles \\ 
				\cline{2-7} 
				\addlinespace
				Full sample & 3916 & 0.01 & 0.00 & 0.77 & 0.00 [0.61, 0.64] & 0.01 [0.49, 0.56] \\ \midrule \addlinespace
				
				Low income & 1263 & 0.03 & 0.00 & 0.89 & 0.00 [0.56, 0.61] & 0.01 [0.45, 0.57] \\ 
				Middle income & 1263 & 0.13 & 0.00 & 0.92 & 0.00 [0.62, 0.66] & 0.06 [0.40, 0.54] \\ 
				High income & 1263 & 0.15 & 0.02 & 0.46 & 0.05 [0.62, 0.67] & 0.06 [0.47, 0.57] \\ \midrule \addlinespace 
				
				Under 40, Low income & 665 & 0.00 & 0.00 & 0.00 & 0.00 [0.56, 0.62] & 0.00 [0.63, 0.80] \\ 
				Over 40,~~~Low income & 1236 & 0.05 & 0.00 & 0.61 & 0.00 [0.59, 0.64] & 0.02 [0.47, 0.59] \\ 
				Under 40, High income & 667 & 0.35 & 0.03 & 0.83 & 0.01 [0.61, 0.68] & 0.15 [0.42, 0.57] \\ 
				Over 40,~~~High income & 1221 & 0.15 & 0.09 & 0.98 & 0.26 [0.62, 0.67] & 0.05 [0.44, 0.56] \\ 
				\bottomrule 
				\addlinespace
			\end{tabular}
		\caption{\textbf{Summary of $p$-values for rationality tests in different samples.} This table presents the sample size and $p$-values from tests of rationality when interpreting the point forecasts as forecasts of the mean, median, or mode. The last two panels present $p$-values from tests of rationality when interpreting the point forecasts as quantiles or expectiles following \cite{EKT2005}, as well as 90\% confidence intervals for the asymmetry parameter, given in square brackets. }
		\label{tab:sce_samples}
\end{table}


Table \ref{tab:sce_samples} presents tests of rationality in the EKT framework, as well as the results for the three measures of centrality considered in this paper. These are presented for the full sample, the three income subsamples, and the $2\times2$ stratifications based on age and income. Results for the other stratifications considered above are presented in Supplementary Appendix \ref{sec:SupplEKTSubsamples}. As in our previous analysis, we use a constant and the forecast as instruments for the EKT tests.

For the full sample of respondents, in the top line of Table \ref{tab:sce_samples}, we see that rationality is rejected in the EKT framework. When interpreted as a quantile forecast, the respondents' estimated quantile 90\% confidence interval is 0.61 to 0.64, indicating optimism, but the $p$-value for the test of rationality is less than 0.005, rejecting rationality.\footnote{The estimated shape parameter is interpretable as the one that makes the observed forecasts as close as possible to rational, and the corresponding $p$-value determines whether these forecasts are consistent or inconsistent with rationality.} When interpreted as an expectile forecast, the shape parameter straddles 0.5 (the case of symmetry, where the expectile equals the mean) but the test again rejects rationality. The only functional for which rationality is not rejected is the mode, using the new mode forecast rationality test proposed in this paper.

The results of the rationality tests stratified by income, presented in the second panel of Table \ref{tab:sce_samples}, show that middle- and high-income respondents can be rationalized as expectile forecasters, and in both cases the confidence interval for the estimated shape parameter includes the point of symmetry, consistent when the mean functional not being rejected for these subsamples. For low-income respondents, we see that allowing for optimism or pessimism through the EKT framework does not rationalize these forecasts: the $p$-values of the EKT rationality tests for these cases are both less than or equal to 0.01. 

Finally, we stratify survey respondents by both age and income, and present the results in the lower panel of Table \ref{tab:sce_samples}. Recall that above we found that the younger, low-income subsample could not be rationalized as \textit{any} measure of centrality. When using the EKT framework, we find that the estimated shape parameters for the lin-lin and quad-quad loss functions are both greater than 0.5, the point of symmetry, indicating optimistic forecasts. However the $p$-values in both cases are less than 0.005, strongly rejecting rationality. Thus, forecasts from younger low-income respondents cannot be rationalized as centrality measures, nor as quantiles or expectiles.

\section{Conclusion}
\label{sec:Conclusion}

Reasonable people can interpret a request for their prediction of a random variable in a variety of ways. Some, including perhaps most economists, will report their expectation of the value of the variable (i.e., the mean of their predictive distribution), others might report the value such that the observed outcome is equally likely to be above or below it (i.e., the median), and others may report the value most likely to be observed (i.e, the mode). Still others might solve a loss minimization problem and report a forecast that is not a measure of central tendency. Economic surveys generally request a \textit{point} forecast, despite calls for surveys to solicit distributional forecasts, see \citet{manski2004measuring} for example, and the specific type of point forecast (mean, median, etc.) to be reported is generally not made explicit in the survey. 

This paper proposes new methods to test the rationality of forecasts of some unknown measure of central tendency. Similar to \citet{EKT2005}, we propose a testing framework that nests the mean forecast as a special case, but unlike that paper we allow for alternative forecasts within a general class of measures of central tendency, rather than measures that represent other aspects of the predictive distribution (such as non-central quantiles or expectiles). We consider a class 
that nests convex combinations of the mean, median and mode, and we overcome an inherent identification problem that arises using the methods of \citet{StockWright2000}.

As a building block for the above test, we also present new tests for the rationality of mode forecasts. Some recent work (e.g., \citealp{Knuppel2012}; \citealp{ReifschneiderTulip2017}) suggests that point forecasts from central banks should be interpreted as mode rather than mean forecasts, and experimental studies (e.g., \citealp{peterson1964mode}; \citealp{KroegerThibaud2019}) report that participants are more likely to use the mode to summarize their predictive distribution than other measures of central tendency. While mode regression has received some attention in the recent literature (\citealp{Kemp2012} and \citealp{Kemp2019}), tests for mode forecast rationality similar to those available for the mean and median (e.g., \citealp{MincerZarnowitz1969} and \citealp{GaglianoneJBES2011}) are lacking. Direct analogs of existing tests are infeasible because the mode is not elicitable \citep{Heinrich2014}. We introduce the concept of \textit{asymptotic elicitability} and show it applies to the mode by considering a generalized modal midpoint with asymptotically vanishing length. We then present results that allow for tests similar to the famous Mincer-Zarnowitz regression for mean forecasts. 

We apply our tests to individual 
income expectation survey data collected by the Federal Reserve Bank of New York. We reject forecast rationality when interpreting responses as mean or median forecasts, however we cannot reject rationality when interpreting them as mode forecasts. We also find evidence of heterogeneity across respondents: for example, forecasts from younger, low-income respondents are not rationalizable using any measure of centrality, while forecasts from high-income respondents, regardless of their age, are rationalizable for many, though not all, measures of centrality. Further, we find that the behavior of high-income survey participants changes as they gain experience in the survey, consistent with the learning suggested in \citet{kim2022learning}.


Previous analyses of economic surveys have typically assumed that responses are mean forecasts, and deviations from that benchmark have motivated learning and attention models which can account for such irrationality \citep[e.g.,][]{bordalo2020overreaction,coibion2015information,kohlhas2021asymmetric}. This paper proposes an intriguingly simple alternative to mean rationality, and one which is supported by our empirical work: point forecasts may not reflect the statistical average, but simply the most likely value, the mode. 



\singlespacing
\small
\setlength{\bibsep}{4pt}
\bibliographystyle{apalike}	
\bibliography{Bib_Mode}


\newpage
\doublespacing
\setcounter{page}{1}
\setcounter{footnote}{0}
\setcounter{section}{0}
\setcounter{table}{0}
\setcounter{figure}{0}

\begin{center}
	SUPPLEMENTARY MATERIAL FOR   \vspace{10pt} \\
{\Large\bf {Testing Forecast Rationality for Measures of Central Tendency} \vspace{10pt} }\\
Timo Dimitriadis and Andrew J. Patton and Patrick W. Schmidt \\
	This Version: \today \\ 
\end{center}


\renewcommand{\thesection}{S.\arabic{section}}   
\renewcommand{\thepage}{S.\arabic{page}}  
\renewcommand{\thetable}{S.\arabic{table}}   
\renewcommand{\thefigure}{S.\arabic{figure}}   





This appendix contains additional details, discussions, and results. It is structured as follows: 
\edit{Section \ref{sec:Proofs} contains proofs for the theorems in the main paper and  \ref{sec:TechnicalProofs} contains some additional technical lemmas and their proofs.}
\edit{Sections \ref{sec:BandwidthChoice} and \ref{sec:KernelChoice}} discuss the impact of the choice of kernel and bandwidth on the results for mode forecast rationality testing in Section \ref{sec:ForecastRatioanlityTestMode} of the main paper. Section \ref{sec:convexfunctionals} presents a result showing that a convex combination of functionals is generally not elicitable, as stated in Remark \ref{rem:ConvexFunctionalCombinations} of the main paper.
\edit{Section \ref{sec:HeterogeneityForecasters} relates our tested null hypothesis to a probabilistic combination of mean, median and mode forecasters.}
Section \ref{sec:AdditionalPlotsTables} presents simulation results supplementing those presented in Section \ref{sec:SimulationStudy} of the main paper. 
Section \ref{sec:SuppSCErest} presents additional results for the application in Section \ref{sec:ApplicationSCE} of the main paper. In particular, we show tests of rationality using a cluster covariance estimator, results for an additional stratification with respect to past forecast accuracy and the results of tests of rationality in the framework of \citet{EKT2005} for the full set of stratifications.
Section \ref{sec:SuppApplications} presents two additional empirical analyses, the first to the ``Greenbook'' forecasts produced by the Board of Governors of the Federal Reserve, and the second to random walk forecasts of exchange rates.

\section{Proofs}
\label{sec:Proofs}

\begin{proof}[Proof of Theorem \ref{thm:GenModalMidpointProperties}]
	For the proof of statement \ref{statement:GenModalMidpointWellDefined}, let $Y \sim P \in \mathcal{P}$ with density $f$, define $\tilde K_\delta(e) = \frac{1}{\delta} K\left( \frac{e}{\delta} \right)$ and introduce the notation $\bar L^K_\delta(x,P) = \mathbb{E}_{Y\sim P} \left[ L^K_\delta(x,Y) \right]$.
	Then, it holds that
	\begin{align*}
		\bar L^K_\delta(x,P) 
		= - \int  \frac{1}{\delta} K\left( \frac{x-y}{\delta} \right) f(y) \, \mathrm{d}y
		= - \int  \tilde K_\delta \left( x-y \right) f(y) \, \mathrm{d}y
		= - (f \ast \tilde K_\delta) (x),
	\end{align*}
	where $f \ast \tilde K_\delta$ denotes the convolution of the functions $f$ and $ \tilde K_\delta$.
	\cite{Ibragimov1956} shows that for any log-concave density, its convolution with any other unimodal distribution function is again unimodal.\footnote{\cite{Ibragimov1956} calls densities satisfying this property \textit{strongly unimodal}. It is important to note that his notion of strong unimodality is different from ours introduced in \Cref{def:Unimodality}.}
	Thus, $\bar L^K_\delta(x,P)$ exhibits a unique minima which shows that $\Gamma_\delta^K$ is well-defined by (\ref{eqn:LossGenModalInt}).
	
	We continue with statement \ref{statement:GenModalMidpointConvergence} and show that $\Gamma_\delta(P) \to \mode(P)$ for all $P \in \mathcal{P}$.
	Notice that
	\begin{align}
		\bar L^K_\delta(x,P) 
		= - \int  \frac{1}{\delta} K\left( \frac{x-y}{\delta} \right) f(y) \, \mathrm{d}y
		= - \int f(x + u \delta) K\left(u \right)  \, \mathrm{d}u
		=  - \int f(x + u \delta) \, \mathrm{d} \mathcal{K}(u),
	\end{align}
	by applying integration by substitution and by interpreting the kernel $K(\cdot)$ as the density of the probability measure $\mathcal{K}$.
	It holds that
	\begin{align}
		&\sup_{x \in \mathbb{R}} \left| \bar L^K_\delta(x,P) - \big( - f(x) \big) \right|
		= \sup_{x \in \mathbb{R}} \left| f(x) -   \int f(x + u \delta) \mathrm{d} \mathcal{K}(u) \right| \\
		\le \, &\sup_{x \in \mathbb{R}} \int \big| f(x) -   f(x + u \delta) \big| \mathrm{d} \mathcal{K}(u) 
		\le \sup_{x \in \mathbb{R}} \int | c \delta u | \, \mathrm{d} \mathcal{K}(u) 
		= c \delta \int |u| {K}(u) \mathrm{d}u \to 0
	\end{align}
	as $\delta \to 0$ 
	as $f$ is Lipschitz continuous with constant $c\ge0$ and $\int  |u| {K}(u) \mathrm{d}u < \infty$.
	Hence, $\bar L^K_\delta(x,P)$ converges uniformly (for all $x \in \mathbb{R}$) to $-f(x)$ as $\delta \to 0$ and it  holds that $\argmin_x  \bar L^K_\delta(x,P) \to \argmin_x  (- f(x) )$ as $\delta \to 0$.
	Consequently,
	\begin{align*}
		\lim_{\delta \to 0} \Gamma_\delta(P)
		= \lim_{\delta \to 0}  \left( \argmin_{x \in \mathbb{R}}  \bar L^K_\delta(x,P) \right) 
		= \argmin_{x \in \mathbb{R}} \left( \lim_{\delta \to 0}  \bar L^K_\delta(x,P) \right)
		= \argmin_{x \in \mathbb{R}} \left( - f(x) \right),
	\end{align*}
	which equals the mode for distributions with continuous Lebesgue density by definition.
	%
	%
	%
	
	For the proof of statement \ref{statement:GenModalMidpointIdentifiable}, \edit{let $P \in \tilde{\mathcal{P}}$ with density $f$.
	We} consider a fixed $\delta > 0$ and define $\bar V^K_\delta(x, P) \defeq - \mathbb{E}_{Y\sim P} [ V_\delta(x,Y) ] = \frac{1}{\delta^2}\int K' \left( \frac{x-y}{\delta} \right) f(y) \mathrm{d}y$ and $\tilde K_\delta'(e) = \frac{1}{\delta^2} K'\left( \frac{e}{\delta} \right)$. 
	Then,
	\begin{align}
		\bar V^K_\delta(x, P)	&=  \int \tilde K_\delta'(x-y) f(y) dy 
		=  \big( \tilde K_\delta' \ast f \big) (x)
		=   \big(  \tilde K_\delta \ast f' \big) (x)
		=  \int \tilde K_\delta(x-y) f'(y) dy \\
		&= \int \tilde K_\delta(x+\mode(P)-y) f'(y-\mode(P)) dy\\
		&= \int \tilde K_\delta(x+\mode(P)-y) g'(y) dy
	\end{align}
	for some shifted density $g$ with mode at zero.
	As the kernel $\tilde K_\delta$ is log-concave it has a monotone likelihood ratio (Proposition 2.3 (b) in \citealp{Saumard2014}), i.e.\ for any $a \le b$ and $y \ge 0$, it holds that
	$\frac{\tilde K_\delta(a-y)}{\tilde K_\delta(a)} \le \frac{\tilde K_\delta(b-y)}{\tilde K_\delta(b)}$ and 	as $g'(y) \le 0$ for $y\ge 0$ \edit{(as $P \in \tilde{\mathcal{P}}$ is strongly unimodal)}, this implies that
	\begin{align}
		\frac{\tilde K_\delta(a-y)}{\tilde K_\delta(b-y)} g'(y) &\ge \frac{\tilde K_\delta(a)}{\tilde K_\delta(b)} g'(y).
	\end{align}
	Analogously, the same inequality follows for any $a \le b$ and $y \le 0$, where the monotone likelihood ratio above holds with the reverse inequality but at the same time $g'(y) \ge 0$ \edit{as $P \in \tilde{\mathcal{P}}$.}
	Thus, for any $x>z$,
	\begin{align}
		\bar V^K_\delta(x, P) &=  \int  \tilde K_\delta (x+\mode(P)-y) g'(y) dy \\
		&=  \int  \frac{\tilde K_\delta (x+\mode(P)-y)}{\tilde K_\delta (z+\mode(P)-y)} \tilde K_\delta (z+\mode(P)-y) g'(y) dy \\
		&\ge  \frac{\tilde K_\delta (x+\mode(P))}{\tilde K_\delta (z+\mode(P))} \int \tilde K_\delta (z+\mode(P)-y) g'(y) dy \\
		&=  \frac{\tilde K_\delta (x+\mode(P))}{\tilde K_\delta (z+\mode(P))} \; \bar V^K_\delta(z, P).
	\end{align}
	For any root $x^\ast$ such that $\bar V^K_\delta(x^\ast, P)= 0$, it follows that $\bar V^K_\delta(x_u, P) \ge 0$ for all $x_u > x^\ast$ and $\bar V^K_\delta (x_l, P) \le 0$ for all $x_l < x^\ast$ as the kernel $\tilde K_\delta$ has infinite support.
	If there exist two distinct roots $x_l^\ast < x_u^\ast$ of $\bar V^K_\delta (\cdot, P)$, the above implies that $\bar V^K_\delta(x, P)=0$ for all $x \in [x_l^\ast, x_u^\ast]$, which implies that the roots of $\bar V^K_\delta (\cdot, P)$ constitute a closed interval. 
	As $\bar L_\delta^K(x,P)$ has a unique maximum for all $P \in \tilde{\mathcal{P}} \subset \mathcal{P}$ by part \ref{statement:GenModalMidpointWellDefined} and $\bar V^K_\delta(x, P) = - \frac{\partial}{\partial x} \bar L_\delta^K(x,P)$, it follows that $\bar V^K_\delta(x, P) $ must have a unique root, which implies that $V_\delta^K(x,Y)$ is a strict identification function for the generalized modal midpoint.
\end{proof}

\begin{proof}[Proof of Theorem \ref{thm:ModeRationality}]
	We define
	\begin{align}
		g_{t,T} &:= \delta_T^{3/2}  T^{-1/2} \psi(Y_{t+1},X_t,\bh, \delta_T)  
		=  - (T\delta_T)^{-1/2} K'\left( \frac{X_t-Y_{t+1}}{\delta_T} \right) \bh, \\
		g_{t,T}^e &:=  \mathbb{E}_t[g_{t,T} ], \qquad \text{ and } \qquad
		g_{t,T}^\ast :=g_{t,T} - g_{t,T}^e,
	\end{align}
	such that  
	\begin{align}
		\delta_T^{3/2} T^{-1/2}  \sum_{t=1}^T \psi(Y_{t+1},X_t,\bh, \delta_T) 
		= \sum_{t=1}^T g_{t,T}^e + \sum_{t=1}^T g_{t,T}^\ast. 
	\end{align}
		%
		%
	%
	%
	Lemma \ref{lemma:ExpectationPsiTo0} shows that $\sum_{t=1}^T g_{t,T}^e \toP 0$.
	Thus, it remains to shows that $\sum_{t=1}^T g_{t,T}^\ast \tod \mathcal{N} \big( 0, \Omega_{\mathrm{Mode}} \big)$.
	For some arbitrary, but fixed $\lambda \in \mathbb{R}^k, || \lambda ||_2 = 1$, we define 
	\begin{align}
		z_{t,T} := \lambda^\top g_{t,T}^\ast, \qquad
		\bar \omega_T^2 := \sum_{t=1}^T \Var(z_{t,T}),	 \qquad
		h_{t,T} := \frac{z_{t,T}}{\bar \omega_T},
		\qquad \text{ and } \qquad
		\omega^2 := \lambda^\top \Omega_\mathrm{Mode} \lambda,
	\end{align}
	and show that a univariate CLT for martingale difference arrays (MDA) holds for $\sum_{t=1}^T h_{t,T}$.
	It obviously holds that $\big( g_{t,T}^\ast, \mathcal{F}_{t+1} \big)$ is a MDA as $g_{s,T}^\ast \in \mathcal{F}_{t+1}$ for all $s \le t$ and 
	$\mathbb{E}_t \big[  g_{t,T}^\ast \big] = 0$ a.s.\ by definition.
	Thus, $\big( z_{t,T}, \mathcal{F}_{t+1} \big)$ and $\big( h_{t,T}, \mathcal{F}_{t+1} \big)$ are also MDAs.
	
	In the following, we verify the following three conditions of Theorem 24.3 of \cite{davidson1994stochastic}:
	(a) $\sum_{t=1}^T \Var( h_{t,T}) = 1$, (b) $\sum_{t=1}^T h_{t,T}^2 \toP 1$,  and (c) $\operatorname{max}_{1 \le t \le T} |h_{t,T}| \toP 0$.
		%
		%
	Lemma \ref{lemma:ConvergenceAverageVariance} shows that $ \bar \omega_T^2 = \sum_{t=1}^T \Var[ z_{t,T} ] \to  \lambda^\top \Omega_\mathrm{Mode} \lambda = \omega^2$.
	Thus, as $\Omega_\text{Mode}$ is assumed to be positive definite, $\bar \omega_T^2$ is strictly positive for all $T$ sufficiently large and hence, $h_{t,T}$ is well-defined and $\sum_{t=1}^T \Var( h_{t,T}) = 1$, which shows condition (a).
	Lemma \ref{lemma:LLNzSquared} shows that $\sum_{t=1}^T z_{t,T}^2  \toP  \omega^2$ as $T \to \infty$ which implies condition (b), i.e. $\sum_{t=1}^T h_{t,T}^2 = \sum_{t=1}^T \frac{z_{t,T}^2}{\bar \omega_T^2}  \toP 1$.
	Eventually, Lemma \ref{lemma:MaxConvZero} shows condition (c) and we can apply Theorem 24.3 of \cite{davidson1994stochastic} in order to conclude that for all  $\lambda \in \mathbb{R}^k, || \lambda||_2 = 1$, it holds that $\sum_{t=1}^T h_{t,T} \tod \mathcal{N}(0,1)$.
	As $\bar \omega_T^2 \to \omega^2$, Slutsky's theorem implies that $\sum_{t=1}^T z_{t,T} =\sum_{t=1}^T \lambda^\top g_{t,T}^\ast \tod \mathcal{N} \big( 0, \omega^2 \big)$, and  as this holds for all $\lambda \in \mathbb{R}^k, || \lambda||_2 = 1$, we apply the Cram\'er-Wold theorem and get that $\sum_{t=1}^T g_{t,T}^\ast \tod \mathcal{N} \big( 0, \Omega_{\mathrm{Mode}} \big)$, which concludes the proof of this theorem.
\end{proof}

\begin{proof}[Proof of Theorem \ref{thm:ConsistencyCovEstimation}]
	Let $\lambda \in \mathbb{R}^k$, $||\lambda||_2 = 1$ be a fixed and deterministic vector.
	Then,
	\begin{align}
		\begin{aligned}
			\label{eqn:CovConstSplit}
			&\lambda^\top \widehat{\Omega}_{T,\mathrm{Mode}} \lambda - \lambda^\top \Omega_{T,\mathrm{Mode}}  \lambda \\
			= \, &\frac{1}{T} \sum_{t=1}^T \delta_T^{-1} K' \left( \frac{X_t-Y_{t+1}}{\delta_T} \right)^2  (\lambda^\top \bh)^2
			- \frac{1}{T} \sum_{t=1}^T \mathbb{E}_t \left[  \delta_T^{-1} K' \left( \frac{X_t-Y_{t+1}}{\delta_T} \right)^2  (\lambda^\top \bh)^2 \right] \\
			+ \, &\frac{1}{T} \sum_{t=1}^T \mathbb{E}_t \left[  \delta_T^{-1} K' \left( \frac{X_t-Y_{t+1}}{\delta_T} \right)^2  (\lambda^\top \bh)^2 \right]
			- \frac{1}{T} \sum_{t=1}^T \mathbb{E} \left[ (\lambda^\top \bh)^2 f_{t}(0) \int K'(u)^2 \mathrm{d}u \right].
		\end{aligned}
	\end{align}
	We start by showing that the last line in (\ref{eqn:CovConstSplit}) is $o_P(1)$.
	It holds that
	\begin{align}
		&\frac{1}{T} \sum_{t=1}^T \mathbb{E}_t \left[  \delta_T^{-1} K' \left( \frac{X_t-Y_{t+1}}{\delta_T} \right)^2  (\lambda^\top \bh)^2 \right]
		- \frac{1}{T} \sum_{t=1}^T \mathbb{E} \left[ (\lambda^\top \bh)^2 f_{t}(0) \int K'(u)^2 \mathrm{d}u \right] \\
		=\;&\frac{1}{T} \sum_{t=1}^T (\lambda^\top \bh)^2 \int K' \left( u \right)^2 f_t(\delta_T u) \mathrm{d}u -\frac{1}{T} \sum_{t=1}^T \mathbb{E} \left[ (\lambda^\top \bh)^2 \int K'(u)^2  f_{t}(0) \mathrm{d}u \right]
		\toP 0
	\end{align}
	as $ f_t(\delta_T u) \to  f_t(0) \le c$ a.s., and by applying a law of large numbers for mixing data as $\mathbb{E} \left[ ||\bh||^{2r+\delta} \right] < \infty$.

	We further show that the penultimate line in (\ref{eqn:CovConstSplit}) converges to zero in $L_p$ ($p$-th mean) for some $p > 1$ small enough.
	By applying the \cite{BahrEsseen1965} inequality for MDA (notice that the summands of the sum in the first line of  \eqref{eqn:BahrEsseen} naturally form an MDA without imposing our null hypothesis), we get 
	\begin{align}
		\begin{aligned}
		\label{eqn:BahrEsseen}
		&\mathbb{E} \left[ \left| \frac{1}{T} \sum_{t=1}^T \delta_T^{-1} K' \left( \frac{\varepsilon_t}{\delta_T} \right)^2  (\lambda^\top \bh)^2
		- \frac{1}{T} \sum_{t=1}^T \mathbb{E}_t \left[  \delta_T^{-1} K' \left( \frac{\varepsilon_t}{\delta_T} \right)^2  (\lambda^\top \bh)^2 \right] \right|^p \right] \\
		\le \, & 2 T^{-p} \sum_{t=1}^T \mathbb{E} \left[  \left| \delta_T^{-1} K' \left( \frac{\varepsilon_t}{\delta_T} \right)^2  (\lambda^\top \bh)^2 \right|^p  \right]
		+ 2 T^{-p} \sum_{t=1}^T \mathbb{E}  \left[  \left|  \mathbb{E}_t \left[  \delta_T^{-1} K' \left( \frac{\varepsilon_t}{\delta_T} \right)^2  (\lambda^\top \bh)^2 \right] \right|^p \right].
		\end{aligned}
	\end{align}
	For the first term in \eqref{eqn:BahrEsseen},  we get that
	\begin{align*}
		&T^{-p} \sum_{t=1}^T \mathbb{E} \left[  \left| \delta_T^{-1} K' \left( \frac{\varepsilon_t}{\delta_T} \right)^2  (\lambda^\top \bh)^2 \right|^p  \right]
		= (T \delta_T)^{-p} \sum_{t=1}^T \mathbb{E} \left[  \left| \lambda^\top \bh \right|^{2p} \int \left| K' \left( \frac{e}{\delta_T} \right) \right|^{2p} f_t(e) \mathrm{d}e \right] \\
		= \, &(T \delta_T)^{1-p} \frac{1}{T} \sum_{t=1}^T  \mathbb{E} \left[ \left| \lambda^\top \bh \right|^{2p} \int \left| K' \left( u \right) \right|^{2} f_t(\delta_T u) \mathrm{d}u \right]
		\to 0,
	\end{align*}
	as $(T \delta_T)^{1-p} \to 0$ for any $p > 1$, $\mathbb{E} \left[ ||\bh||^{2p} \right] < \infty$ for $p > 1$ small enough, the density $f_t$ is bounded from above, and $\int |K'(u)|^2 \mathrm{d}u < \infty$ by assumption. 
	The second term in  \eqref{eqn:BahrEsseen} converges by a similar argument as further detailed in (\ref{eqn:DetailsSecondTermBahrEsseenIneq}) in the proof of Lemma \ref{lemma:LLNzSquared}.
	As $L_p$ convergence  for any $p > 1$ implies convergence in probability, the result of the theorem follows.
\end{proof}

\begin{proof}[Proof of Theorem \ref{thm:ModeRationalityLocalAlternative}]
	We let
	\begin{align*}
		g_{t,T} &:= \delta_T^{3/2}  T^{-1/2} \psi(Y_{t+1},X_t,\bh, \delta_T)  
		=  - (T\delta_T)^{-1/2} K'\left( \frac{X_t-Y_{t+1}}{\delta_T} \right) \bh, \\
		g_{t,T}^e &:=  \mathbb{E}_t[g_{t,T} ], \qquad \text{ and } \qquad
		g_{t,T}^\ast :=g_{t,T} - g_{t,T}^e,
	\end{align*}
	such that $\sum_{t=1}^T g_{t,T} = \sum_{t=1}^T g_{t,T}^e + \sum_{t=1}^T g_{t,T}^\ast$
	as in the proof of Theorem \ref{thm:ModeRationality}.
	Then, it holds that $\sum_{t=1}^T g_{t,T}^\ast  \tod \mathcal{N}(0,\Omega_{\mathrm{Mode}})$ as in the proof of Theorem \ref{thm:ModeRationality}.
	Notice for this that the employed Lemmas \ref{lemma:ConvergenceAverageVariance}--\ref{lemma:MaxConvZero} do not require the null hypothesis in \eqref{eqn:NullHypothesisMode}.
	
	We continue to analyze the term $\sum_{t=1}^T g_{t,T}^e$ as in the proof of Lemma \ref{lemma:ExpectationPsiTo0}. 
	However, the Taylor expansion in \eqref{eqn:ProofLemmaTaylorExpansion} now entails another term as we now impose $\mathbb{H}_{A, \text{loc}}$ in \eqref{eqn:LocalAlternativeHypothesis} instead of using $f_t'(0) = 0$ that resulted from $\mathbb{H}_{0}$ in \eqref{eqn:NullHypothesisMode}.
	Hence, using that $\int K( u ) \mathrm{d}u = 1$, we get
	\begin{align}
		\sum_{t=1}^T g_{t,T}^e 
		= T^{-1/2} \delta_T^{3/2} \sum_{t=1}^T  f'_{t} (0) \bh  + o_P(1)
		&= c \cdot \left( a_T T^{1/2} \delta_T^{3/2} + o_P(1) \right) + o_P(1)
	\end{align}
	
	For statement \ref{enm:LocPower}, $a_T T^{1/2} \delta_T^{3/2} \toP 1$, and thus, $\sum_{t=1}^T g_{t,T}^e = c + o_P(1)$.
	As Theorem \ref{thm:ConsistencyCovEstimation} shows that $\widehat{\Omega}_{T,\mathrm{Mode}} - \Omega_{T, \textrm{Mode}} \toP 0$ (without imposing our null hypothesis in \eqref{eqn:NullHypothesisMode}), applying Slutzky's theorem implies
	$\widehat{\Omega}_{T,\mathrm{Mode}}^{-1/2} \sum_{t=1}^T g_{t,T} \tod \mathcal{N} \big( \Omega_{\mathrm{Mode}}^{-1/2} c, I_k \big)$.
	Hence, $J_T$ converges to a non-central $\chi^2$-distribution,
	\begin{align*}
		J_T 
		=  \left( \widehat{\Omega}_{T,\mathrm{Mode}}^{-1/2} \sum_{t=1}^T g_{t,T} \right)^\top  \left( \widehat{\Omega}_{T,\mathrm{Mode}}^{-1/2} \sum_{t=1}^T g_{t,T} \right)
		\tod \chi^2_k \big( c^\top \Omega_{\textrm{Mode}}^{-1} c \big).
	\end{align*}
	
	For statement \ref{enm:LocPower_Null}, $a_T T^{1/2} \delta_T^{3/2} = o_P(1)$ and hence, $\sum_{t=1}^T g_{t,T}^e  = o_P(1)$ putting us exactly in the situation of Theorem \ref{thm:ModeRationality}.
	
	For statement \ref{enm:LocPower_UnifPower}, it holds that $\left( a_T T^{1/2} \delta_T^{3/2} \right)^{-1} \toP 0$, which implies that for any $ \bar c \in \mathbb{R}$,
	$\mathbb{P} \left( \left| \sum_{t=1}^T g_{t,T}^e \right| \ge \bar c \right) \to 1$,
	and consequently also
	$\mathbb{P} \left( \left| \sum_{t=1}^T g_{t,T} \right| \ge \bar c \right) \to 1$.
	As furthermore  $J_T =  \left( \sum_{t=1}^T g_{t,T} \right)^\top  \widehat{\Omega}_{T,\mathrm{Mode}}^{-1} \left( \sum_{t=1}^T g_{t,T} \right)$ and $\widehat{\Omega}_{T,\mathrm{Mode}} - \Omega_{T,\mathrm{Mode}} \toP 0$ by Theorem \ref{thm:ConsistencyCovEstimation}, where $\Omega_{T,\mathrm{Mode}}$ is uniformly positive definite for $T$ large enough, the conditions of Theorem 8.13 of \cite{White1994} are satisfied and we can conclude that for any $\bar c \in \mathbb{R}$, $\mathbb{P} \left( \left| J_T \right| \ge \bar c \right) \to 1$, which concludes the proof of this theorem.
\end{proof}

\begin{proof}[Proof of Theorem \ref{thm:GMMWeakId}]
	For all fixed $\lambda \in \mathbb{R}^k$ such that $||\lambda||_2 = 1$, we define
	\begin{align}
		\begin{aligned}
			\label{eqn:Defsigma}
			\sigma_T^{2} :=  \lambda^\top \Sigma_T(\theta_0) \lambda 
			&= \frac{1}{T} \sum_{t=1}^T \mathbb{E} \left[ \theta_{10}^2 \left( \bh^\top  \WMean \lambda  \right)^2 \varepsilon_t^2 + \theta_{20}^2 \left( \bh^\top  \WMed  \lambda \right)^2 \big( \mathds{1}_{\{\varepsilon_t > 0\}} - \mathds{1}_{\{\varepsilon_t < 0 \}} \big)^2 \right. \\
			&\qquad \qquad \quad+ \theta_{30}^2 \left( \bh^\top \WMode  \lambda \right)^2 f_t(0) \int K'(u)^2 \mathrm{d}u \\
			&\qquad \qquad \quad \left. + 2 \theta_{10} \theta_{20}\left( \bh^\top \WMean \lambda  \right) \left( \bh^\top  \WMed \lambda \right)
			\varepsilon_t \big( \mathds{1}_{\{\varepsilon_t > 0\}} - \mathds{1}_{\{\varepsilon_t < 0 \}} \big) \right].
		\end{aligned}
	\end{align}
	Lemma \ref{lemma:WeakIdGMM_SumVariance} shows that $\sum_{t=1}^T \Var \big(T^{-1/2}  \phi_{t,T}^\ast (\theta_0) \lambda \big) - \sigma_T^2  \to 0$.
	As $\sigma^2 := \lambda^\top \Sigma(\theta_0) \lambda$, which is strictly positive by assumption, is the limit of $\sigma_T^2$, the latter must be strictly positive for $T$ large enough and consequently, $\sigma_T^{-1}  T^{-1/2} \sum_{t=1}^T \phi_{t,T}^\ast (\theta_0)  \lambda$ is well-defined.
	In the following, we first show that $\sigma_T^{-1}  T^{-1/2} \sum_{t=1}^T \phi_{t,T}^\ast (\theta_0)  \lambda  \tod \mathcal{N} \big( 0, 1 \big)$ by applying Theorem 24.3 in \cite{davidson1994stochastic} and by verifying that the respective conditions hold (with $X_{t,T} = \sigma_T^{-1}  T^{-1/2} \phi_{t,T}^\ast (\theta_0) \lambda$).
	
	Lemma \ref{lemma:WeakIdGMM_phito1} shows that $T^{-1} \sum_{t=1}^T \big( \phi_{t,T}^\ast (\theta_0) \lambda \big)^2  - \sigma_T^{2} \toP 0$, which implies condition (a) of Theorem 24.3 of \cite{davidson1994stochastic}, i.e.\ $\sigma_T^{-2} T^{-1} \sum_{t=1}^T \big( \phi_{t,T}^\ast (\theta_0) \lambda \big)^2  \toP 1$.
	Lemma \ref{lemma:WeakIdGMM_maxto0} shows condition (b), i.e.\ $\max_{t=1,\dots,T} \big| \sigma_T^{-1} T^{-1/2} \phi_{t,T}^\ast (\theta_0)   \lambda \big|  \toP 0$.
	Thus, we can apply Theorem 24.3 of \cite{davidson1994stochastic} and conclude that $\sigma_T^{-1} T^{-1/2} \sum_{t=1}^T \phi_{t,T}^\ast (\theta_0)  \lambda \tod \mathcal{N} \big( 0, 1 \big)$.	
	
	As $\sigma_T^2$ has the limit $\sigma^2$, Slutsky's theorem implies that $T^{-1/2} \sum_{t=1}^T \phi_{t,T}^\ast (\theta_0)  \lambda \tod \mathcal{N}(0, \sigma^2)$ and as this holds for all $\lambda \in \mathbb{R}^k$ such that $||\lambda||_2 = 1$, we can conclude that $T^{-1/2} \sum_{t=1}^T \phi_{t,T}^\ast (\theta_0) \tod \mathcal{N} \big( 0, \Sigma(\theta_0) \big)$.
	Furthermore, $ || T^{-1/2} \sum_{t=1}^T (\phi_{t,T}^\ast (\theta_0)  -  \phi_{t,T} (\theta_0)) || = || T^{-1/2} \sum_{t=1}^T u_{t,T}(\theta_0) || \toP 0$ by Assumption \ref{assu:GMMWeakIDRegCond} (D) and 
	\begin{align*}
		T^{-1/2} \sum_{t=1}^T \left( \widehat \phi_{t,T} (\theta_0)  -  \phi_{t,T} (\theta_0) \right)
		= 
		T^{-1/2} \sum_{t=1}^T   \theta_0 \cdot
		\begin{pmatrix}
			\bh^\top \left( \WMeanHat  - \WMean \right) \varepsilon_t \\
			\bh^\top  \left( \WMedHat  - \WMed  \right) \big( \mathds{1}_{\{\varepsilon_t > 0\}} - \mathds{1}_{\{\varepsilon_t < 0 \}} \big)   \\
			\bh^\top  \left(   \WModeHat  - \WMode \right) \delta_T^{-1/2}  K' \left( \frac{- \varepsilon_t}{\delta_T} \right)
		\end{pmatrix} 
		\toP 0,
	\end{align*}
	as it holds that $\WMeanHat \toP \WMean$, $\WMedHat \toP \WMed$ and $\WModeHat \toP \WMode$ by assumption.
	Hence we can conclude that $T^{-1/2}  \sum_{t=1}^T \widehat \phi_{t,T} (\theta_0) \tod \mathcal{N} \big( 0, \Sigma(\theta_0) \big)$, which concludes the proof of this theorem.
\end{proof}

\begin{proof}[Proof of Theorem \ref{thm:GMMWeakIDCovarianceConsistency}]
	For notational simplicity, we show consistency of the covariance estimator by considering the bilinear forms $\lambda^\top \left( \frac{1}{T} \sum_{t=1}^T \widehat \phi_{t,T} (\theta_0) \widehat \phi_{t,T} (\theta_0)^\top \right) \lambda$ and $\sigma_T^{2} = \lambda^\top \Sigma_T(\theta_0) \lambda$, given in (\ref{eqn:Defsigma}), for some arbitrary but fixed $\lambda \in \mathbb{R}^k$ such that $||\lambda||_2 = 1$.
	Then, we get that
	\begin{align}
		&\lambda^\top \left( \frac{1}{T} \sum_{t=1}^T \widehat \phi_{t,T} (\theta_0) \widehat \phi_{t,T} (\theta_0)^\top \right) \lambda 
		= \frac{1}{T} \sum_{t=1}^T \left\{ \theta_{10}^2 \left( \bh^\top \WMeanHat \lambda  \right)^2 \varepsilon_t^2 \right. \\
		&\qquad+ \theta_{20}^2 \left( \bh^\top \WMedHat  \lambda \right)^2 \big( \mathds{1}_{\{\varepsilon_t > 0\}} - \mathds{1}_{\{\varepsilon_t < 0 \}} \big)^2  
		+ \theta_{30}^2 \left( \bh^\top \WModeHat \lambda \right)^2 \delta_T^{-1}  K' \left( \frac{-\varepsilon_t}{\delta_T} \right)^2 \\
		&\qquad+ 2 \theta_{10} \theta_{20}\left( \bh^\top \WMeanHat \lambda  \right) \left( \bh^\top  \WMedHat \lambda \right) \varepsilon_t \big( \mathds{1}_{\{\varepsilon_t > 0\}} - \mathds{1}_{\{\varepsilon_t < 0 \}} \big) \\
		&\qquad+ 2 \theta_{10} \theta_{30} \left( \bh^\top \WMeanHat \lambda  \right) \left( \bh^\top  \WModeHat \lambda \right) \varepsilon_t  \ \delta_T^{-1/2}  K' \left( \frac{-\varepsilon_t}{\delta_T} \right) \\		
		&\qquad \left.+ 2 \theta_{20} \theta_{30} \left( \bh^\top \WMedHat \lambda  \right) \left( \bh^\top  \WModeHat \lambda \right)  \big( \mathds{1}_{\{\varepsilon_t > 0\}} - \mathds{1}_{\{\varepsilon_t < 0 \}} \big)   \ \delta_T^{-1/2}  K' \left( \frac{-\varepsilon_t}{\delta_T} \right) \right\}.
	\end{align}
	We show convergence in probability for the individual matrix components for the first term,
	\begin{align}
		&\frac{1}{T} \sum_{t=1}^T \theta_{10}^2 \left( \bh^\top \WMeanHat \lambda  \right)^2 \varepsilon_t^2 
		- \frac{1}{T} \sum_{t=1}^T  \mathbb{E} \left[ \theta_{10}^2 \left( \bh^\top \WMean \lambda  \right)^2 \varepsilon_t^2 \right] 
		\toP 0
	\end{align}
	Convergence of the remaining terms follows analogously by considering the terms component-wisely and by applying similar arguments as in Lemma \ref{lemma:WeakIdGMM_phito1}.
	%
\end{proof}

\section{Technical Lemmas and Their Proofs} \label{sec:TechnicalProofs}

\begin{lemma}
	\label{lemma:ExpectationPsiTo0}
	Given Assumption \ref{assu:ModeRationality} and the null hypothesis in (\ref{eqn:NullHypothesisMode}), it holds that $\sum_{t=1}^T g_{t,T}^e \toP 0$.
\end{lemma}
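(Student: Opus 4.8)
The plan is to obtain a closed form for the conditional-mean part $g_{t,T}^e = \mathbb{E}_t[g_{t,T}]$ and show that, under the null, its sum is of strictly smaller order than the bandwidth scaling permits. First I would use that $\bh$ and $X_t$ are $\mathcal{F}_t$-measurable to write
\begin{align}
g_{t,T}^e = -(T\delta_T)^{-1/2}\, \bh \, \mathbb{E}_t\!\left[ K'\!\left( \frac{\varepsilon_t}{\delta_T} \right) \right] = -(T\delta_T)^{-1/2}\, \bh \int K'\!\left( \frac{e}{\delta_T} \right) f_t(e)\, \mathrm{d}e .
\end{align}
Substituting $u = e/\delta_T$ and integrating by parts in $u$ (the boundary term vanishes because $K(u)\to 0$ at $\pm\infty$ while $f_t$ is bounded) transfers the derivative from the kernel onto the density and, after summing over $t$, yields
\begin{align}
\sum_{t=1}^T g_{t,T}^e = T^{-1/2}\delta_T^{3/2}\sum_{t=1}^T \bh \int K(u)\, f_t'(\delta_T u)\, \mathrm{d}u ,
\end{align}
which is exactly the representation reused in the proof of Theorem \ref{thm:ModeRationalityAlternative}.

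Next I would Taylor-expand $f_t'(\delta_T u)$ about $0$,
\begin{align}
f_t'(\delta_T u) = f_t'(0) + (\delta_T u) f_t''(0) + \frac{(\delta_T u)^2}{2} f_t'''(\zeta \delta_T u),
\end{align}
for some $\zeta = \zeta_{t,u} \in [0,1]$, and exploit two cancellations. Under the null $X_t = \mode(Y_{t+1}\mid\mathcal{F}_t)$ the conditional density of $\varepsilon_t$ attains its maximum at $0$, so $f_t'(0)=0$ and the leading term drops out; the first-order term integrates to $\delta_T f_t''(0)\int u K(u)\,\mathrm{d}u = 0$ by the mean-zero kernel condition \ref{assu:Kernel}(ii). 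Only the quadratic remainder survives, so
\begin{align}
\sum_{t=1}^T g_{t,T}^e = \tfrac12 (T\delta_T^7)^{1/2}\, \frac{1}{T}\sum_{t=1}^T \bh \int u^2 K(u)\, f_t'''(\zeta \delta_T u)\, \mathrm{d}u .
\end{align}

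Finally I would bound the averaged term: since $f_t'''$ is uniformly bounded by \ref{assu:ConditionalDensity}, $\int u^2 K(u)\,\mathrm{d}u < \infty$ by \ref{assu:Kernel}(v), and $\tfrac1T\sum_{t=1}^T \|\bh\| \toP \mathbb{E}\|\bh\| < \infty$ by the ergodic theorem under \ref{assu:StationaryErgodicSequence}--\ref{assu:Moments}, the average is $O_P(1)$. Because $T\delta_T^7 \to 0$ by \ref{assu:Bandwidth}(ii), the prefactor $(T\delta_T^7)^{1/2}\to 0$ and the product converges to zero in probability, as claimed. The main obstacle is conceptual rather than computational: the whole argument hinges on the identification $f_t'(0)=0$ supplied by the null, which is precisely where this lemma diverges from the behavior under $\mathbb{H}_A$ --- there $f_t'(0)\neq 0$ leaves a term of order $(T\delta_T^3)^{1/2}$ that blows up, whereas here its disappearance, combined with the mean-zero kernel, pushes the expected part down to order $(T\delta_T^7)^{1/2}$. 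The only technical care needed is justifying the vanishing boundary term in the integration by parts and the uniform (in $t$ and $u$) control of the Taylor remainder.
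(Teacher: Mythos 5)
Your proposal is correct and follows essentially the same route as the paper's proof: integration by parts with a vanishing boundary term, the change of variables $u = e/\delta_T$, a second-order Taylor expansion of $f_t'(\delta_T u)$ exploiting $f_t'(0)=0$ under the null and the mean-zero kernel condition, and a final bound of order $(T\delta_T^7)^{1/2}$ killed by Assumption \ref{assu:Bandwidth}(ii) together with the ergodic LLN for $\bh$. No gaps.
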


\begin{proof}
	Applying integration by parts yields that
	\begin{align}
		g_{t,T}^e &= - \mathbb{E}_t \left[ (T \delta_T)^{-1/2}  K' \left( \frac{ \varepsilon_t }{\delta_T} \right) \bh \right]
		= - (T \delta_T)^{-1/2}  \bh \int   K' \left( \frac{ e }{\delta_T} \right)  f_{t} (e) \, \mathrm{d}e \\
		&= T^{-1/2}  \delta_T^{1/2} \bh \int   K \left( \frac{ e }{\delta_T} \right) f'_{t} (e) \, \mathrm{d}e 
		- T^{-1/2}  \delta_T^{1/2} \bh  \left[ K \left( \frac{ e }{\delta_T} \right)  f_{t} (e) \right]_{e=-\infty}^{e=\infty}.
	\end{align}
	As $\lim_{e \to \pm \infty} K(e) = 0$ and $f_t$ is bounded from above, the latter term is zero a.s.\ for all $t \le T$.
	By transformation of variables, it further holds that
	\begin{align}
		\label{eqn:PsiTransformationOfVariables}
		g_{t,T}^e 
		=  T^{-1/2}  \delta_T^{1/2} \bh \int   K \left( \frac{ e }{\delta_T} \right) f'_{t} (e) \, \mathrm{d}e 
		=  T^{-1/2}  \delta_T^{3/2} \bh \int   K \left( u \right) f'_{t} (\delta_T u) \, \mathrm{d}u.
	\end{align}
	A Taylor expansion of $f'_t (\delta_T u)$ around zero is given by
	\begin{align}
		\label{eqn:ProofLemmaTaylorExpansion}
		f'_{t} (\delta_T u) 
		= f'_{t} (0) 
		+(\delta_T u)  f''_{t} (0) 
		+\frac{(\delta_T u)^2}{2}  f'''_{t} (\zeta \delta_T u), 
	\end{align}
	for some $\zeta \in [0,1]$ and $f'_{t} (0) = 0$ holds under the null hypothesis specified in (\ref{eqn:NullHypothesisMode}).
	Consequently,
	\begin{align}
		\sum_{t=1}^T g_{t,T}^e 
		= \, & T^{-1/2}  \delta_T^{5/2} \sum_{t=1}^T  f''_{t} (0) \bh \int  u K \left( u \right)  \, \mathrm{d}u \\
		&+ \frac{1}{2} T^{-1/2}  \delta_T^{7/2} \sum_{t=1}^T \bh \int  u^2 K \left( u \right) f'''_{t} (\zeta \delta_T u) \, \mathrm{d}u.
	\end{align}
	As $\int  u K \left( u \right)  \, \mathrm{d}u = 0 $ by assumption \ref{assu:Kernel}, the first term is zero for all $T \in \mathbb{N}$.
	As $\sup_x f'''_{t}(x) \le c$ by Assumption \ref{assu:ConditionalDensity} and $\int  u^2 K \left( u \right)  \mathrm{d}u \le c < \infty$ by Assumption \ref{assu:Kernel}, we obtain
	\begin{align}
		\frac{1}{2} T^{-1/2}  \delta_T^{7/2} \sum_{t=1}^T \bh \int  u^2 K \left( u \right) f'''_{t} (\zeta \delta_T u) \, \mathrm{d}u  
		\le \frac{1}{2} c^2 \big( T  \delta_T^{7} \big)^{1/2} \frac{1}{T} \sum_{t=1}^T  \bh  \toP 0,
	\end{align}
	as $T \delta_T^7 \to 0$ for $T \to \infty$ by Assumption \ref{assu:Bandwidth} and $ \frac{1}{T} \sum_{t=1}^T  \bh \toP \mathbb{E}[\bh]$ by a law of large numbers for $\alpha$-mixing sequences \citep[Corollary 3.48]{White2001}.
	The result of the lemma follows.
\end{proof}

\begin{lemma}
	\label{lemma:ConvergenceAverageVariance}
	Given Assumption \ref{assu:ModeRationality}, 
	it holds that  
	$\sum_{t=1}^T \Var \left( z_{t,T} \right) \to \omega^2 = \lambda^\top \Omega_{\mathrm{Mode}} \lambda$.
\end{lemma}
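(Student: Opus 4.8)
The plan is to exploit that $z_{t,T} = \lambda^\top g_{t,T}^\ast$ is a mean-zero martingale difference increment, so that $\Var(z_{t,T}) = \mathbb{E}[z_{t,T}^2]$, and then to collapse the sum of these second moments to a single expectation via stationarity before passing to the limit by dominated convergence. First I would record the algebraic simplification. Since $g_{t,T}^e = \mathbb{E}_t[g_{t,T}]$ is $\mathcal{F}_t$-measurable, the law of iterated expectations gives $\mathbb{E}[(\lambda^\top g_{t,T})(\lambda^\top g_{t,T}^e)] = \mathbb{E}[(\lambda^\top g_{t,T}^e)^2]$, and hence
\[
\Var(z_{t,T}) = \mathbb{E}\big[(\lambda^\top g_{t,T}^\ast)^2\big] = \mathbb{E}\big[(\lambda^\top g_{t,T})^2\big] - \mathbb{E}\big[(\lambda^\top g_{t,T}^e)^2\big].
\]
It then suffices to treat the two resulting sums separately.

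For the leading term I would compute the conditional second moment directly. Using $g_{t,T} = -(T\delta_T)^{-1/2} K'(\varepsilon_t/\delta_T)\bh$ and the substitution $u = e/\delta_T$,
\[
\mathbb{E}_t\big[(\lambda^\top g_{t,T})^2\big] = \frac{(\lambda^\top \bh)^2}{T\delta_T} \int K'\!\Big(\tfrac{e}{\delta_T}\Big)^2 f_t(e)\,\mathrm{d}e = \frac{(\lambda^\top \bh)^2}{T} \int K'(u)^2 f_t(\delta_T u)\,\mathrm{d}u.
\]
Taking expectations and summing, stationarity (Assumption \ref{assu:StationaryErgodicSequence}) makes each summand identical, so $\sum_{t=1}^T \mathbb{E}[(\lambda^\top g_{t,T})^2] = \mathbb{E}\big[(\lambda^\top \bh)^2 \int K'(u)^2 f_t(\delta_T u)\,\mathrm{d}u\big]$. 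Since $f_t$ is continuous and bounded by a constant $c$ (Assumption \ref{assu:ConditionalDensity}), the inner integral converges pointwise to $f_t(0)\int K'(u)^2\,\mathrm{d}u$ and is dominated by $c\int K'(u)^2\,\mathrm{d}u$, which is finite by the kernel conditions in Assumption \ref{assu:Kernel}. Multiplying by $(\lambda^\top \bh)^2$, which is integrable by Assumption \ref{assu:Moments}, the dominated convergence theorem yields $\sum_{t=1}^T \mathbb{E}[(\lambda^\top g_{t,T})^2] \to \mathbb{E}[(\lambda^\top \bh)^2 f_t(0)]\int K'(u)^2\,\mathrm{d}u = \lambda^\top \Omega_{\mathrm{Mode}}\lambda = \bar\omega^2$.

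It remains to show the correction term is negligible. Here I would reuse the representation of $g_{t,T}^e$ from the proof of Lemma \ref{lemma:ExpectationPsiTo0}: by (\ref{eqn:PsiTransformationOfVariables}), $\lambda^\top g_{t,T}^e = T^{-1/2}\delta_T^{3/2}(\lambda^\top\bh)\int K(u) f_t'(\delta_T u)\,\mathrm{d}u$. Under the null $f_t'(0)=0$, so the Taylor expansion (\ref{eqn:ProofLemmaTaylorExpansion}) together with $\int u K(u)\,\mathrm{d}u = 0$ and the uniform bound on $f_t'''$ gives $\int K(u) f_t'(\delta_T u)\,\mathrm{d}u = O(\delta_T^2)$. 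Hence $(\lambda^\top g_{t,T}^e)^2 = O(T^{-1}\delta_T^7)(\lambda^\top\bh)^2$, and summing yields $\sum_{t=1}^T \mathbb{E}[(\lambda^\top g_{t,T}^e)^2] = O(\delta_T^7)\,\mathbb{E}[(\lambda^\top \bh)^2] \to 0$ because $\delta_T \to 0$. Combining the two pieces proves the claim.

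The main obstacle I anticipate is not any single hard estimate but the bookkeeping in the dominated-convergence step: one must pass the limit $\delta_T\to 0$ simultaneously through the integral over $u$ and through the expectation over the random, $\mathcal{F}_t$-dependent density $f_t$, and exhibit a single integrable envelope $(\lambda^\top\bh)^2\, c \int K'(u)^2\,\mathrm{d}u$ valid uniformly in $\delta_T$. The boundedness of $f_t$ (Assumption \ref{assu:ConditionalDensity}) and the $(2+\delta)$ moment on $\bh$ (Assumption \ref{assu:Moments}) are precisely what render this envelope integrable and make the interchange legitimate.
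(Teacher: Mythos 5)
Your proof is correct and follows essentially the same route as the paper's: the same decomposition $\Var(z_{t,T}) = \mathbb{E}\big[(\lambda^\top g_{t,T})^2\big] - \mathbb{E}\big[(\lambda^\top g_{t,T}^e)^2\big]$ via the law of iterated expectations, and the same change of variables plus dominated-convergence argument showing the leading term converges to $\lambda^\top \Omega_{\mathrm{Mode}}\lambda$. The only (immaterial) difference is in the correction term, where you invoke the Taylor expansion under the null together with $\int uK(u)\,\mathrm{d}u=0$ to obtain the sharper rate $O(\delta_T^7)$, whereas the paper simply bounds $\sup_x|f_t'(x)|\le c$ to get $O(\delta_T^3)$, which already suffices.
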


\begin{proof}
	We first observe that $	\Var \left(  z_{t,T}  \right) = \mathbb{E} \left[  \big( \lambda^\top \big( g_{t,T} - g_{t,T}^e \big) \big)^2  \right]$ as $\mathbb{E} \left[  \lambda^\top \big( g_{t,T} - g_{t,T}^e \big)  \right] = 0$.
	Hence,
	\begin{align}
		\label{eqn:VarianceSplit}
		\Var \left(  z_{t,T}  \right)
		= \mathbb{E} \left[  \left( \lambda^\top g_{t,T} \right)^2  \right]
		- \mathbb{E} \left[  \left( \lambda^\top g_{t,T}^e \right)^2  \right],
	\end{align}
	as $\mathbb{E} \left[  \big( \lambda^\top g_{t,T}^e \big) \cdot \big( \lambda^\top g_{t,T} \big)   \right] = 	\mathbb{E} \left[  \big( \lambda^\top g_{t,T}^e \big) \cdot \mathbb{E}_t \big[ \lambda^\top g_{t,T} \big]   \right]  = \mathbb{E} \left[  \big( \lambda^\top g_{t,T}^e \big)^2  \right]$. 
	%
	%
	For the first term in (\ref{eqn:VarianceSplit}), we obtain
	\begin{align}
		\mathbb{E} \left[  \left( \lambda^\top g_{t,T} \right)^2  \right]
		&=  \mathbb{E} \left[ (T \delta_T)^{-1}  (\lambda^\top \bh)^2 \mathbb{E}_t \left[ K'\left( \frac{X_t-Y_{t+1}}{\delta_T} \right)^2 \right] \right] \\
		&=  \mathbb{E} \left[ (T \delta_T)^{-1}  (\lambda^\top \bh)^2 \int K'\left( \frac{e}{\delta_T} \right)^2  f_t(e) \, \mathrm{d}e \right] \\ 
		&= \frac{1}{T} \, \mathbb{E} \left[  (\lambda^\top \bh)^2  \int K' \left( u \right)^2 f_{t} (\delta_T u) \,  \mathrm{d}u  \right].
		%
	\end{align}
	As $\delta_T \to 0$ when $T \to \infty$ and as  $\Omega_{T,\mathrm{Mode}} \to  \Omega_{\mathrm{Mode}}$ from Assumption \ref{assu:CovConvergence}, we get
	\begin{align}
		\begin{aligned}
			\label{eqn:EgConvToNorm}
			&\sum_{t=1}^T \mathbb{E} \left[  \left( \lambda^\top g_{t,T} \right)^2  \right] -  \lambda^\top \Omega_{\mathrm{Mode}} \lambda \\
			=\,&\sum_{t=1}^T \mathbb{E} \left[  \left( \lambda^\top g_{t,T} \right)^2  \right] - \frac{1}{T}  \sum_{t=1}^T \mathbb{E} \left[ (\lambda^\top \bh )^2  f_{t} (0) \right] \int K'\left( u \right)^2 \mathrm{d}u \\
			&\qquad + \lambda^\top \Omega_{T,\mathrm{Mode}} \lambda
			-  \lambda^\top \Omega_{\mathrm{Mode}} \lambda	\\
			= \;&  \frac{1}{T}  \sum_{t=1}^T \left(\mathbb{E} \left[  (\lambda^\top \bh)^2  \int K' \left( u \right)^2 f_{t} (\delta_T u) \,  \mathrm{d}u  \right] - \lambda^\top  \mathbb{E} \left[ f_{t} (0) \bh \bh^\top \right] \lambda \int K'\left( u \right)^2 \mathrm{d}u  \right) + o(1) \to 0.
		\end{aligned}
	\end{align}

	For the second term in (\ref{eqn:VarianceSplit}), inserting the equality in (\ref{eqn:PsiTransformationOfVariables}) yields
	\begin{align*}
		\left( \lambda^\top g_{t,T}^e \right)^2
		= \left( \delta_T^{3/2} T^{-1/2} (\lambda^\top \bh) \int K'\left( u \right) f'_{t} (\delta_T u)  \mathrm{d}u \right)^2 
		\le \delta_T^3 T^{-1} ||\lambda||^2 ||\bh||^2  \left| \int K'\left( u \right) f'_{t} (u \delta_T)  \mathrm{d}u \right|^2.
	\end{align*}
	As $\sup_x | f'_{t} (x) | \le c$ for all $t \in \mathbb{N}$ and $\left| \int K'\left( u \right) \mathrm{d}u \right| \le c$ by assumption, it holds that 
	\begin{align}
		\label{eqn:geConvToZero}
		\sum_{t=1}^T \mathbb{E}  \left[ \left( \lambda^\top g_{t,T}^e\right)^2 \right]
		&\le \delta_T^3 c^2 ||\lambda||^2  \left( \frac{1}{T} \sum_{t=1}^T \mathbb{E} \left[ ||\bh||^2 \right] \right)
		\to 0,
	\end{align}
	as $\delta_T^3 \to 0 $ as $T \to \infty$.
	The result of the lemma follows by combining (\ref{eqn:EgConvToNorm}) and (\ref{eqn:geConvToZero}).
\end{proof}

\begin{lemma}
	\label{lemma:LLNzSquared}
	Given Assumption \ref{assu:ModeRationality}, 
	it holds that  $\sum_{t=1}^T z_{t,T}^2 \toP \omega^2 = \lambda^\top \Omega_{\mathrm{Mode}} \lambda$.
\end{lemma}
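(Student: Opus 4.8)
The plan is to split $\sum_{t=1}^T z_{t,T}^2$ into a predictable part and a martingale-difference fluctuation, using the decomposition $z_{t,T}^2 = \mathbb{E}_t[z_{t,T}^2] + \big(z_{t,T}^2 - \mathbb{E}_t[z_{t,T}^2]\big)$. Since $\mathbb{E}_t[z_{t,T}]=0$, the conditional second moment reduces to $\mathbb{E}_t[z_{t,T}^2] = \mathbb{E}_t\big[(\lambda^\top g_{t,T})^2\big] - (\lambda^\top g_{t,T}^e)^2$, and the computation in the proof of Lemma~\ref{lemma:ConvergenceAverageVariance} gives $\mathbb{E}_t\big[(\lambda^\top g_{t,T})^2\big] = \tfrac1T(\lambda^\top\bh)^2\int K'(u)^2 f_t(\delta_T u)\,\mathrm{d}u$. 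I would then show the predictable part converges in probability to $\bar\omega^2$ and the fluctuation vanishes in probability.

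For the predictable part $\sum_{t=1}^T \mathbb{E}_t[z_{t,T}^2]$: the correction $\sum_{t=1}^T(\lambda^\top g_{t,T}^e)^2$ is nonnegative with expectation tending to zero by \eqref{eqn:geConvToZero}, hence vanishes in probability by Markov's inequality. For the leading piece I would write $\int K'(u)^2 f_t(\delta_T u)\,\mathrm{d}u = f_t(0)\int K'(u)^2\,\mathrm{d}u + \int K'(u)^2\big(f_t(\delta_T u)-f_t(0)\big)\,\mathrm{d}u$. The first term yields $\big(\int K'(u)^2\,\mathrm{d}u\big)\,\tfrac1T\sum_{t=1}^T(\lambda^\top\bh)^2 f_t(0)$, which converges in probability to $\bar\omega^2 = \lambda^\top\mathbb{E}[f_t(0)\bh\bh^\top]\lambda\int K'(u)^2\,\mathrm{d}u$ by the weak law of large numbers for stationary ergodic sequences (Assumption~\ref{assu:ModeRationality}\ref{assu:StationaryErgodicSequence}), the limit being finite since $f_t$ is bounded and $\mathbb{E}\|\bh\|^{2+\delta}<\infty$ (Assumption~\ref{assu:ModeRationality}\ref{assu:Moments}). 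The remainder vanishes in $L_1$ by dominated convergence, as its integrand is dominated by the integrable envelope $2c\,(\lambda^\top\bh)^2 K'(u)^2$ and tends to $0$ pointwise as $\delta_T\to 0$; I would route this through dominated convergence rather than a Lipschitz bound precisely to avoid needing $\int |u| K'(u)^2\,\mathrm{d}u<\infty$.

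For the fluctuation, set $\xi_{t,T} := z_{t,T}^2 - \mathbb{E}_t[z_{t,T}^2]$. As $z_{t,T}$ is $\mathcal{F}_{t+1}$-measurable while $\mathbb{E}_t[z_{t,T}^2]$ is $\mathcal{F}_t$-measurable, $\{\xi_{t,T},\mathcal{F}_{t+1}\}$ is a martingale difference sequence, so the \citet{BahrEsseen1965} inequality gives $\mathbb{E}\big[\,|\sum_{t=1}^T\xi_{t,T}|^p\,\big]\le 2\sum_{t=1}^T\mathbb{E}[\,|\xi_{t,T}|^p\,]$ for $p\in(1,2]$. Via $|\xi_{t,T}|^p\le 2^{p-1}\big(z_{t,T}^{2p}+(\mathbb{E}_t[z_{t,T}^2])^p\big)$ and conditional Jensen, it suffices to bound $\sum_t\mathbb{E}[z_{t,T}^{2p}]$. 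Splitting $z_{t,T}^{2p}\le 2^{2p-1}\big(|\lambda^\top g_{t,T}|^{2p}+|\lambda^\top g_{t,T}^e|^{2p}\big)$ and taking conditional moments as above, the dominant term is $\sum_t\mathbb{E}[\,|\lambda^\top g_{t,T}|^{2p}\,]=(T\delta_T)^{1-p}\,\mathbb{E}\big[|\lambda^\top\bh|^{2p}\int|K'(u)|^{2p} f_t(\delta_T u)\,\mathrm{d}u\big]$, which tends to $0$ because $T\delta_T\to\infty$ and $1-p<0$; here $\mathbb{E}|\lambda^\top\bh|^{2p}<\infty$ for $p$ close enough to $1$ by Assumption~\ref{assu:ModeRationality}\ref{assu:Moments}, $f_t$ is bounded, and $\int|K'(u)|^{2p}\,\mathrm{d}u\le(\sup|K'|)^{2p-1}\int|K'(u)|\,\mathrm{d}u<\infty$. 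The $g_{t,T}^e$ contribution is of the strictly smaller order $T^{1-p}\delta_T^{3p}$ and is handled identically; this is the computation that the proof of Theorem~\ref{thm:ConsistencyCovEstimation} refers back to. Thus $\sum_t\xi_{t,T}\to 0$ in $L_p$, hence in probability, and adding the two parts gives $\sum_{t=1}^T z_{t,T}^2\toP\bar\omega^2$.

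The main obstacle will be this fluctuation step: because $\delta_T\to0$, each summand carries the diverging factor $\delta_T^{-1}$ through $K'(\cdot/\delta_T)$, so a plain variance-of-the-sum bound does not close. The argument hinges on exploiting the martingale structure to run Bahr--Esseen at an exponent $p$ strictly above $1$, balancing the gain $T\delta_T\to\infty$ against the limited integrability of $\bh$, which confines $p$ to $(1,1+\delta/2]$. (One could instead observe that $\sum_t(\lambda^\top g_{t,T})^2=\lambda^\top\widehat\Omega_{T,\mathrm{Mode}}\lambda\toP\bar\omega^2$ by Theorem~\ref{thm:ConsistencyCovEstimation} and then only control the $g_{t,T}^e$ terms, but since that theorem is itself established via the present estimates I would keep the argument self-contained.)
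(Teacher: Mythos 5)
Your proposal is correct and follows essentially the same route as the paper's proof: the identical decomposition of $\sum_t z_{t,T}^2$ into the predictable part $\sum_t \mathbb{E}_t[z_{t,T}^2]$ (handled by splitting off $(\lambda^\top g_{t,T}^e)^2$ and applying a law of large numbers to the leading term) and the martingale-difference fluctuation (handled via the von Bahr--Esseen inequality at an exponent $p\in(1,1+\delta/2]$ with the $(T\delta_T)^{1-p}\to 0$ rate). The only differences are cosmetic refinements — routing the $g_{t,T}^e$ correction through Markov's inequality and the kernel-density limit through dominated convergence — neither of which changes the substance.
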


\begin{proof}
	We define
	\begin{align}
		h_{1,T} := \sum_{t=1}^T \left( z_{t,T}^2  - \mathbb{E}_t \left[ z_{t,T}^2  \right] \right) \qquad \text{ and } \qquad 
		h_{2,T} := \sum_{t=1}^T \mathbb{E}_t \left[ z_{t,T}^2  \right] - \omega^2,
	\end{align}
	such that $\sum_{t=1}^T z_{t,T}^2 - \bar \omega^2  = h_{1,T} + h_{2,T}$.
	We first show that $h_{1,T} \stackrel{L_p}{\longrightarrow} 0$ for some $ 1 < p < 2$ sufficiently small enough and thus $h_{1,T} \toP 0$.
	For this, first notice that $z_{t,T}^2  - \mathbb{E}_t \left[ z_{t,T}^2 \right]$ is a $\mathcal{F}_{t+1}$-MDA by definition.
	Thus, we can apply the \cite{BahrEsseen1965}-inequality for some $p \in (1,2)$ for MDA (in the first line) in order to conclude that
	\begin{align}
		\mathbb{E} \left[ \left| h_{1,T} \right|^p \right]
		&= \mathbb{E} \left[ \left| \sum_{t=1}^T z_{t,T}^2  - \mathbb{E}_t \left[ z_{t,T}^2  \right] \right|^p \right] 
		\le 2 \sum_{t=1}^T \mathbb{E} \left[ \left| z_{t,T}^2  - \mathbb{E}_t \left[ z_{t,T}^2 \right]  \right|^p \right] \\
		&\le 2 \sum_{t=1}^T 2^{p-1} \left(  \mathbb{E} \left[ \left| z_{t,T}^2   \right|^p \right] + \mathbb{E} \left[ \left| \mathbb{E}_t \left[ z_{t,T}^2 \right]  \right|^p \right] \right) 
		\le 2^{p+1} \sum_{t=1}^T \mathbb{E} \left[ \left| z_{t,T}   \right|^{2p} \right],
	\end{align}
	where we use in the second inequality that $|a+b|^p \le 2^{p-1} \big( |a|^p + |b|^p \big)$ for any $a,b \in \mathbb{R}$.
	Using the same inequality again yields
	\begin{align}
		\label{eqn:Ezt2p}
		\mathbb{E} \left[ \left| z_{t,T}  \right|^{2p} \right]
		= \mathbb{E} \left[ \left| \lambda^\top g_{t,T} - \lambda^\top g_{t,T}^e  \right|^{2p} \right]
		\le 2^{2p-1} \left( \mathbb{E} \left[ \left| \lambda^\top g_{t,T} \right|^{2p} \right] + \mathbb{E} \left[ \left| \lambda^\top g_{t,T}^e \right|^{2p} \right] \right).
	\end{align}
	Thus,
	\begin{align}
		&\sum_{t=1}^T \mathbb{E} \left[ \left| \lambda^\top g_{t,T} \right|^{2p} \right]
		= (T \delta_T)^{1-p}  \frac{1}{T} \sum_{t=1}^T \mathbb{E} \left[  \left| \lambda^\top \bh \right|^{2p}  \int \left| K'\left( u \right) \right|^{2p}  f_{t}(\delta_T u) \mathrm{d}u \right] \to 0,
	\end{align}
	as $(T \delta_T)^{1-p} \to 0$ for all $p \in (1,2)$, $\mathbb{E} \left[  \left| \lambda^\top \bh \right|^{2p} \right] < \infty$ for $p > 1$ sufficiently small such that $2p \le 2 + \delta$ (for the $\delta > 0$ from Assumption \ref{assu:Moments}), and $\int \left| K'\left( u \right) \right|^{2p}  f_{t}(\delta_T u) \mathrm{d}u \le c c^{2p-1} \int \left| K'\left( u \right) \right| \mathrm{d}u  < \infty$ as $\int \left| K'\left( u \right) \right| \mathrm{d}u  < \infty$, $\sup_{u} \left| K'\left( u \right) \right|  \le c$ and $\sup_{x} f_{t}(x)  \le c$ a.s.\ by assumption.
	Similarly, we obtain
	\begin{align}
		\label{eqn:DetailsSecondTermBahrEsseenIneq}
		&\sum_{t=1}^T \mathbb{E} \left[ \left| \lambda^\top g_{t,T}^e \right|^{2p} \right]
		= \delta_T^{2p-1}  (T \delta_T)^{1-p} \frac{1}{T} \sum_{t=1}^T \mathbb{E} \left[  \left| \lambda^\top \bh \right|^{2p} \left|  \int K'\left( u \right) f_{t}(\delta_T u) \mathrm{d}u \right|^{2p} \right] \to 0,
	\end{align}
	which shows that $h_{1,T} \stackrel{L_p}{\longrightarrow} 0$ for some $p > 1$ sufficiently small which implies that $h_{1,T} \toP 0$.

	We continue by showing that $h_{2,T} \toP 0$.
	Using the same argument as in (\ref{eqn:VarianceSplit}), we split
	\begin{align}
		h_{2,T}
		= \sum_{t=1}^T \mathbb{E}_t \left[ z_{t,T}^2  \right] -  \omega^2
		= \sum_{t=1}^T \mathbb{E}_t \left[ (\lambda^\top g_{t,T})^2 \right] - \sum_{t=1}^T  (\lambda^\top g_{t,T}^e)^2 -  \omega^2.
	\end{align}
	Applying a transformation of variables yields
	\begin{align}
		\sum_{t=1}^T  (\lambda^\top g_{t,T}^e)^2 
		&= \delta_T  \frac{1}{T}  \sum_{t=1}^T  (\lambda^\top \bh)^2 \left( \int  K'(u) f_{t}(\delta_T u) \mathrm{d}u \right)^2 \\
		&\le \delta_T  \left( \frac{1}{T} \sum_{t=1}^T  (\lambda^\top \bh)^2 \right) \left( c \int |K'(u)| \mathrm{d}u \right)^2 
		\toP 0,
	\end{align}
	as $\delta_T \to 0$, $ \frac{1}{T} \sum_{t=1}^T  (\lambda^\top \bh)^2  = \mathbb{E} \left[ (\lambda^\top \bh)^2 \right] + o_P(1)$ and $\left(\int |K'(u)| \mathrm{d}u \right)^2 \le \int |K'(u)|^2 \mathrm{d}u < \infty$ by assumption. 
	Furthermore,
	\begin{align}
		&\sum_{t=1}^T \mathbb{E}_t \left[ \big( \lambda^\top g_{t,T} \big)^2 \right] - 	 \omega^2 \\
		= \, &\left( \frac{1}{T} \sum_{t=1}^T  (\lambda^\top \bh)^2 \right) \int  K'(u)^2 f_t(\delta_T u) \mathrm{d}u  - \frac{1}{T}  \sum_{t=1}^T \mathbb{E} \left[  f_{t}(0)  (\lambda^\top \bh)^2 \right]  \int  K'(u)^2 \mathrm{d}u  
		+ \bar \omega_T^2 - \omega^2 
		\toP 0,
	\end{align}
	as for all $u \in \mathbb{R}$, $\frac{1}{T} \sum_{t=1}^T  (\lambda^\top \bh)^2 f_t(\delta_T u) - \frac{1}{T} \sum_{t=1}^T \mathbb{E} \left[  f_{t}(0)  (\lambda^\top \bh)^2 \right] \toP 0$.
	Thus, we find $h_{2,T} \toP 0$ and consequently $\sum_{t=1}^T z_{t,T}^2 -  \omega^2 \toP 0$, which concludes this proof.
\end{proof}

\begin{lemma}
	\label{lemma:MaxConvZero}
	Given Assumption \ref{assu:ModeRationality}, 
	it holds that  $\max_{1\le t \le T} | h_{t,T} | \toP 0$.
\end{lemma}

\begin{proof}
	Let $\zeta > 0$ and $\delta > 0$ (sufficiently small such that $\mathbb{E} \left[ ||\bh||^{2+\delta}  \right] < \infty$).
	Then,
	\begin{align}
		\begin{aligned}
			\label{eqn:MaxMarkovTrick}
			\mathbb{P} \left( \max_{1\le t \le T} | h_{t,T} | > \zeta \right)
			&= \mathbb{P} \left( \max_{1\le t \le T} | h_{t,T} |^{2+\delta} > \zeta^{2+\delta}  \right)
			\le \sum_{t=1}^T \mathbb{P} \left(  |h_{t,T}|^{2+\delta} > \zeta^{2+\delta}  \right) \\
			&\le \zeta^{-2-\delta}  \sum_{t=1}^T  \mathbb{E} \left[ |h_{t,T}|^{2+\delta} \right]
			= \zeta^{-2-\delta}   \bar \omega_T^{-2-\delta}  \sum_{t=1}^T \mathbb{E} \left[ |z_{t,T}|^{2+\delta} \right],
		\end{aligned}
	\end{align}
	by Markov's inequality.
	Employing the same steps as in the proof of Lemma \ref{lemma:LLNzSquared} following Equation (\ref{eqn:Ezt2p}) and replacing the exponent  ``$2p$''  by ``$2+ \delta$'' yields that $\sum_{t=1}^T \mathbb{E} \left[ |z_{t,T}|^{2+\delta} \right] \to 0$.
	As $\bar \omega_T \to \omega^2 > 0$, this directly implies that $\mathbb{P} \left( \max_{1\le t \le T} | h_{t,T} | > \zeta \right) \to 0$.

\end{proof}

\begin{lemma}
	\label{lemma:WeakIdGMM_SumVariance}
	Given Assumption \ref{assu:ModeRationality} and Assumption \ref{assu:GMMWeakIDRegCond}, for all $\lambda \in \mathbb{R}^k$ such that $||\lambda||_2 = 1$, it holds that $\sum_{t=1}^T  \Var \big( T^{-1/2} \phi_{t,T}^\ast (\theta_0)^\top \lambda \big) - \sigma_T^{2} \to 0$.
\end{lemma}

\begin{proof}
	As $T^{-1/2} \phi_{t,T}^\ast(\theta_0)$ is a $\mathcal{F}_{t+1}$-MDA, it holds that $\mathbb{E} \big[ T^{-1/2} \phi_{t,T}^\ast (\theta_0)^\top \lambda \big] = 0$ and thus, $\Var \big( T^{-1/2} \phi_{t,T}^\ast (\theta_0)^\top \lambda \big) = \mathbb{E} \left[ \big( T^{-1/2} \phi_{t,T}^\ast (\theta_0)^\top \lambda \big)^2 \right]$.
	We further have
	\begin{align}
		\label{eqn:SplitphisquaredGMMWeakID}
		& \quad \sum_{t=1}^T \mathbb{E} \left[ \big( T^{-1/2} \phi_{t,T}^\ast (\theta_0)^\top \lambda \big)^2 \right] \\
		&=  \frac{1}{T} \sum_{t=1}^T \mathbb{E} \left[ \theta_{10}^2 \left(  \lambda^\top  \WMean \bh  \right)^2 \varepsilon_t^2 \right] 
		+  \frac{1}{T} \sum_{t=1}^T \mathbb{E} \left[ \theta_{20}^2 \left(  \lambda^\top  \WMed \bh  \right)^2  \big( \mathds{1}_{\{\varepsilon_t > 0\}} - \mathds{1}_{\{\varepsilon_t < 0 \}} \big)^2 \right]   \nonumber \\
		&+  \frac{1}{T} \sum_{t=1}^T \mathbb{E} \left[ \theta_{30}^2 \left(  \lambda^\top  \WMode \bh  \right)^2   \delta_T^{-1} K' \left( \frac{- \varepsilon_t}{\delta_T} \right)^2  \right] \nonumber	\\
		&+ \frac{2}{T} \sum_{t=1}^T \mathbb{E} \left[ \theta_{10} \theta_{20} \left(  \lambda^\top \WMean \bh  \right) \left(  \lambda^\top  \WMed \bh  \right) \varepsilon_t \big( \mathds{1}_{\{\varepsilon_t > 0\}} - \mathds{1}_{\{\varepsilon_t < 0 \}} \big)  \right]  \nonumber \\
		&+ \frac{2}{T} \sum_{t=1}^T \mathbb{E} \left[ \theta_{10} \theta_{30} \left(  \lambda^\top  \WMean \bh  \right) \left(  \lambda^\top  \WMode \bh  \right) \varepsilon_t  \delta_T^{-1/2} K' \left( \frac{- \varepsilon_t}{\delta_T} \right) \right] \nonumber \\
		&+ \frac{2}{T} \sum_{t=1}^T \mathbb{E} \left[ \theta_{20} \theta_{30} \left(  \lambda^\top \WMed\bh  \right) \left(  \lambda^\top  \WMode \bh  \right) \big( \mathds{1}_{\{\varepsilon_t > 0\}} - \mathds{1}_{\{\varepsilon_t < 0 \}} \big)   \delta_T^{-1/2} K' \left( \frac{- \varepsilon_t}{\delta_T} \right) \right] \nonumber \\
		&+\frac{1}{T}  \sum_{t=1}^T \mathbb{E} \left[  (\lambda^\top u_{t,T}(\theta_0))^2  \right] 			
		- \frac{2}{T}  \sum_{t=1}^T \mathbb{E} \left[ \big( \lambda^\top u_{t,T}(\theta_0)\big) \big(  \lambda^\top \phi_{t,T}(\theta_0) \big) \right]. \nonumber
	\end{align}
	For the last two terms, we have $T^{-1} \sum_{t=1}^T \mathbb{E} \left[  \big( \lambda^\top u_{t,T}(\theta_0) \big)^2  \right] \to 0$ and $T^{-1} \sum_{t=1}^T \mathbb{E} \left[ \big( \lambda^\top u_{t,T}(\theta_0) \big) \big( \lambda^\top \phi_{t,T}(\theta_0) \big) \right] \to 0$
	by Assumption \ref{assu:GMMWeakIDRegCond} (D).
	For the fifth  term,
	\begin{align}
		&\frac{2}{T} \sum_{t=1}^T \mathbb{E} \left[ \theta_{10} \theta_{30} \left(  \lambda^\top  \WMean \bh  \right) \left(  \lambda^\top  \WMode \bh  \right) \delta_T^{-1/2} \mathbb{E}_t \left[ \varepsilon_t  K' \left( \frac{- \varepsilon_t}{\delta_T} \right) \right] \right] \\
		= \, &- \frac{2}{T} \sum_{t=1}^T \mathbb{E} \left[ \theta_{10} \theta_{30} \left(  \lambda^\top  \WMean \bh  \right) \left(  \lambda^\top  \WMode \bh  \right)  \delta_T^{3/2} \int u K' \left( u \right)f_t(\delta_T u) \mathrm{d}u  \right] \to 0,
	\end{align}
	as $\delta_T^{3/2} \to 0$, $\int u K' \left( u \right) \mathrm{d}u < \infty$ and the respective moments are finite.
	The sixth term converges to zero by a similar argument by bounding $\big| \mathds{1}_{\{\varepsilon_t > 0\}} - \mathds{1}_{\{\varepsilon_t < 0 \}} \big| \le 1$.
	For the third term, it holds that
	\begin{align*}
		&\frac{1}{T} \sum_{t=1}^T \mathbb{E} \left[ \theta_{30}^2 \left(  \lambda^\top  \WMode \bh  \right)^2   \delta_T^{-1} K' \left( \frac{- \varepsilon_t}{\delta_T} \right)^2  \right]  - \frac{1}{T} \sum_{t=1}^T \mathbb{E} \left[ \theta_{30}^2 \left(  \lambda^\top  \WMode \bh  \right)^2 f_t(0)  \int K' \left( u \right)^2  \mathrm{d}u \right] \\
		= \, &\frac{1}{T} \sum_{t=1}^T \mathbb{E} \left[ \theta_{30}^2 \left(  \lambda^\top  \WMode \bh  \right)^2 \int  K' \left( u \right)^2 \big( f_t(\delta_T u) - f_t(0) \big) \mathrm{d}u  \right] \to 0
	\end{align*}
	The remaining first, second and fourth terms in \eqref{eqn:SplitphisquaredGMMWeakID} appear equivalently in $\sigma_T^2$, which concludes the proof of this lemma.
\end{proof}

\begin{lemma}
	\label{lemma:WeakIdGMM_phito1}
	Given Assumption \ref{assu:ModeRationality} and Assumption \ref{assu:GMMWeakIDRegCond}, for all $\lambda \in \mathbb{R}^k$ such that $||\lambda||_2 = 1$, it holds that
	$T^{-1} \sum_{t=1}^T \big( \phi_{t,T}^\ast (\theta_0)^\top \lambda \big)^2  - \sigma_T^{2} \toP 0$.
\end{lemma}

\begin{proof}
	We apply the same factorization as in (\ref{eqn:SplitphisquaredGMMWeakID}) (however without the expectation operator).
	By applying a law of large numbers for mixing sequences (Corollary 3.48 in \cite{White2001}), we obtain that
	\begin{align*}
		&\frac{1}{T} \sum_{t=1}^T  \left( \theta_{10} \left(   \bh^\top  \WMean \lambda  \right) \varepsilon_t \right)^2 
		- \frac{1}{T} \sum_{t=1}^T \mathbb{E} \left[ \left(  \theta_{10} \left(  \bh^\top  \WMean \lambda \right) \varepsilon_t \right)^2 \right] \toP 0,  \\
		&\frac{1}{T} \sum_{t=1}^T \left( \theta_{20} \left(  \bh^\top  \WMed  \lambda \right)  \big( \mathds{1}_{\{\varepsilon_t > 0\}} - \mathds{1}_{\{\varepsilon_t < 0 \}} \big) \right)^2 
		- \frac{1}{T} \sum_{t=1}^T \mathbb{E} \left[ \left( \theta_{20} \left(  \bh^\top  \WMed  \lambda \right)  \big( \mathds{1}_{\{\varepsilon_t > 0\}} - \mathds{1}_{\{\varepsilon_t < 0 \}} \big) \right)^2  \right] \toP 0, \\
		&\frac{2}{T} \sum_{t=1}^T \left(  \theta_{10} \theta_{20} \left(  \bh^\top  \WMean \lambda  \right) \left(  \bh^\top  \WMed \lambda   \right) \varepsilon_t \big( \mathds{1}_{\{\varepsilon_t > 0\}} - \mathds{1}_{\{\varepsilon_t < 0 \}} \big)  \right)^2  \\
		&\qquad - \frac{2}{T} \sum_{t=1}^T \mathbb{E} \left[ \left( \theta_{10} \theta_{20} \left(  \bh^\top  \WMean \lambda   \right) \left(   \bh^\top  \WMed  \lambda \right) \varepsilon_t \big( \mathds{1}_{\{\varepsilon_t > 0\}} - \mathds{1}_{\{\varepsilon_t < 0 \}} \big)  \right)^2 \right] \toP 0.
	\end{align*}
	Furthermore, a slight modification of Lemma \ref{lemma:LLNzSquared}  (multiplying with $\theta_{30}^2 \big( \bh^\top  \WMode \lambda  \big)^2$ instead of $ \big( \bh^\top \lambda  \big)^2$) yields that
	\begin{align}
		\frac{1}{T} \sum_{t=1}^T \theta_{30}^2 \left( \bh^\top  \WMode \lambda  \right)^2   \delta_T^{-1} K' \left( \frac{-\varepsilon_t}{\delta_T} \right)^2  
		- \frac{1}{T} \sum_{t=1}^T \mathbb{E} \left[ \theta_{30}^2 \left( \bh^\top  \WMode \lambda \right)^2  f_t(0) \int K' \left( u \right)^2 \mathrm{d} u  \right] \toP 0.
	\end{align}

	We now show that the remaining four terms vanish asymptotically (in probability).
	For the mixed mean/mode term, we apply a similar addition of a zero (adding and subtracting $\mathbb{E}_t[\dots]$) as in the proof of Lemma \ref{lemma:LLNzSquared}.
	For this, we first note that
	\begin{align}
		&\frac{2}{T} \sum_{t=1}^T\theta_{10} \theta_{30} \left( \bh^\top  \WMean  \lambda \right) \left(   \bh^\top  \WMode \lambda \right)  \delta_T^{-1/2} \mathbb{E}_t \left[ \varepsilon_t  K' \left( \frac{-\varepsilon_t}{\delta_T} \right) \right] \\
		= -\,&\frac{2}{T} \sum_{t=1}^T\theta_{10} \theta_{30} \left( \bh^\top  \WMean  \lambda \right) \left(   \bh^\top  \WMode  \lambda \right)  \delta_T^{3/2} \int u K' \left( u \right) f_t(\delta_T u) \mathrm{d}u \toP 0,
	\end{align}
	as $\delta_T^{3/2}  \to 0$.
	In the following, we further show that
	\begin{align*}
		\frac{2}{T} \sum_{t=1}^T\theta_{10} \theta_{30} \left( \bh^\top  \WMean  \lambda \right) \left(   \bh^\top  \WMode \lambda \right)  \delta_T^{-1/2} \left\{  \varepsilon_t  K' \left( \frac{-\varepsilon_t}{\delta_T} \right) - \mathbb{E}_t \left[ \varepsilon_t  K' \left( \frac{-\varepsilon_t}{\delta_T} \right) \right] \right\} \stackrel{L_p}{\longrightarrow} 0,
	\end{align*}
	for any $p \in (1,2)$ small enough.
	As in the proof of Lemma \ref{lemma:LLNzSquared}, we apply the \cite{BahrEsseen1965} inequality and Minkowski's inequality in order to conclude that
	\begin{align*}
		&\mathbb{E} \left[ \left| \frac{2}{T} \sum_{t=1}^T\theta_{10} \theta_{30} \left( \bh^\top  \WMean  \lambda \right) \left(   \bh^\top  \WMode \lambda \right)  \delta_T^{-1/2} \left\{  \varepsilon_t  K' \left( \frac{\varepsilon_t}{\delta_T} \right) - \mathbb{E}_t \left[ \varepsilon_t  K' \left( \frac{\varepsilon_t}{\delta_T} \right) \right] \right\} \right|^p \right] \\
		\le \, & \frac{2^{p+2}}{T^p} \sum_{t=1}^T \left\{ \mathbb{E} \left[ \left| \theta_{10} \theta_{30} \left( \bh^\top  \WMean \lambda \right) \left(   \bh^\top  \WMode  \lambda \right)  \delta_T^{-1/2}  \varepsilon_t  K' \left( \frac{\varepsilon_t}{\delta_T} \right)  \right|^p \right] \right. \\
		&\qquad \qquad \; + \left. \mathbb{E} \left[ \left| \theta_{10} \theta_{30} \left( \bh^\top  \WMean \lambda \right) \left(   \bh^\top  \WMode  \lambda \right)  \delta_T^{-1/2} \mathbb{E}_t \left[  \varepsilon_t  K' \left( \frac{\varepsilon_t}{\delta_T} \right)  \right] \right|^p  \right] \right\},
	\end{align*}
	where the first term is bounded from above by
	\begin{align*}
		& \frac{2^{p+2}}{T^p} \sum_{t=1}^T  \mathbb{E} \left[ \left| \theta_{10} \theta_{30} \left( \bh^\top  \WMean \lambda \right) \left(   \bh^\top  \WMode \lambda \right) \right|^p  \delta_T^{-p/2} \int \left| e  K' \left( \frac{e}{\delta_T} \right)  \right|^p f_t(e) \mathrm{d}e  \right] \\
		= \, & 2^{p+2}  \delta_T^{1+p/2} T^{1-p} \frac{1}{T} \sum_{t=1}^T  \mathbb{E} \left[ \left| \theta_{10} \theta_{30} \left( \bh^\top  \WMean \lambda \right) \left(   \bh^\top  \WMode  \lambda \right) \right|^p   \int \left| u  K' \left( u \right)  \right|^p f_t(\delta_T u)  \mathrm{d}u  \right] \to 0,
	\end{align*}
	as $\delta_T^{1+p/2}\to 0 $, $T^{1-p} \to 0$ for any $p > 1$ and as the respective moments are bounded by assumption.
	Similar arguments also yield that the second term converges to zero (compare to (\ref{eqn:DetailsSecondTermBahrEsseenIneq})).
	Applying the same line of reasoning for the mixed median/mode terms shows that
	\begin{align}
		\frac{2}{T} \sum_{t=1}^T\theta_{20} \theta_{30} \left( \bh^\top  \WMed  \lambda \right) \left(   \bh^\top  \WMode \lambda \right)  \delta_T^{-1/2} \mathbb{E}_t \left[ \big( \mathds{1}_{\{\varepsilon_t > 0\}} - \mathds{1}_{\{\varepsilon_t < 0 \}} \big)  K' \left( \frac{-\varepsilon_t}{\delta_T} \right) \right] \toP 0.
	\end{align}
	For the fourth and last term,
	$T^{-1} \sum_{t=1}^T \big( u_{t,T}(\theta_0)^\top \lambda \big)^2  \toP 0$ and 
	$T^{-1} \sum_{t=1}^T \big( u_{t,T}(\theta_0)^\top \lambda \big) \big(  \phi_{t,T}(\theta_0)^\top \lambda \big) \toP 0$
	by Assumption \ref{assu:GMMWeakIDRegCond} (D), which concludes this proof.
\end{proof}

\begin{lemma}
	\label{lemma:WeakIdGMM_maxto0}
	Given Assumption \ref{assu:ModeRationality} and Assumption \ref{assu:GMMWeakIDRegCond}, for all $\lambda \in \mathbb{R}^k$ such that $||\lambda||_2 = 1$, it holds that
	$\max_{1 \le t \le T} \big| \sigma_T^{-1} T^{-1/2} \phi_{t,T}^\ast (\theta_0)^\top \lambda \big|  \toP 0$.
\end{lemma}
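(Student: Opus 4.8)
The plan is to reproduce, in the present setting, the maximal-negligibility argument of Lemma \ref{lemma:MaxConvZero}. Fix $\zeta>0$ and take $\delta>0$ small enough that the moment conditions in Assumptions \ref{assu:Moments} and \ref{assu:GMMWeakIDRegCond}(A) hold for this $\delta$. Exactly as in \eqref{eqn:MaxMarkovTrick}, Markov's inequality applied with exponent $2+\delta$ gives
\[
	\mathbb{P}\Big( \max_{1\le t \le T} \big| \sigma^{-1} T^{-1/2}\phi_{t,T}^\ast(\theta_0)\lambda \big| > \zeta \Big) \le \zeta^{-2-\delta}\,\sigma^{-2-\delta}\, T^{-(2+\delta)/2}\sum_{t=1}^T \mathbb{E}\big[ \big|\phi_{t,T}^\ast(\theta_0)\lambda\big|^{2+\delta}\big].
\]
Since $\sigma^2 = \lambda^\top\Sigma(\theta_0)\lambda > 0$, the factor $\sigma^{-2-\delta}$ is a fixed finite constant, so it suffices to show that $T^{-(2+\delta)/2}\sum_{t=1}^T \mathbb{E}[ |\phi_{t,T}^\ast(\theta_0)\lambda|^{2+\delta}] \to 0$.

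I would then write $\phi_{t,T}^\ast(\theta_0) = \phi_{t,T}(\theta_0) - u_{t,T}(\theta_0)$ and use the elementary inequality $|a_1+\dots+a_4|^p \le 4^{p-1}\sum_{i}|a_i|^p$ to bound the $(2+\delta)$-th moment by the sum of the contributions of the three rows of $\phi_{t,T}(\theta_0)\lambda$---namely $\theta_{10}(\bh^\top\WMean\lambda)\varepsilon_t$, $\theta_{20}(\bh^\top\WMed\lambda)\big(\mathds{1}_{\{\varepsilon_t>0\}}-\mathds{1}_{\{\varepsilon_t<0\}}\big)$ and $\theta_{30}(\bh^\top\WMode\lambda)\delta_T^{-1/2}K'(-\varepsilon_t/\delta_T)$---together with $u_{t,T}(\theta_0)\lambda$. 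For the mean term, stationarity and $\mathbb{E}[||\bh||^{2+\delta}|\varepsilon_t|^{2+\delta}]<\infty$ (Assumption \ref{assu:GMMWeakIDRegCond}(A)) give $\sum_{t=1}^T\mathbb{E}[|\theta_{10}(\bh^\top\WMean\lambda)\varepsilon_t|^{2+\delta}]=O(T)$, so the prefactor renders this $O(T^{-\delta/2})\to 0$; the median term is identical after using $|\mathds{1}_{\{\varepsilon_t>0\}}-\mathds{1}_{\{\varepsilon_t<0\}}|\le 1$, and the $u_{t,T}$ term vanishes directly from the second part of Assumption \ref{assu:GMMWeakIDRegCond}(C)(b) together with $|u_{t,T}(\theta_0)\lambda|\le||u_{t,T}(\theta_0)||$.

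The main obstacle is the mode term, whose moments diverge as $\delta_T\to 0$. Here I would condition on $\mathcal{F}_t$ and substitute $u=-e/\delta_T$ to obtain
\[
	\mathbb{E}_t\Big[ \big| \delta_T^{-1/2}K'(-\varepsilon_t/\delta_T)\big|^{2+\delta}\Big] = \delta_T^{-(2+\delta)/2}\,\delta_T\int |K'(u)|^{2+\delta} f_t(\delta_T u)\,\mathrm{d}u,
\]
and bound $|K'(u)|^{2+\delta}\le c^{1+\delta}|K'(u)|$ via $\sup|K'|\le c$, using in addition $\sup_x f_t(x)\le c$ and $\int|K'(u)|\,\mathrm{d}u<\infty$ from Assumptions \ref{assu:ConditionalDensity} and \ref{assu:Kernel}. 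This produces a conditional moment of order $\delta_T^{\,1-(2+\delta)/2}=\delta_T^{-\delta/2}$; multiplying by $\mathbb{E}[||\bh||^{2+\delta}]<\infty$, summing over $t$, and inserting the $T^{-(2+\delta)/2}$ prefactor yields a bound of order $T^{\,1-(2+\delta)/2}\delta_T^{-\delta/2}=(T\delta_T)^{-\delta/2}$, which tends to zero because $T\delta_T\to\infty$ by Assumption \ref{assu:Bandwidth}(i). Combining the four bounds shows the displayed sum converges to zero, and hence $\mathbb{P}(\max_{1\le t\le T}|\sigma^{-1}T^{-1/2}\phi_{t,T}^\ast(\theta_0)\lambda|>\zeta)\to 0$ for every $\zeta>0$, which is the claim.
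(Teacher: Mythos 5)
Your proposal is correct and follows essentially the same route as the paper's proof: Markov's inequality with exponent $2+\delta$, the four-term decomposition of $\phi_{t,T}^\ast(\theta_0)\lambda$ via the $c_r$-inequality, the $T^{-\delta/2}$ rate for the mean, median and $u_{t,T}$ contributions, and the change of variables plus $\sup|K'|\le c$ bound that turns the mode term into an $O\big((T\delta_T)^{-\delta/2}\big)$ quantity. The only cosmetic difference is that the paper normalizes by $\sigma_T$ rather than $\sigma$, which is immaterial since $\sigma_T^2\to\sigma^2>0$.
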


\begin{proof}
	Let $\zeta > 0$ and $\delta > 0$ (sufficiently small such that $\mathbb{E} \left[ ||\bh||^{2+\delta}  \right] < \infty$ holds).
	Then, as in (\ref{eqn:MaxMarkovTrick}) in the proof of Lemma \ref{lemma:MaxConvZero}, we get that
	\begin{align}
		\mathbb{P} \left( \max_{1\le t \le T}  \big| \sigma_T^{-1} T^{-1/2} \phi_{t,T}^\ast (\theta_0)^\top \lambda \big| > \zeta \right)
		\le \zeta^{-2-\delta} \sigma_T^{-2-\delta}  \sum_{t=1}^T  \mathbb{E} \left[ \big| T^{-1/2} \phi_{t,T}^\ast (\theta_0)^\top \lambda \big|^{2+\delta} \right],
	\end{align}
	by Markov's inequality.
	Furthermore, we get that
	\begin{align}
		4^{-2-\delta} \sum_{t=1}^T \mathbb{E} \left[ \big| T^{-1/2} \phi_{t,T}^\ast (\theta_0)^\top \lambda  \big| ^{2+\delta} \right] 
		& \le \, \sum_{t=1}^T \mathbb{E} \left[ \left| T^{-1/2} u_{t,T}(\theta_0)^\top \lambda \right|^{2+\delta} \right] \\
		& + \theta_{10}^{2+\delta} T^{-\frac{\delta}{2}}  \frac{1}{T}   \sum_{t=1}^T \mathbb{E} \left[ \left| \bh^\top  \WMean \lambda  \right|^{2+\delta}  |\varepsilon_t|^{2+\delta} \right] \\
		& + \theta_{20}^{2+\delta} T^{-\frac{\delta}{2}}  \frac{1}{T}   \sum_{t=1}^T \mathbb{E} \left[ \left| \bh^\top  \WMed \lambda \right|^{2+\delta}  \big| \mathds{1}_{\{\varepsilon_t > 0\}} - \mathds{1}_{\{\varepsilon_t < 0 \}} \big|^{2+\delta}  \right] \\
		& + \theta_{30}^{2+\delta} T^{-\frac{\delta}{2}}  \frac{1}{T}   \sum_{t=1}^T \mathbb{E} \left[ \left| \bh^\top  \WMode \lambda   \right|^{2+\delta} \delta_T^{-\frac{2+\delta}{2}}  \left|  K' \left( \frac{-\varepsilon_t}{\delta_T} \right) \right|^{2+\delta} \right]. 
	\end{align}
	The first term converges to zero by Assumption \ref{assu:GMMWeakIDRegCond}.
	The second and third term converge to zero as $T^{-\frac{\delta}{2}} \to 0$ and the respective moments are bounded by assumption.
	For the last term, we obtain convergence equivalently to the proof of Lemma \ref{lemma:MaxConvZero},
	\begin{align}
		&\theta_{30}^{2+\delta}  T^{-\frac{\delta}{2}}  \frac{1}{T}   \sum_{t=1}^T \mathbb{E} \left[ \left| \bh^\top  \WMode \lambda   \right|^{2+\delta} \delta_T^{-\frac{2+\delta}{2}}  \left|  K' \left( \frac{-\varepsilon_t}{\delta_T} \right) \right|^{2+\delta} \right] \\
		\le \,  &\theta_{30}^{2+\delta} (T \delta_T)^{-\frac{\delta}{2}} \frac{1}{T} \sum_{t=1}^T \mathbb{E} \left[ \left| \bh^\top  \WMode \lambda   \right|^{2+\delta} \int \left|  K' \left( u \right) \right|^{2+\delta}f_t(\delta_T u) \mathrm{d}u \right],
	\end{align}
	that converges to zero as $(T \delta_T)^{-\frac{\delta}{2}} \to 0$ and the respective moments are bounded by assumption.
\end{proof}

\begin{lemma}
	\label{lem:Ass_TrueMode} 
	If $X_t$ is the mode of $F_t$ for all $t \in \mathbb{N}$, the choice of $u_{t,T}(\theta_0)$ in \eqref{eqn:u_Mode} satisfies Assumption~\ref{assu:GMMWeakIDRegCond}~(D).
\end{lemma}

\begin{proof}
	If $X_t$ is the mode of $F_t$, we set $\theta_0 = (0,0,1)$ and thus, $\phi_{t,T}(\theta_0) = - \omega_\text{Mode} \delta_T^{-1/2} K'(\varepsilon_t / \delta_T) \bh$ and we get $\phi_{t,T}(\theta_0) = \phi_{t,T}^\ast(\theta_0) +  u_{t,T}(\theta_0)$ by setting ~
	\begin{align*}
		T^{-1/2} \phi_{t,T}^\ast(\theta_0) = \omega_\text{Mode} \, g_{t,T}^\ast
		\qquad \text{ and } \qquad
		T^{-1/2} u_{t,T}(\theta_0) = \omega_\text{Mode} \, g_{t,T}^e,
	\end{align*}
	as in the proof of Theorem \ref{thm:ModeRationality}.
	Thus, $T^{-1/2} \phi_{t,T}^\ast(\theta_0) = \omega_\text{Mode} \, g_{t,T}^\ast$ is a MDA satisfying condition (a).
	
	For the remaining conditions (b) and (c) of Assumption \ref{assu:GMMWeakIDRegCond} (D), as in the proof of Lemma \ref{lemma:ExpectationPsiTo0}, we get
	\begin{align*}
		g_{t,T}^e =  \frac{1}{2} T^{-1/2}  \delta_T^{7/2} \bh \int  u^2 K \left( u \right) f'''_{t} (\zeta \delta_T u) \, \mathrm{d}u.
	\end{align*}
	As $\delta_T \to 0$ and $\int  u^2 K \left( u \right) f'''_{t} (\zeta \delta_T u)$ is bounded as argued in the proof of Lemma \ref{lemma:ExpectationPsiTo0}, we get
	\begin{align*}
		T^{-1}  \sum_{t=1}^T || u_{t,T}(\theta_0) ||^2
		=  \sum_{t=1}^T ||  \omega_\text{Mode} \, g_{t,T}^e ||^2
		= \delta_T^{7} \frac{\omega_\text{Mode}^2}{4} \, \frac{1}{T} \sum_{t=1}^T \left| \left|   \bh \int  u^2 K \left( u \right) f'''_{t} (\zeta \delta_T u) \, \mathrm{d}u \right|\right|^2 \toP 0.
	\end{align*}
	Furthermore, by using similar arguments as above,
	we get that
	\begin{align*}
		T^{-1} \sum_{t=1}^T  u_{t,T}(\theta_0)  \phi_{t,T}(\theta_0)^\top
		&= - \omega_\text{Mode}^2 \sum_{t=1}^T g_{t,T}^e \cdot (T \delta_T)^{-1/2}  K'(\varepsilon_t / \delta_T) \bh^\top \\
		&= - \omega_\text{Mode}^2 \sum_{t=1}^T  \left( \frac{1}{2} T^{-1/2}  \delta_T^{7/2} \bh \int  u^2 K \left( u \right) f'''_{t} (\zeta \delta_T u) \, \mathrm{d}u \right) (T \delta_T)^{-1/2}  K'(\varepsilon_t / \delta_T) \bh^\top \\
		&=  \delta_T^3 \frac{\omega_\text{Mode}^2}{2} \frac{1}{T} \sum_{t=1}^T  K'(\varepsilon_t / \delta_T)   \int  u^2 K \left( u \right) f'''_{t} (\zeta \delta_T u) \, \mathrm{d}u \; \bh\bh^\top \toP 0.
	\end{align*}
	The condition $T^{-1} \sum_{t=1}^T \mathbb{E} \left[ u_{t,T}(\theta_0)  \phi_{t,T} (\theta_0)^\top \right] \to 0$ follows by exactly the same arguments, but working under expectation.
	Eventually, we get that
	\begin{align*}
		\sum_{t=1}^T \mathbb{E} \left[ || T^{-1/2} u_{t,T}(\theta_0)||^{2+\delta} \right]
		= T^{-\delta/2} \delta_T^{(14 + 7 \delta)/2}  \; \frac{\omega_\text{Mode}^{2+\delta}}{2^{2+\delta}} \frac{1}{T} \sum_{t=1}^T \mathbb{E} \left[ \left| \left|   \bh \int  u^2 K \left( u \right) f'''_{t} (\zeta \delta_T u) \, \mathrm{d}u \right|\right|^{2+\delta}  \right] \to 0,
	\end{align*}
	as $T^{-\delta/2} \delta_T^{(14 + 7 \delta)/2} \to 0$ and the remaining term is bounded, which concludes the proof.
\end{proof}

\section{\edit{Bandwidth Choice}}
\label{sec:BandwidthChoice}

As discussed at the end of Section \ref{sec:ForecastRatioanlityTestMode}, for the Gaussian kernel we choose the bandwidth as
\begin{align}
		\delta_T &= k_1 \cdot k_2 \cdot T^{-0.143}, 
		\qquad \text{with} \qquad 
		k_1 = 2.4 \, \widehat{\operatorname{Med}} \big( \big| \varepsilon_{1:T} - \widehat{\operatorname{Med}} [ \varepsilon_{1:T} ] \big| \big),
		\qquad \text{and}  \\
		k_2 &= \exp(-3 \left| \hat \gamma \right|),
		\qquad \text{where } \qquad 
		\hat \gamma =  \frac{3 \, \big( \frac{1}{T} \sum_t \varepsilon_t - \widehat{\operatorname{Med}} [\varepsilon_{1:T}] \big)}{\widehat{\sigma}(\varepsilon_{1:T})}, \nonumber
\end{align}
which extends the rule-of-thumb proposed by \cite{Kemp2012} and \cite{Kemp2019} by additionally taking into account the empirical skewness of the data.
While the convergence rate of $T^{-0.143}$, which is almost $T^{-1/7}$, follows from the theoretical considerations in Section \ref{sec:ForecastRatioanlityTestMode}, we assess the performance of the data-driven constants $k_1$ and $k_2$ in the following through simulations.

In particular, we consider the simulation setup of Section \ref{sec:SimModeRat} and use the modified bandwidth choices given by
\begin{align}
	\label{eqn:BandwidthFactor}
	\delta_{T,c} = c \cdot k_1 \cdot k_2 \cdot T^{-0.143},
\end{align}
with the additional scaling factors $c \in \{0.5, 0.75, 1, 1.5, 2\}$.

\begin{table}[tb]
	\centering
	\scriptsize
	\begin{tabular}{l lrrrrr lrrrrr lrrrrr rrrrr}
		\midrule 
		& & \multicolumn{5}{c}{Skewness $\gamma =  0$}  & & \multicolumn{5}{c}{Skewness $\gamma =  0.25$}   & & \multicolumn{5}{c}{Skewness $\gamma =  0.5$} \\
		\cmidrule(lr){3-7} \cmidrule(lr){9-13} \cmidrule(lr){15-19} 
		Bandwidth factor & & 0.5 & 0.75& 1 & 1.5  & 2  & & 0.5 & 0.75& 1 & 1.5  & 2  & & 0.5 & 0.75& 1 & 1.5  & 2 \\
		\midrule 
		\multicolumn{15}{l}{\textbf{(1) iid DGP}} \\
		\\
		$100$ &   & 4.2 & 4.7 & 4.9 & 6.1 & 6.3 &   &  4.2 &  4.4 &  6.2 &  8.5 & 10.5 &   &  4.9 &  5.7 &  7.4 & 14.6 & 21.2\\
		$500$ &   & 4.8 & 4.9 & 5.7 & 6.2 & 6.2 &   &  5.6 &  6.2 &  8.2 & 12.8 & 20.8 &   &  5.8 &  5.7 &  7.2 & 14.8 & 27.9\\
		$2000$ &   & 5.4 & 5.9 & 6.2 & 6.0 & 6.0 &   &  5.0 &  5.9 &  7.6 & 14.6 & 31.1 &   &  4.4 &  5.5 &  6.3 & 13.7 & 30.2\\
		$5000$ &   & 5.1 & 5.1 & 5.2 & 5.7 & 5.0 &   &  5.7 &  5.7 &  6.8 & 14.8 & 36.0 &   &  5.3 &  6.0 &  6.9 & 12.3 & 33.6 \\
		\midrule 
		\multicolumn{15}{l}{\textbf{(2) AR-GARCH DGP}} \\
		\\
		$100$ &   & 4.5 & 5.2 & 5.9 & 6.5 & 7.1 &   &  4.9 &  5.5 &  6.6 &  8.2 & 10.5 &   &  5.1 &  6.6 &  8.6 & 14.9 & 22.9\\
		$500$ &   & 3.8 & 4.3 & 4.9 & 5.1 & 5.1 &   &  4.7 &  5.7 &  6.8 & 12.1 & 19.3 &   &  5.7 &  5.3 &  7.3 & 14.4 & 27.1\\
		$2000$ &   & 4.4 & 4.7 & 5.1 & 6.3 & 6.2 &   &  4.9 &  6.3 &  7.0 & 13.2 & 27.6 &   &  4.6 &  5.3 &  6.2 & 13.9 & 32.1\\
		$5000$ &   & 5.4 & 4.9 & 4.9 & 5.8 & 5.5 &   &  4.5 &  4.7 &  6.4 & 15.2 & 37.2 &   &  4.5 &  4.7 &  5.7 & 12.7 & 33.8 \\
		\midrule 		
		\multicolumn{15}{l}{\textbf{(3) Cross-Sectional Heteroskedastic DGP}} \\
		\\
		$100$ &   & 4.9 & 5.7 & 5.5 & 6.2 & 6.7 &   &  4.9 &  5.1 &  5.8 &  7.0 &  8.9 &   &  4.8 &  5.2 &  7.3 & 11.5 & 15.8\\
		$500$ &   & 4.6 & 4.6 & 5.1 & 6.7 & 6.3 &   &  5.9 &  6.9 &  7.5 &  9.3 & 14.7 &   &  5.3 &  5.4 &  7.3 & 12.1 & 21.2\\
		$2000$ &   & 5.1 & 5.1 & 5.8 & 5.1 & 5.1 &   &  5.5 &  6.2 &  7.2 & 14.3 & 26.2 &   &  5.1 &  5.5 &  6.6 & 13.8 & 29.1\\
		$5000$ &   & 4.9 & 4.3 & 4.2 & 5.0 & 5.6 &   &  4.7 &  5.5 &  6.9 & 16.7 & 39.7 &   &  5.5 &  5.5 &  6.2 & 12.9 & 33.3 \\
		\midrule 		
		\multicolumn{15}{l}{\textbf{(4) AR DGP}} \\
		\\
		$100$ &   & 3.8 & 4.9 & 5.6 & 7.8 & 7.8 &   &  4.9 &  5.3 &  6.5 &  9.4 & 11.4 &   &  5.5 &  7.3 &  8.9 & 14.5 & 20.4\\
		$500$ &   & 5.4 & 5.9 & 6.2 & 6.4 & 6.1 &   &  5.1 &  6.2 &  7.8 & 12.7 & 19.5 &   &  5.3 &  6.2 &  8.0 & 16.9 & 29.6\\
		$2000$ &   & 3.9 & 4.7 & 5.1 & 5.2 & 5.3 &   &  4.9 &  5.0 &  5.9 & 13.2 & 30.3 &   &  5.1 &  6.2 &  7.2 & 14.3 & 30.4\\
		$5000$ &   & 5.2 & 5.2 & 5.1 & 5.1 & 5.0 &   &  4.4 &  5.1 &  6.5 & 15.3 & 37.6 &   &  4.7 &  5.0 &  5.5 & 11.8 & 34.5 \\
		\bottomrule 
	\end{tabular}
	\caption{\textbf{Mode rationality test sizes for different bandwidth choices.} 
		This table displays the mode rationality test sizes for five bandwidth correction factors $c \in \{0.5, 0.75, 1, 1.5, 2\}$ as described in \eqref{eqn:BandwidthFactor} for the four DGPs from \eqref{eqn:GeneralDGP} and footnote \ref{fn:DGPs} together with three choices of the skewness parameter $\gamma \in \{0, 0.25, 0.5\}$ and four sample sizes.}
	\label{tab:SizeBandwidth}
\end{table}

Table \ref{tab:SizeBandwidth} reports the size of our mode rationality test for the adjustments with  $c \in \{0.5, 0.75, 1, 1.5, 2\}$, for three skewness parameters of $\gamma \in \{0, 0.25, 0.5\}$ and the four considered sample sizes for the four considered DGPs in \eqref{eqn:GeneralDGP}. Besides the homoskedastic cross-sectional and the AR-GARCH DGPs described below  \eqref{eqn:GeneralDGP}, we also use a heteroskedastic cross-sectional DGP that is as the homoskedastic one, but with $\sigma_{t+1} = 0.5 + 1.5(t + 1)/T$, and a simple AR(1) process with $Z_t = Y_{t}$, $\zeta = 0.5$ and $\sigma_{t+1} = 1$.

For symmetric data with $\gamma = 0$, all bandwidth choices result in a similar behavior under the null with approximately correct rejection rates.
This easily follows from the fact that \emph{any} modal midpoint  (with any value of $\delta > 0$) coincides with the mode for symmetric data such that any bandwidth choice works theoretically.
For the skewed data, however, an increasing bandwidth through a higher value of $c$ results in inflated test sizes, where our choice of $c=1$ with values below $8\%$ in reasonable sample sizes behaves satisfactory. 

\begin{figure}[tb]
	\centering
	\includegraphics[width=\linewidth]{sim_RR_bandwidth_bias_1X.pdf}
	\caption{\textbf{Mode rationality test power for different bandwidth choices.} 
	This figure plots the empirical rejection frequencies for different bandwidth factors $c \in \{0.5, 0.75, 1, 1.5, 2\}$ as given in \eqref{eqn:BandwidthFactor} against the degrees of misspecification $\kappa$ for two DGPs in the vertical panels and for four skewness levels in the horizontal panels.
	We fix the sample size to $T=500$, the misspecification follows the \textit{bias} setup and we utilize the instrument vector $\bh= (1,X_t)$ and a nominal significance level of $5\%$.}
	\label{fig:PowerBandwidth}
\end{figure}

Figure \ref{fig:PowerBandwidth} continues to illustrate the test's power by plotting the rejection rates against the degree of misspecification $\kappa$ for the homoskedastic and AR-GARCH DGPs, the four skewness values and a fixed sample size of $T=500$.
Notice that the rejection rates at $\kappa = 0$ display a subset of the values shown in Table \ref{tab:SizeBandwidth}.
We see a uniformly increasing power for increasing values of $c$ that can be explained by the fact that larger bandwidth choices effectively utilize more data.
However, as already seen in Table \ref{tab:SizeBandwidth}, larger values of $c$ inflate the rejection rates under the null hypothesis with $\kappa = 0$, hence leading to a severely oversized test.

Overall, as common in non-parametric statistics, the finite sample bandwidth choice is a delicate balancing of test size and power (or in the classical estimation world, bias and variance), where our choice of $c=1$ achieves a relatively balanced behavior of the test.

\section{Kernel Choice}
\label{sec:KernelChoice}

The asymptotic results presented in Section \ref{sec:ForecastRatioanlityTestMode} rely on the chosen kernel $K$ satisfying Assumption \ref{assu:Kernel}. Besides the normalization $\int K(u) \mathrm{d}u = 1$ and boundedness assumptions, we impose the \textit{first-order kernel} condition $\int u K(u) \mathrm{d}u = 0$ (and $\int u^2 K(u) \mathrm{d}u >0$ follows from the non-negativity of $K$).
As discussed in \cite{LiRacine2006}, higher-order kernels allow one to apply a Taylor expansion of higher order and can thereby obtain a faster rate of convergence, which could in theory be made arbitrarily close to $\sqrt{T}$, at the cost of stronger smoothness assumptions on the underlying density function. However, in our application of kernel functions to the generalized modal midpoint in Definition \ref{def:GeneralizedModalMidpoint}, we need to ensure that the limit of this quantity is well-defined and unique, and that the identification is \textit{strict}. For this, we assume in Theorem \ref{thm:GenModalMidpointProperties} that the kernel function is log-concave 
which is automatically violated for higher-order kernels. Consequently, we do not consider higher-order kernels in this work. 

It is also well-known in the literature on nonparametric statistics that kernels with bounded support can be more efficient \citep{LiRacine2006}. 
However, note that the kernel choice enters the asymptotic variance of nonparametric density estimation through the quantity $\int K(u)^2 \mathrm{d}u$, while the covariance $\Omega_{\mathrm{Mode}}$ in Theorem \ref{thm:ModeRationality}  depends upon $\int K'(u)^2 \mathrm{d}u$, revealing that the efficiency of our mode rationality tests depends upon a different quantity.

\begin{figure}[tb]
	\includegraphics[width=\linewidth]{sim_RR_kernels.pdf}
	\caption{\textbf{Test power for different kernel functions.}
		This figure plots the empirical rejection frequencies for the Gaussian and the biweight kernels against the degrees of misspecification $\kappa$ for different sample sizes in the vertical panels and for four skewness levels in the horizontal panels.
		We simulate data from the homoskedastic process, the misspecification follows the \textit{bias} setup and we utilize the instrument vector $\bh= (1,X_t)$ and a nominal significance level of $5\%$.
		The bandwidth factor refers to the bandwidth choice explained in \eqref{eqn:BandwidthFactor}.}
	\label{fig:PowerBiasKernels}
\end{figure}

\edit{Figure \ref{fig:PowerBiasKernels} compares the Gaussian kernel (that we use in our main analyses) with a biweight kernel within the simulation setup of Section \ref{sec:BandwidthChoice}.
For the biweight kernel, we adjust the bandwidth choice given in Section~\ref{sec:BandwidthChoice} by multiplying with a factor of 2.5 in order to account for the different kernel shape.}
Figure \ref{fig:PowerBiasKernels} illustrates that the test power does not increase by employing a biweight kernel, which has bounded support, and is usually found to be relatively efficient in nonparametric estimation. Strict identifiability of the generalized modal midpoint---and hence, asymptotic identifiability of the mode---only holds for kernel functions with unbounded support, which is satisfied by the Gaussian kernel, but not so for the biweight kernel.

\section{Convex combination of functional values}
\label{sec:convexfunctionals}

Here, we illustrate that a convex combination of functionals is generally neither elicitable nor identifiable.
This result shows that---as stated in Remark \ref{rem:ConvexFunctionalCombinations}---testing forecast rationality directly for convex functional combinations is impossible.
For this, we adapt the simplified notation of Section \ref{sec:ModeFunctional}.

\begin{proposition}
	\label{prop:convexvalues}
	Let $\mathcal{P}$ be a convex class of distributions and let $\Gamma_\beta(P)= \beta \Gamma^l(P) + (1-\beta) \Gamma^n(P)$, $\beta \in [0,1]$ be the convex combination of a \emph{linear} functional $\Gamma^l: \mathcal{P} \to \mathbb{R}$ and a non-linear functional $\Gamma^n: \mathcal{P} \to \mathbb{R}$, which are both continuous (in the distribution $P$) and translation equivariant, i.e., if $P\in \mathcal{P}$, then for $c \in \mathbb{R}$ the shifted $P+c \in \mathcal{P}$ and $\Gamma^l(P+c)=\Gamma^l(P)+c$ and $\Gamma^n(P+c)=\Gamma^n(P)+c$.
	Then, the functional  $\Gamma_\beta$ is neither elicitable nor identifiable.	
\end{proposition}

\begin{proof}[Proof of Proposition \ref{prop:convexvalues}]
	 Theorem 6 of \cite{Gneiting2011} and Proposition 3.11 of \cite{FisslerHoga2021} show that convex level sets, i.e., 
	\begin{align}
		\text{for } P_1,P_2 \in \mathcal{P} \text{ with } \Gamma_\beta(P_1)=\Gamma_\beta(P_2) \quad \Longrightarrow \quad \Gamma_\beta(\alpha P_1 + (1-\alpha) P_2) = \Gamma_\beta(P_1),
	\end{align}
	for $\alpha \in (0,1)$ are a necessary condition for elicitability and identifiability of a functional.
	We show that the functional $\Gamma_\beta$ does not have convex level sets.

	As $\Gamma^n$ is not linear, there exists $\alpha \in (0,1)$, and $P_1$, $P_2 \in \mathcal{P}$ such that 
	$$\Gamma^n(\alpha P_1 + (1-\alpha) P_2) \neq \alpha \Gamma^n(P_1) + (1-\alpha) \Gamma^n(P_2).$$
	Define $P_2' = P_2 - \Gamma_\beta(P_2) + \Gamma_\beta(P_1)$ such that $\Gamma_\beta(P_2')=\Gamma_\beta(P_1)$ and
	$$\Gamma^n(\alpha P_1 + (1-\alpha) P_2') \neq \alpha \Gamma^n(P_1) + (1-\alpha) \Gamma^n(P_2')$$
	as $\Gamma^n(P+c)=\Gamma^n(P)+c$.
	It follows that
	\begin{align*}
		\Gamma_\beta(\alpha P_1 + (1-\alpha) P_2') &= 
		\beta \Gamma^l(\alpha P_1 + (1-\alpha) P_2') + (1-\beta) \Gamma^n(\alpha P_1 + (1-\alpha) P_2') \\ &\neq
		\beta \alpha \Gamma^l(P_1) + \beta (1-\alpha) \Gamma^l(P_2') +
		(1-\beta) \alpha \Gamma^n(P_1) + (1-\beta) (1-\alpha) \Gamma^n(P_2')\\ &=
		\alpha \Gamma_\beta(P_1) + (1-\alpha) \Gamma_\beta(P_2')\\&=
		\Gamma_\beta(P_1)
	\end{align*}
	and hence $\Gamma_\beta$ does not have convex level sets.
\end{proof}

As functionals are linear if and only if they are expectations \citep{abernethy2012characterization}, the mean is linear and the median is non-linear for classes $\mathcal{P}$ sufficiently rich enough such that it contains distributions for which the median does not equal the mean (i.e., asymmetric distributions). Hence, Proposition \ref{prop:convexvalues} shows that a convex combination of the mean and median is generally neither elicitable nor identifiable. Eliciting a convex combination that may further include the mode (which is itself only asymptotically elicitable) is only possible in the unusual case where a convex (sub-)combination of the nonlinear median and mode functionals becomes linear, an outcome that does not generally hold.

Fortunately, testing rationality for functionals elicited through a convex combination of loss functions is feasible and its interpretability is supported by the following result that convexity of the combination weights is preserved when moving from a functional elicited by a convex combination of loss (identification) functions to a convex combination of functional values. 


The loss functions  $L_\mathrm{Mean}$,  $L_\mathrm{Med}$ and $L_\mathrm{Mode, \delta} = \delta^{3/2} L_\delta^K$ are defined just before equation \eqref{eqn:CentralityAssumptionLoss} and the identification functions $V_\mathrm{Mean}$,  $V_\mathrm{Med}$ and $V_\mathrm{Mode, \delta} = \delta^{3/2}  V_\delta^K$ at equations \eqref{eq:MeanIF}, \eqref{eq:MedIF} and \eqref{eq:ModeIF}.
We use the scaling by $\delta^{3/2}$ for the asymptotic mode loss and identification functions to be consistent with Section \ref{sec:FunctionalCentrality}.

\begin{proposition}
	 \label{prop:BetaThetaMapping}
	 Let $\mathcal{P}$ be some class of distributions such that the mean, $\mu$, the median, $m$, and the generalized modal midpoint with parameter $\delta$, $\Gamma^K_\delta$, exist and are elicited by their loss functions $L_\mathrm{Mean}$,  $L_\mathrm{Med}$, and $L_\mathrm{Mode, \delta}$.
	 Let $x$ be the functional defined by
	 \begin{align}
	 	x(P) = \underset{\tilde x \in \mathbb{R}}{\argmin} \; \mathbb{E}_{Y \sim P} \left[ \theta_0^\top \Big( L_\mathrm{Mean}(\tilde x,Y),  L_\mathrm{Med}(\tilde x,Y),   L_\mathrm{Mode, \delta}(\tilde x,Y) \Big)^\top \right] 
	 \end{align}
 	 for some $\theta_0 \in \Theta$.
	 Then, for every $P \in \mathcal{P}$ there exists some $\beta_0 \in \Theta$, such that $x(P) = \beta_0^\top \big(\mu(P), m(P) , \Gamma^K_\delta(P) \big)^\top$.
\end{proposition}

\begin{proof}[Proof of Proposition \ref{prop:BetaThetaMapping}]
	Let $P \in \mathcal{P}$, where we assume without loss of generality that the three functionals are not all equal. For notational convenience we drop $P$ when denoting functional values, e.g., we write $\mu$ instead of $\mu(P)$.
	For the elicited forecast $x$, it holds that
	\begin{align}
		\overline{V}(x) := \theta_0^\top \Big( \overline{V}_\mathrm{Mean}(x,P),  \overline{V}_\mathrm{Med}(x,P),  \overline{V}_\mathrm{Mode, \delta}(x,P) \Big)^\top = 0.
	\end{align}
	We define $L := \min(\mu, m, \Gamma^K_\delta )$ and $U := \max(\mu, m, \Gamma^K_\delta )$ as the lower and upper functional values where it holds that $L<U$. 
	Further let $\bar V_L( x,P)$ and $\bar V_U( x,P)$ denote the corresponding expected identification functions for the distribution $P$.
	Suppose that $x < L$. Then, it must hold that $\overline{V}(x) > 0$ as all three expected identification functions have the same sign as they are oriented in the sense of \cite{steinwart}.
	Similarly, if $x > U$, it must hold that $\overline{V}(x) < 0$.
	Hence, we can conclude that $x \in [L,U]$, which implies that there exists $\zeta \in [0,1]$ such that $ x = \zeta L + (1-\zeta) U$. Thus, $ x$ can be constructed as a convex combination of the functional values, i.e., there exists a $\beta_0 \in \Theta$ such that $x = \beta_0^\top \big(\mu, m ,  \Gamma^K_\delta \big)^\top$.
\end{proof}


\section{\edit{Heterogeneity of Forecasters}}
\label{sec:HeterogeneityForecasters}

In this section, we discuss heterogeneity in the measure of centrality adopted by survey respondents. For this discussion we adopt the simplified decision-theoretic notation of Section \ref{sec:ModeFunctional}.
In particular, consider a forecaster who issues a (centrality) forecast $x$ for her outcome variable $Y \sim P$, whose distribution $P \in \mathcal{P}$ is in some suitable class of distributions $\mathcal{P}$.

Formally, we represent the mixture of forecaster types through a probabilistic mixture of mean, median, and the generalized modal midpoint (with bandwidth $\delta$) forecasts with combination weights (probabilities) $\gamma = (\gamma_\mathrm{Mean},\gamma_\mathrm{Med},\gamma_\mathrm{Mode, \delta}) \in \Theta$.
Given a random variable $Z \sim C_\gamma$ that is categorically distributed with values $\{(1,0,0), \, (0,1,0), \, (0,0,1)\}$ and corresponding probabilities $\gamma$, we define the probabilistic mixture (with the symbol $ \otimes_\gamma$) as
\begin{equation}
	\label{eq:gammafc}
	x_\gamma^\dagger
	= {x_\gamma^\dagger(P)}  
	= \otimes_\gamma \big( \mu(P),m(P),\Gamma^K_\delta(P) \big) 
	=  \big( \mu(P),m(P),\Gamma^K_\delta(P) \big) Z^\top.
\end{equation}
Notice that opposed to the \emph{deterministic} forecasts $x$ considered in Section \ref{sec:ModeFunctional}, the probabilistic mixture $x_\gamma^\dagger$ is \emph{random} through its definition relying on the random variable $Z$.


\begin{proposition}
	\label{prop:GammaThetaMapping}
	Let $\mathcal{P}$ be some class of distributions such that the mean, $\mu = \mu(P)$, the median, $m = m(P)$, and the generalized modal midpoint with parameter $\delta$, $\Gamma^K_\delta = \Gamma^K_\delta(P)$, exist and are unique for all $P \in \mathcal{P}$.
	Let $x_\gamma^\dagger$ be the functional defined by Equation (\ref{eq:gammafc}) for some $\gamma \in \Theta$.
	Then, for every $P \in \mathcal{P}$ there exists some $\theta_0 \in \Theta$, such that
	$$
	{\mathbb{E}_{Y\sim P, Z \sim C_\gamma}} \left[ \theta_0^\top \Big({V}_\mathrm{Mean}(x_\gamma^\dagger,Y),{V}_\mathrm{Med}(x_\gamma^\dagger,Y),{V}_{\mathrm{Mode, \delta}}(x_\gamma^\dagger,Y)\Big)^\top \right] = 0
	$$
\end{proposition}

\begin{proof}
	Using the notation $V_\theta(x,Y)=\theta^\top \Big({V}_\mathrm{Mean}(x,Y),{V}_\mathrm{Med}(x,Y),{V}_{\mathrm{Mode, \delta}}(x,Y)\Big)^\top$ and \\  $\overline{V}(x,P)= \mathbb{E}_{Y\sim P, Z \sim C_\gamma}[V(x,Y)]$, the linearity of the expectation operator implies
	\begin{align}
		\label{eq:ExpectedGammaMixture}
		\overline{V}_\theta(x,P) = \theta^\top \Big( \overline{V}_\mathrm{Mean}(x,P),  \overline{V}_\mathrm{Med}(x,P),  \overline{V}_{\mathrm{Mode, \delta}}(x,P) \Big)^\top.
	\end{align}
	Using  $x= \otimes_\gamma (\mu(P),m(P),\Gamma^K_\delta(P))$ and the law of total expectations, we get for the first component that
	\begin{align}
		\label{eq:GammaVMean}
		\overline{V}_\mathrm{Mean}(x,P)=\gamma^\top\Big( \overline{V}_{\mathrm{Mean}}(\mu,P), \overline{V}_{\mathrm{Mean}}(m,P), \overline{V}_{\mathrm{Mean}}(\Gamma^K_\delta,P)\Big)^\top
	\end{align}
	where it naturally holds that $\overline{V}_{\mathrm{Mean}}(\mu,P)=0$.
	
	Analogous formulas to \eqref{eq:GammaVMean} hold for $\overline{V}_\mathrm{Med}(x,P)$ and $\overline{V}_{\mathrm{Mode, \delta}}(x,P)$.
	Plugging these into \eqref{eq:ExpectedGammaMixture} gives
	\begin{align}
		\label{eqn:V_in_het}
		\overline{V}_\theta(x,P)=\theta_\mathrm{Mean} \Big(\gamma_\mathrm{Med} \overline{V}_{\mathrm{Mean}}(m,P)+ \gamma_{\mathrm{Mode, \delta}} \overline{V}_{\mathrm{Mean}}(\Gamma^K_\delta,P)\Big)+ \nonumber \\
		\theta_\mathrm{Med} \Big(\gamma_\mathrm{Mean} \overline{V}_{\mathrm{Med}}(\mu,P)+ \gamma_{\mathrm{Mode, \delta}} \overline{V}_{\mathrm{Med}}(\Gamma^K_\delta,P)\Big)+\\
		\theta_{\mathrm{Mode, \delta}} \Big(\gamma_\mathrm{Mean} \overline{V}_{\mathrm{Mode, \delta}}(\mu,P)+ \gamma_\mathrm{Med} \overline{V}_{\mathrm{Mode, \delta}}(m,P)\Big). \nonumber 
	\end{align}
	Let $L = L(P) = \min(\mu(P),m(P),\Gamma^K_\delta(P))$ denote the ``lowest'' functional, $U$ the ``highest'' functional, and $M$ the third functional in the ``middle''. 
	As the three identification functions are oriented in the sense of \cite{steinwart}, it holds that $\overline{V}_{L}(U,P)\ge 0 $ and $\overline{V}_{L}(M,P) \ge 0$ such that
	\[
	\gamma_U \overline{V}_{L}(U,P)+\gamma_M \overline{V}_{L}(M,P) \ge 0.
	\]
	Analogously, it follows that
	\[
	\gamma_L \overline{V}_{U}(L,P)+\gamma_M \overline{V}_{U}(M,P) \le 0.
	\]
	Thus, there is at least one non-negative and one non-positive term in Equation (\ref{eqn:V_in_het}) and for any $P$ and any $\gamma$ there exits a $\theta$ such that $\overline{V}_\theta(x,P)=0$. (If two functionals are the same, a similar argument applies.)
\end{proof}

Proposition \ref{prop:GammaThetaMapping} illustrates that in a setting of multiple forecasters, which have identical forecast distributions $P$, but are heterogeneous in either forecasting the mean, median or mode (modal midpoint), the asymptotic moment condition that there exists a $\theta_0 \in \Theta$ such that $\frac{1}{T} \sum_{t=1}^T \mathbb{E} \left[ \phi_{t,T} (\theta_0)  \right] \to 0$, which is closely related to Assumption \ref{assu:GMMWeakIDRegCond} (D), holds for the choice $\bh = 1$. 
Moreover, the value of $\theta_0 \in \Theta$ does not change for any location shift $P+c$ of the distribution $P$, as all three identification functions fulfill the \emph{translation invariance} property $V(x+c,y+c)=V(x,y)$. Thus, Proposition \ref{prop:GammaThetaMapping} implies the existence of a unique $\theta_0 \in \Theta$ for which the moment condition is satisfied for $\bh = 1$, even if different forecasters (that report heterogeneous functionals) have shifted distributions. 
For $\bh = 1$, Proposition \ref{prop:GammaThetaMapping} can also be extended to the more general assumption that the heterogeneous forecast distributions only have \emph{identical order} of mean, median, and modal midpoint.\footnote{
In that case, we define $\mathcal{P}^O \subseteq \mathcal{P}$ as the subclass of distributions that have identical order of mean, median, and modal midpoint and consider a probability distribution $Q$ on $\mathcal{P}^O$ (that informally represents the different distributions of the different forecasters).
The proof of  Proposition \ref{prop:GammaThetaMapping} then works equivalently by considering the expected identification functions $\overline{V}(x,Q)$ to also capture the expectation over $Q$, which represents (the distribution over) the different subjective distributions.}
However, multiple instruments (as e.g., our baseline case of $\bh = (1,X_t)$) generally imply different values of $\theta_0$ for each component of the instrument vector, such that there exists no $\theta_0$ that ensures the (vector-valued) validity of Assumption \ref{assu:GMMWeakIDRegCond} (D).

Also note that in the case of a probabilistic mixture, the parameter $\theta_0$ contains information on the ratio of forecasters $\gamma$, but their values are not necessarily equal. 
Using Equation (\ref{eqn:V_in_het}) and (for simplicity) considering the situation without mode (modal midpoint) forecasters, we observe that
$$\frac{\theta_{\mathrm{Mean}}}{\theta_\mathrm{Med}}=\frac{\gamma_\mathrm{Mean}}{\gamma_\mathrm{Med}} \frac{\overline{V}_{\mathrm{Med}}(\mu,P)}{\overline{V}_{\mathrm{Mean}}(m,P)},$$
such that $\theta_0$ reflects the proportion of forecasters if the identification functions are suitably standardized.


\begin{figure}[ptb]
	\centering
	\includegraphics[width=\linewidth]{mcs_comb_forec}
	\caption{\textbf{Coverage rates in simulations with heterogeneous forecasters.} 
		This figure shows the percentage values in how many of the simulation runs the respective confidence set includes a given point in the triangle in the iid DGP from \eqref{eqn:GeneralDGP}.
		The left column of plots uses the instruments  $\bh = 1$ and the right column $\bh= (1,X_t)$.
		The three plot lines use equally weighted probabilistic mixtures (see \eqref{eq:gammafc}) of the respective functional values.}
	\label{fig:sim_convex_gammacomb}
\end{figure}

We next illustrate via simulations that our confidence sets from \eqref{eqn:GMMWeakIdConfidenceRegion} develop power against probabilistic combinations of functional values as in \eqref{eq:gammafc} for sufficiently informative instruments.
For this, we utilize the iid DGP from \eqref{eqn:GeneralDGP} and simulate according to probabilistic combinations  in \eqref{eq:gammafc} using mean-median combinations with $\gamma = (0.5, 0.5,0)$, mean-mode combinations with $\gamma = (0.5, 0, 0.5)$, and median-mode combinations with $\gamma = (0, 0.5, 0.5)$.
We use the two instrument choices $\bh = 1$ and $\bh= (1,X_t)$.

Figure \ref{fig:sim_convex_gammacomb} depicts the confidence set coverage rates for these six settings.
As illustrated by Proposition \ref{prop:GammaThetaMapping}, when using a constant instrument $\bh = 1$, we can always find a $\theta_0 \in \Theta$ (even partially identified lines in $\Theta$) with approximately nominal coverage rates of $90\%$.
In contrast, when using the more informative instruments $\bh= (1,X_t)$, we see rejection rates clearly below the nominal level of $90\%$ showing that our confidence sets develop power against the entire unit simplex $\Theta$.

In summary, this section illustrates that our confidence sets for convex combinations of centrality measures contain probabilistic mixtures of forecaster types only for unconditional tests using the constant as the only instrument.
In the baseline case of our empirical application, $\bh= (1,X_t)$, our procedure develops power against probabilistic mixtures.

\pagebreak
\section{\edit{Additional Simulation Results}}
\label{sec:AdditionalPlotsTables}

\begin{table}[h!]
	\centering
	\scriptsize
	\begin{tabular}{l lrrrr lrrrr lrrrr lrrrr lrrrr}
		\toprule 
		& & \multicolumn{9}{c}{(1) iid DGP}  & & \multicolumn{9}{c}{(2) AR-GARCH DGP}  \\
		\cmidrule(lr){3-11} \cmidrule(lr){13-21} 
		&&  \multicolumn{4}{c}{Instrument $\bh =  1$} & & \multicolumn{4}{c}{Instrument $\bh =  (1,X_t)$}   & & \multicolumn{4}{c}{Instrument $\bh =  1$} & & \multicolumn{4}{c}{Instrument $\bh =  (1,X_t)$} \\
		\cmidrule(lr){3-6} \cmidrule(lr){8-11} \cmidrule(lr){13-16} \cmidrule(lr){18-21}
		Skewness & & 0 & 0.1 & 0.25 & 0.5 & & 0 & 0.1 & 0.25 & 0.5  && 0 & 0.1 & 0.25 & 0.5 && 0 & 0.1 & 0.25 & 0.5\\ 
		\midrule 
		$100$ &   & 0.7 & 1.1 & 1.2 & 2.5 &   & 1.1 & 1.3 & 1.2 & 1.8 &   & 0.9 & 1.2 & 1.7 & 2.9 &   & 0.8 & 1.2 & 1.6 & 2.3\\
		$500$ &   & 1.1 & 1.4 & 2.4 & 2.3 &   & 1.3 & 1.0 & 2.2 & 1.8 &   & 0.8 & 1.5 & 2.2 & 1.8 &   & 0.8 & 1.0 & 1.5 & 1.7\\
		$2000$ &   & 1.4 & 1.2 & 2.2 & 1.4 &   & 1.2 & 1.5 & 1.7 & 1.3 &   & 1.7 & 0.8 & 1.8 & 2.1 &   & 0.9 & 1.1 & 1.8 & 1.7\\
		$5000$ &   & 0.9 & 1.1 & 1.7 & 2.1 &   & 0.9 & 1.1 & 1.4 & 1.7 &   & 0.9 & 1.1 & 2.3 & 1.2 &   & 0.9 & 1.1 & 1.6 & 1.4 \\
		\midrule
		\\ 
		\midrule
		& & \multicolumn{9}{c}{(3) Cross-Sectional Heteroskedastic DGP}  & & \multicolumn{9}{c}{(4) AR DGP}  \\
		\cmidrule(lr){3-11} \cmidrule(lr){13-21} 
		&&  \multicolumn{4}{c}{Instrument $\bh =  1$} & & \multicolumn{4}{c}{Instrument $\bh =  (1,X_t)$}   & & \multicolumn{4}{c}{Instrument $\bh =  1$} & & \multicolumn{4}{c}{Instrument $\bh =  (1,X_t)$} \\
		\cmidrule(lr){3-6} \cmidrule(lr){8-11} \cmidrule(lr){13-16} \cmidrule(lr){18-21}
		Skewness & & 0 & 0.1 & 0.25 & 0.5 & & 0 & 0.1 & 0.25 & 0.5  && 0 & 0.1 & 0.25 & 0.5 && 0 & 0.1 & 0.25 & 0.5\\ 
		\midrule
		$100$ &   & 0.7 & 1.3 & 1.1 & 2.6 &   & 0.9 & 0.9 & 1.0 & 2.0 &   & 1.1 & 1.4 & 1.4 & 2.5 &   & 1.1 & 0.9 & 1.1 & 2.4\\
		$500$ &   & 1.0 & 1.8 & 2.5 & 2.3 &   & 1.5 & 2.1 & 1.8 & 1.8 &   & 1.2 & 1.0 & 2.5 & 2.6 &   & 1.4 & 1.2 & 1.8 & 2.0\\
		$2000$ &   & 1.2 & 1.7 & 2.6 & 1.8 &   & 0.9 & 1.7 & 1.8 & 1.4 &   & 1.3 & 1.1 & 2.1 & 1.7 &   & 0.9 & 1.1 & 1.5 & 2.2\\
		$5000$ &   & 0.8 & 1.1 & 2.2 & 1.8 &   & 1.0 & 1.2 & 1.5 & 1.6 &   & 0.8 & 1.8 & 2.0 & 1.7 &   & 1.0 & 1.4 & 1.8 & 1.3\\
		\bottomrule
	\end{tabular}
	\caption{This table presents the empirical size of the mode rationality test for a Gaussian kernel, varying sample sizes, varying levels of skewness in the residual distribution and different instrument choices for a nominal significance level of $1\%$. The DGPs are described in \eqref{eqn:GeneralDGP} and in footnote \ref{fn:DGPs}.}	
	\label{tab:SizeGaussian1}
\end{table}

\begin{table}[h!]
	\centering
	\scriptsize
	\resizebox{\columnwidth}{!}{
	\begin{tabular}{l lrrrr lrrrr lrrrr lrrrr lrrrr}
		\toprule 
		& & \multicolumn{9}{c}{(1) iid DGP}  & & \multicolumn{9}{c}{(2) AR-GARCH DGP}  \\
		\cmidrule(lr){3-11} \cmidrule(lr){13-21} 
		&&  \multicolumn{4}{c}{Instrument $\bh =  1$} & & \multicolumn{4}{c}{Instrument $\bh =  (1,X_t)$}   & & \multicolumn{4}{c}{Instrument $\bh =  1$} & & \multicolumn{4}{c}{Instrument $\bh =  (1,X_t)$} \\
		\cmidrule(lr){3-6} \cmidrule(lr){8-11} \cmidrule(lr){13-16} \cmidrule(lr){18-21}
		Skewness & & 0 & 0.1 & 0.25 & 0.5 & & 0 & 0.1 & 0.25 & 0.5  && 0 & 0.1 & 0.25 & 0.5 && 0 & 0.1 & 0.25 & 0.5\\ 
		\midrule 
		$100$ &   &  8.6 & 10.3 & 11.0 & 15.4 &   &  9.8 &  9.9 & 11.3 & 13.3 &   &  9.6 & 11.5 & 11.8 & 17.2 &   & 10.3 & 11.1 & 11.7 & 15.7\\
		$500$ &   & 11.2 & 12.1 & 15.2 & 14.3 &   & 10.8 & 10.9 & 13.9 & 13.0 &   &  9.2 & 11.5 & 14.0 & 14.0 &   & 10.7 & 11.7 & 13.2 & 12.7\\
		$2000$ &   & 11.0 & 12.2 & 15.3 & 12.8 &   & 11.6 & 12.6 & 13.8 & 12.2 &   & 11.5 & 10.2 & 13.1 & 12.7 &   & 11.0 & 10.2 & 13.1 & 12.2\\
		$5000$ &   & 10.7 & 11.2 & 14.2 & 13.1 &   & 10.4 & 11.0 & 11.9 & 12.5 &   & 10.7 & 11.2 & 14.4 & 11.6 &   & 10.9 &  9.7 & 12.0 & 11.9 \\
		\midrule
		\\ 
		\midrule
		& & \multicolumn{9}{c}{(3) Cross-Sectional Heteroskedastic DGP}  & & \multicolumn{9}{c}{(4) AR DGP}  \\
		\cmidrule(lr){3-11} \cmidrule(lr){13-21} 
		&&  \multicolumn{4}{c}{Instrument $\bh =  1$} & & \multicolumn{4}{c}{Instrument $\bh =  (1,X_t)$}   & & \multicolumn{4}{c}{Instrument $\bh =  1$} & & \multicolumn{4}{c}{Instrument $\bh =  (1,X_t)$} \\
		\cmidrule(lr){3-6} \cmidrule(lr){8-11} \cmidrule(lr){13-16} \cmidrule(lr){18-21}
		Skewness & & 0 & 0.1 & 0.25 & 0.5 & & 0 & 0.1 & 0.25 & 0.5  && 0 & 0.1 & 0.25 & 0.5 && 0 & 0.1 & 0.25 & 0.5\\ 
		\midrule
		$100$ &   &  9.5 & 10.5 & 11.2 & 15.2 &   & 10.8 & 11.2 & 11.2 & 13.7 &   &  9.8 & 11.0 & 12.7 & 16.1 &   & 11.7 & 11.8 & 12.2 & 16.3\\
		$500$ &   & 10.7 & 13.2 & 15.5 & 14.6 &   & 10.2 & 11.8 & 14.1 & 13.7 &   & 11.2 & 12.1 & 14.8 & 15.8 &   & 10.5 & 10.0 & 14.6 & 14.3\\
		$2000$ &   & 10.1 & 12.3 & 14.7 & 13.0 &   & 10.4 & 11.8 & 13.5 & 12.3 &   &  9.2 & 10.9 & 13.0 & 13.4 &   &  9.8 & 11.3 & 11.2 & 12.7\\
		$5000$ &   & 10.3 & 12.6 & 13.6 & 12.6 &   &  9.3 & 11.6 & 13.6 & 11.5 &   & 10.5 & 11.8 & 13.8 & 11.2 &   & 10.0 & 11.7 & 12.7 & 10.5 \\
		\bottomrule
	\end{tabular}
	}
	\caption{This table presents the empirical size of the mode rationality test for a Gaussian kernel, varying sample sizes, varying levels of skewness in the residual distribution and different instrument choices for a nominal significance level of $10\%$. The DGPs are described in \eqref{eqn:GeneralDGP} and in footnote \ref{fn:DGPs}.}		
	\label{tab:SizeGaussian10}
\end{table}

\begin{figure}[tbp]
	\centering
	\includegraphics[width=\linewidth]{sim_RR_dgpsupplement_bias_1X.pdf}
	\caption{\textbf{Power study for the ``bias'' simulation.} This figure plots the empirical rejection frequencies against the degrees of misspecification $\kappa$ for different sample sizes in the vertical panels and for the Heteroskedastic and the AR DGPs (described in footnote \ref{fn:DGPs}) in the horizontal panels. The misspecification follows the bias design described in the main text and we use the instrument vector $(1,X_t)$ and a nominal significance level of $5\%$.}
	\label{fig:PowerBiasHetAR}
\end{figure}

\begin{figure}[tbp]
	\centering
	\includegraphics[width=\linewidth]{sim_RR_dgpmain_noise_1X.pdf}
	\caption{\textbf{Power study for the ``noise'' simulation.} This figure plots the empirical rejection frequencies against the degrees of misspecification $\kappa$ for different sample sizes in the vertical panels and for the iid and the AR-GARCH DGPs in the horizontal panels. The misspecification follows the noise design described in footnote \ref{fn:NoiseMisspec} and we use the instrument vector $(1,X_t)$ and a nominal significance level of $5\%$.}
	\label{fig:PowerNoiseHomGARCH}
\end{figure}

\begin{figure}[tbp]
	\centering
	\includegraphics[width=\linewidth]{sim_RR_dgpsupplement_noise_1X.pdf}
	\caption{\textbf{Power study for the ``noise'' simulation.} This figure plots the empirical rejection frequencies against the degrees of misspecification $\kappa$ for different sample sizes in the vertical panels and for the Heteroskedastic and the AR DGPs (described in footnote \ref{fn:DGPs}) in the horizontal panels. The misspecification follows the noise design described in footnote \ref{fn:NoiseMisspec}  and we use the instrument vector $(1,X_t)$ and a nominal significance level of $5\%$.}
	\label{fig:PowerNoiseHetAR}
\end{figure}

\begin{figure}[p]
	\centering
	\includegraphics[width=\linewidth]{Coverage_Hom_skew0}
	\caption{Confidence set coverage rates as in Figure \ref{fig:sim_convex_thetacomb}, but with symmetric data, $\gamma=0$.}
\end{figure}

\begin{figure}[p]
	\centering
	\includegraphics[width=\linewidth]{Coverage_AR-GARCH_skew05}
	\caption{Confidence set coverage rates as in Figure \ref{fig:sim_convex_thetacomb}, but with the AR-GARCH DGP.}
\end{figure}

\begin{figure}[p]
	\centering
	\includegraphics[width=\linewidth]{Coverage_Hom_skew05_n500}
	\caption{Confidence set coverage rates as in Figure \ref{fig:sim_convex_thetacomb}, but with $T=500$.}
\end{figure}

\section{Additional results for the Survey of Consumer Expectations}
\label{sec:SuppSCErest}

\subsection{Clustered Covariance Estimator}
\label{sec:clusterSCE}

Figures \ref{fig:ResultsSCEIncomeC} to \ref{fig:ResultsSCEIncomeRoundC} below are equivalent to Figures \ref{fig:ResultsSCEIncome} to \ref{fig:ResultsSCEIncomeRound} after substituting the covariance estimator of Theorem \ref{thm:GMMWeakIDCovarianceConsistency} with a clustered covariance estimator $\widehat{\Sigma}^{CL}_T$. Let $\phi_{i,t,T}(\theta)$ denote the moment function of individual $i$ at time $t$. $\mathcal{T}$ denotes the number of waves and $n_t$ the number of observations within wave such that $T=\sum_{t=1}^\mathcal{T} n_t$.
\begin{align}
	\widehat{\Sigma}^{CL}_T(\theta) = \frac{1}{T} \sum_{t=1}^\mathcal{T} 
	\left( \sum_{i=1}^{n_t} \widehat \phi_{i,t,T}(\theta) \right) 
	\left( \sum_{i=1}^{n_t} \widehat \phi_{i,t,T}(\theta) \right)^\top
\end{align}

Overall, the results are robust to clustering at the time level. While the mean rejection is less pronounced in Figure \ref{fig:ResultsSCEIncomeC}, the confidence sets are sharper for the subpopulations. In Figure \ref{fig:ResultsSCEIncomeTercileC} to \ref{fig:ResultsSCEIncomeRoundC} mean rationality is consistently rejected for lower income individuals at the 5\% level.

\begin{figure}[tbh]
	\centering
	\includegraphics[width=.6\linewidth]{Rsce_newC.pdf}
	\caption{\textbf{Confidence sets for income survey forecasts.} This figure shows the measures of centrality that rationalize the New York Federal Reserve income survey forecasts with clustered covariance estimator.
	}
	\label{fig:ResultsSCEIncomeC}
\end{figure}

\begin{figure}[tbh]
	\centering
	\includegraphics[width=.9\linewidth]{Rsce_by_terciles_lagincomeC}		\caption{\textbf{Confidence sets for income survey forecasts, stratified by income.} This figure shows the measures of centrality 
		for low-, middle- and high-income respondents with clustered covariance estimator. 
	}
	\label{fig:ResultsSCEIncomeTercileC}
\end{figure}

\begin{figure}[tbh]
	\centering
	\includegraphics[width=.7\linewidth]{Rscelag_2by2_income_agebC}
	\caption{\textbf{Confidence sets for income survey forecasts, stratified by income and age.} This figure shows the measures of centrality that rationalize the New York Federal Reserve income survey forecasts, for low- and high-income respondents who are below or above the age of 40 with clustered covariance estimator.
	}
	\label{fig:ResultsSCEIncomeAgeC}
\end{figure}

\begin{figure}[tbh]
	\centering
	\includegraphics[width=0.7\linewidth]{Rscelag_2by2_income_offerpastC.pdf}
	\caption{\textbf{Confidence sets for income survey forecasts, stratified by income and job offer.} This figure shows the measures of centrality that rationalize the New York Federal Reserve income survey forecasts, for low- and high-income respondents in the private sector or not with clustered covariance estimator.
	}
	\label{fig:ResultsSCEIncomeOfferC}
\end{figure}


\begin{figure}[tbh]
	\centering
	\includegraphics[width=0.7\linewidth]{Rscelag_2by2_income_roundC}
	\caption{\textbf{Confidence sets for income survey forecasts, stratified by income and elicitation round.} This figure shows the measures of centrality that rationalize the New York Federal Reserve income survey forecasts, for low- and high-income respondents in their first and second panel round with clustered covariance estimator. 
	}
	\label{fig:ResultsSCEIncomeRoundC}
\end{figure}

\FloatBarrier

\subsection{Forecast rationality and past forecast accuracy} 
\label{sec:SupplPastFCAccuracy}

In this section we investigate the relationship between past forecast accuracy and the rationality of subsequent forecasts. \cite{dacunto2019cognitive} find absolute forecast errors to be negatively correlated with cognitive ability measures for inflation expectations, which may suggest that respondents with larger past forecast errors are more likely to issue irrational forecasts. In contrast, \citet{vanNVeldkamp2006learning} propose that economic agents learn more about the economy when more signals are available, and a large forecast error may signal to the respondent that they need to pay more attention to their forecast. We address this question using the set of 1,288 respondents to the FRBNY survey for which we have two matched pairs of predictions and realizations. We classify respondents as having ``small'' or ``large'' past forecast errors using the cross-respondent median of absolute percentage forecast errors in the first round as a cutoff. In addition to the (relative) size of their past forecast errors, we continue to stratify by income. 

\begin{figure}[t]
	\centering
	\includegraphics[width=0.7\linewidth]{RSCElag_2by2_income_abserror}
	\caption{\textbf{Confidence sets for income survey forecasts, stratified by forecast error size and income.} See the notes to Figure \ref{fig:ResultsSCEIncomeTercile} for additional details.}
	\label{fig:ResultsSCEIncomeError}
\end{figure}

Figure \ref{fig:ResultsSCEIncomeError} shows that respondents who had a small forecast error in the first round provide second-round forecasts that can be rationalized using many different measures of centrality, and there are few differences across low- and high-income respondents: for low-income respondents, only the mean leads to a rejection, while for high-income respondents the median and measures between the median and the mode are rejected. For respondents with a large forecast error in the first round, the rationality test results are starkly different for low and high income respondents: high-income respondents can be rationalized using any measure of centrality, while low-income respondents can be rationalized only as mean, or near-mean, forecasters, but only at the 95\% confidence level; at the 90\% confidence level rationality is rejected for all centrality measures. An alternative interpretation of Figure \ref{fig:ResultsSCEIncomeError} is that high-income respondents produce forecasts that are rationalizable for almost all measures of centrality, regardless of past forecast accuracy, while the rationality of low-income respondents' forecasts varies greatly with past forecast errors: those with large past forecast errors are not rationalizable as any measure of centrality at the 90\% confidence level, while those with small forecast errors are rationalizable as almost any measure of centrality. 

Figure~\ref{fig:ResultsSCEIncomeErrorC} reports results corresponding to Figure~\ref{fig:ResultsSCEIncomeError}, but based on the clustered standard errors described in Section~\ref{sec:clusterSCE}.

\begin{figure}[tbh]
	\centering
	\includegraphics[width=0.7\linewidth]{Rscelag_2by2_income_abserrorC}
	\caption{\textbf{Confidence sets for income survey forecasts, stratified by income and shock size.} This figure shows the measures of centrality that rationalize the New York Federal Reserve income survey forecasts, for low- and high-income respondents with small and large income shocks with clustered covariance estimator. 
	}
	\label{fig:ResultsSCEIncomeErrorC}
\end{figure}

\subsection{Further results for the EKT tests on subsamples}
\label{sec:SupplEKTSubsamples}

Table \ref{tab:sce_samples2} shows the test results for the subsamples considered in Section \ref{sec:ApplicationSCE} that were omitted in Table \ref{tab:sce_samples}.

\bigskip

\begin{table}[h!]
	\centering
	\scriptsize
			\begin{tabular}{lcccccc}
				\toprule 
				\addlinespace
				& n & Mean & Median & Mode & Quantiles & Expectiles \\ 
				\cline{2-7}   		\addlinespace
				Offer,~~~~~low income & 429 & 0.01 & 0.00 & 0.64 & 0.00 [0.63, 0.71] & 0.01 [0.50, 0.70] \\ 
				No offer, low income & 1471 & 0.11 & 0.00 & 0.95 & 0.00 [0.57, 0.61] & 0.05 [0.42, 0.53] \\ 
				Offer,~~~~~high income & 328 & 0.20 & 0.00 & 0.70 & 0.03 [0.63, 0.71] & 0.08 [0.40, 0.63] \\ 
				No offer, high income & 1560 & 0.11 & 0.10 & 0.96 & 0.11 [0.62, 0.66] & 0.04 [0.43, 0.54] \\ \midrule \addlinespace
				First round,~~~~low income & 1263 & 0.03 & 0.00 & 0.84 & 0.00 [0.58, 0.62] & 0.03 [0.50, 0.60] \\ 
				Second round, low income & 638 & 0.05 & 0.00 & 0.06 & 0.00 [0.59, 0.65] & 0.03 [0.32, 0.51] \\ 
				First round,~~~~high income & 1238 & 0.09 & 0.00 & 0.42 & 0.03 [0.63, 0.68] & 0.03 [0.45, 0.57] \\ 
				Second round, high income & 650 & 0.52 & 0.20 & 0.36 & 0.29 [0.60, 0.66] & 0.28 [0.40, 0.56] \\ \midrule \addlinespace
				Large error, low income & 376 & 0.08 & 0.00 & 0.00 & 0.00 [0.53, 0.61] & 0.03 [0.34, 0.56] \\ 
				Small error, low income & 262 & 0.04 & 0.30 & 0.76 & 0.50 [0.62, 0.72] & 0.46 [0.17, 0.43] \\ 
				Large error, high income & 267 & 0.87 & 0.72 & 0.49 & 0.95 [0.53, 0.63] & 0.73 [0.36, 0.58] \\ 
				Small error, high income & 383 & 0.56 & 0.08 & 0.10 & 0.03 [0.62, 0.70] & 0.29 [0.41, 0.60] \\
				\addlinespace
				\bottomrule 
				\addlinespace
			\end{tabular}
		\caption{\textbf{Summary of $p$-values for rationality tests in different samples.} The first four columns of this table present the sample size and $p$-values from tests of rationality when interpreting the point forecasts as forecasts of the mean, median, or mode. The last two columns present $p$-values from tests of rationality when interpreting the point forecasts as quantiles or expectiles following \cite{EKT2005}, and 90\% confidence intervals for the asymmetry parameter, given in square brackets.}
		\label{tab:sce_samples2}
\end{table}

\section{Additional Empirical Applications}\label{sec:SuppApplications}

In this section, we present two additional economic applications, to ``Greenbook'' forecasts of US GDP growth, and to random walk forecasts of exchange rates. In some of this applications we find evidence against mode rationality but not against mean rationality. This confirms that our proposed mode forecast rationality test has non-trivial power in relevant applications.


\subsection{Greenbook forecasts of U.S.\ GDP growth}\label{sec:SuppGDP}

First, we consider one-quarter-ahead forecasts of U.S.\ GDP growth produced by the staff of the Board of Governors of the Federal Reserve (the so-called ``Greenbook'' forecasts), from 1967Q2 until 2015Q2, a total of $192$ observations.\footnote{Greenbook forecasts are only available to the public after a five-year lag.} These forecasts are prepared in preparation for each meeting of the Federal Open Market Committee, and substantial resources are devoted to them, see e.g.\ \cite{RomerRomer2000}. Greenbook forecasts are available several times each quarter; for this analysis we take the single forecast closest to the middle date in each quarter. Broadly similar results are found when using the first, or last, forecast within each quarter. 

\begin{figure}[t]
	\centering
	\includegraphics[width=.9\linewidth]{Rgdp_all}
	\caption{\textbf{Confidence sets for Greenbook GDP forecasts.} This figure shows the measures of centrality that ``rationalize'' the Federal Reserve Board's ``Greenbook'' forecasts of U.S.\ GDP growth. The three panels use three different measures of GDP growth in a given quarter. Black dots indicate that the measure is inside the Stock-Wright 90\% confidence set, grey dots indicate that the measure is inside the 95\% confidence set, and white dots indicate that rationality for that measure of centrality can be rejected at the 5\% level. All panels use a constant and the forecast as test instruments.}
	\label{fig:ResultsGDPall}
\end{figure}

\Cref{fig:ResultsGDPall} presents the confidence set for the measures of centrality that can be rationalized for these forecasts. As GDP growth is measured with error and official values are often revised, we present results for three different ``vintages'' of the realized value: the first, second and most recent release. For the first and second vintages, we see that only measures of centrality ``close to'' the mean can be rationalized as optimal, while the mode, median and similar measures can all be rejected. This is particularly noteworthy given the known lower power at the mode vertex. Using the most recent vintage for GDP growth, both the mean and median, and centrality measures between and near those, are included in the confidence set. That the Greenbook GDP forecasts are rational when interpreted as mean forecasts, but not when taken as mode or median forecasts, is consistent with the Fed staff using econometric models for these forecasts, as such models almost invariably focus on the mean.\footnote{\cite{ReifschneiderTulip2017} discuss the ambiguity in the specific centrality measure reported in the Greenbook forecasts, but write that they are ``typically viewed as modal forecasts'' by the Federal Reserve staff. Our results suggest that they are better interpreted as mean forecasts.}

\subsection{Random walk forecasts of exchange rates}\label{sec:SuppExchange}

For our final empirical application we revisit the famous result of \cite{MeeseRogoff1983}, that exchange rate movements are approximately unpredictable when evaluated by the squared-error loss function, implying that the lagged exchange rate is an optimal mean forecast. See \cite{Rossi2013} for a more recent survey of the literature on forecasting exchange rates. We use daily data from the European Central Bank's ``Statistical Data Warehouse'' on the USD/EUR, JPY/EUR and AUD/EUR exchange rates, over the period January 2000 to July 2020, a total of $5,265$ trading days. Note that our sample period has no overlap with that of \cite{MeeseRogoff1983}, and so their conclusions about the mean-optimality of the random walk forecast need not hold in our data.

\begin{figure}[tb]
	\centering
	\includegraphics[width=.9\linewidth]{Rexchange_all_innovation}
	\caption{\textbf{Confidence sets for random walk forecasts of exchange rates.} This figure shows the measures of centrality that ``rationalize'' the random walk forecast of daily exchange rates movements. Black dots indicate that the measure is inside the Stock-Wright 90\% confidence set, grey dots indicate that the measure is inside the 95\% confidence set, and white dots indicate that rationality for that measure of centrality can be rejected at the 5\% level. All panels use a constant and the forecast as test instruments.}
	\label{fig:ResultsExchangeRates}
\end{figure}

\Cref{fig:ResultsExchangeRates} presents the results of our tests for rationality, all of which use a constant and the forecast as the instrument set. The middle and right panels reveal that for the JPY/EUR and AUD/EUR exchange rates the lagged exchange rate is not rejected as a mean forecast, while it is rejected when taken as a mode or median forecast. Thus the rationality of the random walk forecast critically depends, for these exchange rates, on whether it is interpreted as a mean, median or mode forecast. 
For the USD/EUR exchange rate we cannot reject rationality with respect to \textit{any} of convex combination of these measures of central tendency, implying that the random walk forecast is consistent with rationality under any of these measures.\footnote{Results for the GBP/EUR and CAD/EUR exchange rates are identical to those for the USD/EUR.} The mean vertex being included in the confidence set for all three exchange rates, indicating no evidence against rationality of the random walk model when interpreted as a mean forecast, is consistent with the conclusion of \cite{MeeseRogoff1983}.


\subsection{Summary of results across empirical applications}\label{sec:SuppAppliSummary}

Table \ref{tab:summary_samples} summarizes the rationality tests for the mean, median, and mode functionals, as well as the \citet{EKT2005} rationality test that allows for optimism and pessimism. For the Federal Reserve Bank of New York (FRBNY) survey data, mode rationality is not rejected, but all other functionals are rejected. For the Greenbook GDP forecasts, the mode is rejected a the 5\% level, but the mean and expectiles close to the mean are consistent with the data. For the exchange rates random walk forecasts, the mean and central quantiles and expectiles are consistent with the data, while mode rationality is rejected for two of the three exchange rates at the 5\% level.

\begin{table}[tb]
	\centering
	\scriptsize
			\begin{tabular}{lcccccc}
				\toprule 
				\addlinespace
				Data & n & Mean & Median & Mode & Quantiles & Expectiles \\ 
				\midrule   
				\addlinespace
				FRBNY income forecasts & 3916 & 0.01 & 0.00 & 0.77 & 0.00 [0.61, 0.64] & 0.01 [0.49, 0.56] \\ 
				Greenbook GDP forecasts & 192 & 0.35 & 0.06 & 0.03 & 0.02 [0.46, 0.57] & 0.20 [0.45, 0.61] \\ 
				Random walk: USD/EUR & 5264 & 0.94 & 0.32 & 0.17 & 0.35 [0.48, 0.51] & 0.86 [0.48, 0.51] \\ 
				\phantom{Random walk: }JPY/EUR & 5264 & 0.91 & 0.02 & 0.04 & 0.43 [0.47, 0.50] & 0.74 [0.48, 0.51] \\ 
				\phantom{Random walk: }AUD/EUR & 5264 & 0.88 & 0.09 & 0.03 & 0.96 [0.51, 0.53] & 0.62 [0.48, 0.52] \\ 
				\addlinespace
				\bottomrule 
				\addlinespace
			\end{tabular}
		\caption{\textbf{Summary of $p$-values for rationality tests in different samples.} The first four columns of this table present the sample size and $p$-values from tests of rationality when interpreting the point forecasts as forecasts of the mean, median, or mode. The last two columns present $p$-values from tests of rationality when interpreting the point forecasts as quantiles or expectiles following \cite{EKT2005}, and 90\% confidence intervals for the asymmetry parameter are given in square brackets. The first row uses the FRBNY consumer survey data introduced in Section \ref{sec:ApplicationSCE} of the main paper. The second row uses Federal Reserve ``Greenbook'' forecasts of US GDP growth discussed in Section \ref{sec:SuppGDP}. The last three rows uses daily data on exchange rates, discussed in Section \ref{sec:SuppExchange}.}
		\label{tab:summary_samples}
\end{table}

\end{document}